\newtheorem{theorem}{Theorem}[section]
\newtheorem{lemma}[theorem]{Lemma}
\newtheorem{proposition}[theorem]{Proposition}
\theoremstyle{definition}
\newtheorem{assumption}[theorem]{Assumption}
\theoremstyle{remark}
\newtheorem{remark}[theorem]{Remark}
\numberwithin{equation}{section}
\newcommand{\reals}{\mathbb R}
\newcommand{\nats}{\mathbb N}
\newcommand{\eps}{\varepsilon}
\newcommand{\such}{\ | \ }
\newcommand{\We}{\mathcal{W}}
\newcommand{\hwe}{\hat{\We}}
\newcommand{\hz}{\hat{Z}}
\newcommand{\hqprob}{\hat{\qprob}}
\newcommand{\hpi}{\hat{\pi}}
\newcommand{\prob}{\mathbb{P}}
\newcommand{\qprob}{\mathbb{Q}}
\newcommand{\esp}{\mathbb{E}}
\newcommand{\espalt}[2]{\esp^{#1}\bra{#2}}
\newcommand{\F}{\mathcal{F}}
\newcommand{\G}{\mathcal{G}}
\newcommand{\M}{\mathcal{M}}
\newcommand{\tM}{\widetilde{\mathcal{M}}}
\newcommand{\filt}{\mathbb{F}}
\newcommand{\filtg}{\mathbb{G}}
\newcommand{\relent}[2]{H\left(#1\such #2\right)}
\newcommand{\condespalt}[3]{\esp^{#1}\bra{#2\big| #3}}
\newcommand{\condprobalt}[3]{\prob^{#1}\left[#2\ \big|\ #3\right]}
\newcommand{\EN}{\mathcal{E}}
\newcommand{\nada}[1]{}
\newcommand{\dfn}{\, := \,}
\newcommand{\bra}[1]{\left[#1\right]}
\newcommand{\cbra}[1]{\left\{#1\right\}}
\newcommand{\dbra}[1]{[\kern-0.15em[ #1 ]\kern-0.15em]}
\newcommand{\dbraco}[1]{[\kern-0.15em[ #1 [\kern-0.15em[}
\newcommand{\ul}[1]{\underline{#1}}
\newcommand{\ol}[1]{\overline{#1}}
\newcommand{\ito}{It\^{o}}
\newcommand{\tr}{\mathsf{Tr}}
\title[Pricing with Default]{Optimal Investment and Pricing in the Presence of Defaults}
\author{Tetsuya Ishikawa}
\address{Morgan Stanley}
\email{testsuya.ishikawa@gmail.com}
\author{Scott Robertson}
\address{Questrom School of Business\\
Boston University\\
Boston, MA 02215}
\email{scottrob@bu.edu}
\thanks{S. Robertson is supported in part by the National Science Foundation
  under grant number  DMS-1613159.}
\date{\today}
\begin{document}

\begin{abstract}
We consider the optimal investment problem when the traded asset may default, causing a jump in its price.  For an investor with constant absolute risk aversion, we compute indifference prices for defaultable bonds, as well as a price for dynamic protection against default.  For the latter problem, our work complements \cite{MR2359373}, where it is implicitly assumed the investor is protected against default. We consider a factor model where the asset's instantaneous return, variance, correlation and default intensity are driven by a time-homogenous diffusion $X$ taking values in an arbitrary region $E\subseteq\reals^d$.  We identify the certainty equivalent with a semi-linear degenerate elliptic partial differential equation with quadratic growth in both function and gradient.  Under a minimal integrability assumption on the market price of risk, we show the certainty equivalent is a classical solution.  In particular, our results cover when $X$ is a one-dimensional affine diffusion and when returns, variances and default intensities are also affine.  Numerical examples highlight the relationship between the factor process and both the indifference price and default insurance.  Lastly, we show the insurance protection price is not the default intensity under the dual optimal measure.
\end{abstract}

\maketitle

\section*{Introduction}\label{S:intro}  The goal of this paper is to solve the optimal investment problem when the underlying asset may default, causing a jump to zero in its price.  Our primary applications are the explicit pricing of defaultable bonds and dynamic default insurance protection; taking into account investor preferences, market incompleteness, and crucially, the ability of the investor to trade in the underlying prior to default.

The issue of default, or more generally, contagion risk, is of primary importance for risk managers during times of financial crisis.  However, the precise relationship between optimal policies, defaultable bond prices, default insurance, and the factors which drive the larger economy is not well known if the investor may trade in the underlying asset.  The current practice of "risk-neutral" pricing for defaultable bonds, while suffering from the well-known  risk neutral measure selection issue, also suffers from a more significant problem in that it is implicity assumed there is separation between the defaulting entity and the traded assets.  In practice, this is not always the case, and the hedging arguments which form the basis for risk neutral pricing are called into question.

To address the above issues, we consider the utility maximization problem for an investor with constant absolute risk aversion, who may invest in a defaultable asset, and who additionally owns a non-traded claim with payoff contingent upon the survival of the asset. As the default time is not predictable for the investor, the market is incomplete.  We assume the "hybrid" model structure of \cite{MR2212266,MR2359373}, where asset dynamics and default intensities are driven by an underlying time homogenous diffusion, representing a set of economic factors.  Here, the certainty equivalent is identified with a degenerate elliptic semi-linear partial differential equation (PDE). Using powerful PDE existence results from \cite{MR1465184, MR0181836}, in conjunction with well-known duality results for exponential utility  (see, for example, \cite{MR1891731, MR1891730, MR2489605}), we show the certainty equivalent is a classical solution to the PDE under minimal assumptions on the model coefficients.  In particular, our results can handle when the factor process takes values in an arbitrary region in $\reals^d$, and when the asset return, volatility and default intensity are unbounded functions of the factor process.

The utility maximization problem taking into account default has been widely studied. Through a complete literature review is too lengthy to give here, we wish to highlight where our work fits in regards to prior studies.  First and foremost, our work fills a gap in the literature by considering the Markovian setting where one may solve the optimal investment problem using PDE techniques. In the PDE setting it is possible to obtain explicit solutions which highlight the dependency of optimal policies and bond prices upon broader economic factors. Though the Markovian case has been treated in some form dating at least back to \cite{MR2359373,lukas2001pricing}, to the best of our knowledge the problem has not been studied when one assumes the investor loses her dollar position in the stock upon default.  Indeed, in contrast to \cite{MR2359373}, we do not assume the investor is fully protected against losses due to default.

The non-Markovian case has been much more thoroughly studied. For dynamics governed by Brownian adapted processes, the problem traces to \cite{MR2489997, MR2827466,MR3378038} and the subsequent extensions in \cite{MR2863640, MR3059266,MR3412158,MR3310349}.  In fact, our setting is closest to \cite[Sections 2,3]{MR2827466} where there is a single risky asset with pre-default dynamics governed by a univariate Brownian motion. Here, under general conditions, the value function is identified as a maximal sub-solution of a backward stochastic differential equation (BSDE), which under certain bounded-ness and/or compactness conditions becomes a true solution.  In fact, as shown in the above extensions, the BSDE setting allows for model generalizations such as multiple defaults, or even credit related events, where "default" refers to any event which causes a jump in the asset's value. However, explicit computations for optimal policies and/or prices are often absent (see \cite[Section 4.3]{MR2863640} for a notable exception). Thus, while on the one hand, by working in a diffusive environment for a single stock, our assumptions are more restrictive than what can be handled using BSDEs, on the other hand, by relying upon PDE techniques we are able to significantly relax the integrability assumptions made upon the model coefficients and provide explicit solutions.  This enables our results to apply to a wide range of models used in industry: in particular our results cover the extended affine models of \cite{dai2000saa,MR1994043,filipovic2001general} (see Section \ref{S:examples}).

Briefly, we now explain our model, assumptions, and main results.  There is a diffusion $X$ satisfying $dX_t = b(X_t)dt + a(X_t)dW_t$, which represent the underlying economic factors.  We allow $X$ to take values in a general region $E\subseteq\reals^d$, and only require the diffusion matrix $A = aa'$ to be locally elliptic on $E$. The riskless asset is normalized to $1$, and a there is a single risky asset $S$ which, prior to default, has instantaneous return $\mu$, variance $\sigma^2$, and correlation $\rho$ dependent upon $X$. Thus, even absent default, the market is incomplete if $\rho'\rho \not\equiv 1$\footnote{Throughout, "'" denotes transposition.}.  Given the factor process $X$, the default time $\delta$ has an intensity $\gamma$ driven by $X$.  The investor may trade in $S$ up to $\delta$, but at $\delta$ she will lose her position in the stock, and may no longer trade. In addition to trading in $S$, the investor owns a non-traded claim $\phi$, which is received contingent upon the survival of the stock. The investor has constant absolute risk aversion and seeks to maximize her terminal wealth over the horizon $[t,T]$.  Assuming $X_t = x$ we write $u(t,x;\phi)$ for the value function and $G(t,x;\phi)$ the certainty equivalent.

In this setting, the HJB equation for $G$ is the semi-linear, degenerate elliptic PDE given in \eqref{E:G_PDE} below.  Since we are working on unbounded domains with local ellipticity and unbounded coefficients, the task of solving the PDE is challenging.   However, it is possible to obtain a solution under a \emph{mild} exponential integrability assumption on the market price of risk $\ell \dfn (\mu-\gamma)/\sigma$.  In particular, if the market absent default is "strictly incomplete" in that $\sup_{x\in E}\rho'\rho(x) < 1$, we require (see Assumption \ref{A:opt_main_ass_inc}) only that for some $\eps > 0$, the function $x\mapsto \espalt{}{e^{\eps\int_0^T \ell^2(X_u)du}\big| X_0 = x}$ is locally bounded on $E$.  This assumption is satisfied by virtually all models used in the literature.  When $\rho'\rho$ is not bounded away from $1$, we require locally uniform exponential integrability of $\ell$ under two additional probability measures: see Assumption \ref{A:opt_main_ass_inc}, but, despite the seemingly complicated formulation, it still holds in many models used in industry.

Theorem \ref{T:main_result} shows the certainty equivalent is a classical solution to the HJB equation \eqref{E:G_PDE}.  As an immediate consequence, we are able to identify the (buyer's, per unit) indifference price for owning $q$ units notional of a defaultable bond: see Proposition \ref{P:indiff_px}.  Furthermore, this indifference price is regular both in time and the factor process starting point.

The main application we consider is determining a fair price for dynamic insurance against default.  Here, our work is motivated by \cite{MR2359373}, where a similar problem was considered but it was implicitly assumed the investor is protected from losses due to default. In other words, if the investor holds $\$1$ in the stock, then upon default, she does not lose her $\$1$.  It is natural to wonder how this is possible, and we take the perspective that she has entered into a contract protecting her from losses, and seek to determine how much this contract should cost.  Of course, in reality, it may not be possible to enter into a contract which dynamically protects against default, but such a contract may be thought of a continuous time limit of opening and closing short term static contracts. To compute the contract price we assume the investor has two alternatives: she either does or does not purchase the protection. If she does purchase the protection, she pays a continuous cash flow rate of $\pi_t f_t$ for a given dollar amount $\pi_t$ in the stock.  Given $X_t = x$, her optimal indirect utility is given by $u_d(t,x)$ in the protected case, and $u(t,x)$ in the unprotected case; where $u_d$ was determined in \cite{MR2359373}.  We thus seek $f$ so that $u = u_d$, and in fact, $f$ is very easy to identify by equating the PDEs for the respectively certainty equivalents.  We make this identification in Section \ref{S:insurance}, and show in Proposition \ref{P:default_insurance} that $f$ never exceeds the default intensity under the dual optimal measure for the unprotected problem, but does exceed the default intensity under the physical measure.

The rest of the paper is organized as follows: we present the model in Section \ref{SS:model} and main results concerning the certainty equivalent in Section \ref{SS:main_results}.  Section \ref{S:examples} provides two examples highlighting the minimality of our assumptions.  Section \ref{S:indiff_px} gives the indifference price for $q$ units notional of a defaultable bond, and Section \ref{S:insurance} computes the dynamics default insurance price.  A numerical application in Section \ref{S:num_app} when $X$ is a CIR process concludes: here we explicitly compute the time 0 indifference price as a function of both the notional and factor process, as well as the default insurance cost as a function of the factor process.  Section \ref{S:proof} contains the lengthy proof of Theorem \ref{T:main_result} and Appendix \ref{S:lemmas} contains a number of supporting lemmas. Lastly, we remark that the proof of Theorem \ref{T:main_result} is lengthy since at the present level of generality, we are not able to automatically verify solutions to the PDE in \eqref{E:G_PDE} are the certainty equivalent $G$, as it is difficult to estimate the gradient of solutions near the boundary of the state space.  As such, we must first localize both the PDE and optimal investment problem. At the local level we are able to show existence and uniqueness of solutions.  We then unwind the localization by enforcing the exponential integrability conditions in Assumptions \ref{A:opt_main_ass_inc}, \ref{A:opt_main_ass_com}.

\section{The Setup}\label{S:setup}

\subsection{The probability space and factor process}\label{SS:model}
Consider a probability space $(\Omega,\G,\prob)$ rich enough to support a $d+1$ dimensional Brownian motion, written $(W,W^0)$, where $W$ is $d$-dimensional and $W^0$ one-dimensional; and a random variable $U\sim U(0,1)$ independent of $(W,W^0)$. We denote by $\filt^W$ (respectively $\filt^{W,W^0}$) the $\prob$ augmentation of the filtration generated by $W$ (resp $(W,W^0)$). Throughout,  $\beta\in (0,1]$ is a fixed constant.

There is a time homogenous diffusion factor process $X$ driven by $W$, which takes values in a region $E\subseteq\reals^d$ satisfying:
\begin{assumption}\label{A:region} $E\subset\reals^d$ is open and connected.  Furthermore, there exists a sequence of sub-regions $\cbra{E_n}_{n\in\nats}$ such that each $E_n$ is open, connected, and bounded with $C^{2,\beta}$ boundary $\partial E_n$. Lastly, $\bar{E}_n\subset E_{n+1}$ and $E = \bigcup_n E_n$.
\end{assumption}
To construct $X$ we first assume:
\begin{assumption}\label{A:factor}
$b \in C^{1,\beta}\left(E; \reals^d\right)$ and $A \in C^{2,\beta}\left(E;\mathbb{S}^d_{++}\right)$, where $\mathbb{S}^d_{++}$ is the space of $d\times d$ dimensional symmetric positive definite matrices.  Furthermore, there is a (necessarily unique: see \cite[Chapter 6]{MR2190038}) solution $\cbra{\prob^x}_{x\in E}$ to the martingale problem on $E$ for the operator
\begin{equation}\label{E:L_def}
L \dfn \frac{1}{2}\tr\left(AD^2\right) + b'\nabla.
\end{equation}
Here, $D^2f(x)\in\mathbb{S}^d$ with $(D^2f)^{ij}(x) = (\partial^2/\partial x_i\partial x_j) f(x) = f_{ij}(x)$, and $\nabla f(x) \in \reals^d$ with $(\nabla f)^{i}(x) = (\partial/\partial_{x_i})f(x) = f_i(x)$.
\end{assumption}

\begin{remark}\label{R:mart_prob} Denote by $a\dfn\sqrt{A}$ the unique symmetric positive definite square root of $A$. In light of Assumption \ref{A:factor}, for each $t\geq 0$ and $x\in E$ there is a unique strong solution to the SDE (see \cite[Chapter IX]{MR1725357})
\begin{equation*}
dX^{t,x}_s = b(X^{t,x}_s)ds + a(X^{t,x}_s)dW_s;\qquad s\geq t;\qquad X^{t,x}_s = x\qquad s\leq t.
\end{equation*}
For expectations with respect to $\prob$ we will write, for example, $\espalt{}{f(X^{t,x}_s)}, \espalt{}{\int_t^s f(X^{t,x}_u)du}, s\geq t$ or $\espalt{}{f(X_s)}, \espalt{}{\int_t^s f(X_u)du}$ if the starting point $(t,x)$ is clear.  Alternatively, for expectations with respect to $\cbra{\prob^x}_{x\in E}$ (which acts on the canonical space $C([0,\infty);E)$) we will write $\espalt{x}{f(X_s)}, \espalt{x}{\int_0^s f(X_u)du}$.
\end{remark}

Next, there is an intensity process $\gamma_s = \gamma(X^{t,x}_s), s\geq t$ for the asset default time.  We assume
\begin{assumption}\label{A:intensity}
$\gamma\in C^{1,\beta}\left(E;(0,\infty)\right)$. As such, for each $n$, $\inf_{x\in E_n}\gamma(x) > 0$.
\end{assumption}

\subsection{The underlying asset and optimal investment problem}\label{SS:opt_invest}

We assume a constant interest rate of $0$. The risky asset $S$ may default, but prior to default has instantaneous return $\mu$, variance $\sigma^2$, and correlation $\rho$ with the Brownian motion $W$.  $\mu,\sigma,\rho$ are functions the factor process $X$ and satisfy
\begin{assumption}\label{A:asset_coeff}
$\mu\in C^{1,\beta}\left(E;\reals\right)$, $\sigma \in C^{2,\beta}\left(E; (0,\infty)\right)$ and $\rho\in C^{2,\beta}\left(E;\reals^d\right)$ with $\sup_{x\in E}\rho'\rho(x)\leq 1$. In particular, for each $n$, $\inf_{x\in E_n}\sigma^2(x) > 0$.
\end{assumption}

We now identify the dynamics of $S$.  Throughout, $T>0$ is a fixed time horizon, $0\leq t\leq T$ is the fixed starting time, and $x\in E$ is the fixed factor starting point.  To alleviate notation, in this section we write $X$ rather that $X^{t,x}$ as the factor process, and in general, suppress $(t,x)$.

The default time for $S$ is given by
\begin{equation}\label{E:delta}
\delta \dfn \inf\cbra{s\geq t\such \int_t^s \gamma(X_u)du  = -\log\left(U\right)}.
\end{equation}
As such, for $s>t$: $\condprobalt{}{\delta > s}{\F^{W,W^0}_\infty} = \condprobalt{}{\delta >s}{\F^{W}_s} = e^{-\int_t^s \gamma(X_u)du}$, so that $\gamma(X)$ is the intensity for $\delta$ given $\filt^{W,W^0}$, and $\delta$ has conditional density $\gamma(X_s)e^{-\int_t^s \gamma(X_u)du}$. In particular, $\delta$ is atom-less. Next, we define the default indicator process
\begin{equation}\label{E:H_def}
H_s \dfn 1_{\delta \leq s};\qquad s \geq t,
\end{equation}
as well as the enlarged filtration $\filtg$ as the $\prob$ augmentation of the filtration generated by the sigma-algebras
\begin{equation}\label{E:G_def}
\G_s = \sigma\left(\F^{W,W^0}_s \bigcup \sigma(H_u, t\leq u\leq s)\right);\qquad s\geq t.
\end{equation}
Lastly, we set
\begin{equation}\label{E:M_def}
M_s \dfn H_s - \int_t^{s\wedge \delta} \gamma(X_u)du;\qquad s\geq t,
\end{equation}
and note that $M$ is a $\filtg$ local martingale.  With the notation in place, we assume $S_t = S_0$ and
\begin{equation}\label{E:S_def}
\begin{split}
\frac{dS_s}{S_{s-}} &= 1_{s\leq \delta}\left(\left(\mu-\gamma\right)(X_s)ds + \left(\sigma\rho\right)(X_s)'dW_s + \left(\sigma\sqrt{1-\rho'\rho}\right)(X_s)dW^0_s\right) - dM_s;\qquad s\geq t.\\
\end{split}
\end{equation}
Thus, $S$ evolves as a geometric Brownian motion driven by $\mu,\sigma,\rho,W,W^0$  until $\delta$ at which point it jumps to $0$ and stays there.  Note that the given regularity assumptions, $S$ is well defined.

To describe the optimal investment problem, we first define the class of equivalent local martingale measures with finite relative entropy.  Set
\begin{equation}\label{E:EQMM}
\M \dfn \cbra{\qprob\such \qprob\sim \prob \textrm{ on }\G_T, S \textrm{ is a } \qprob \textrm{ local martingale}}
\end{equation}
Write $Z^{\qprob}_T \dfn d\qprob/d\prob|_{\G_T}$ as the density of $\qprob$ with respect to $\prob$ for $\qprob\in\M$.  The relative entropy of $\qprob$ with respect to $\prob$ on $\G_T$ is given by
\begin{equation}\label{E:rel_ent}
\relent{\qprob}{\prob} \dfn \espalt{}{Z^{\qprob}_T \log\left(Z^{\qprob}_T\right)},
\end{equation}
and we define
\begin{equation}\label{E:M_rel_ent}
\tM \dfn \cbra{Q\in\M\such \relent{\qprob}{\prob} < \infty}.
\end{equation}
For now we assume $\tM\neq\emptyset$ but later on we will enforce this assumption via a requirement on the model coefficients (c.f. Assumptions \ref{A:opt_main_ass_inc}, \ref{A:opt_main_ass_com} below).

Next, let $\pi$ be a $\filtg$ predictable process such that the stochastic integral $\int_t^\cdot (\pi_u/S_{u-})dS_u$ is well defined. $\pi_u$ represents the dollar amount invested in $S$ at time $u$. Denote by $\We^{\pi,w}$ the resultant (self-financing) wealth process with initial value $w$ at $t$.  For $s\geq t$, $\We^{\pi,w}$ has dynamics
\begin{equation}\label{E:wealth_dynamics}
\begin{split}
d\We^{\pi,w}_s &= \pi_s \frac{dS_s}{S_{s-}};\\
&= \pi_s1_{s\leq \delta}\left(\left(\mu-\gamma\right)(X_s)ds + \left(\sigma\rho\right)(X_s)'dW_s + \left(\sigma\sqrt{1-\rho'\rho}\right)(X_s)dW^0_s\right) - \pi_s dM_s.
\end{split}
\end{equation}
Note that upon default at $\delta$ the investor losses her dollar position $\pi_{\delta}$, and after this point, there is no change in wealth.  We then say $\pi$ is admissible, and write $\pi\in\mathcal{A}$, if $\We^{\pi,w}$ is a $\qprob$ super-martingale for all $\qprob\in\tM$.

In addition to trading in the stock, the investor owns a non-traded claim which is received at the horizon $T$, contingent upon the survival of the stock.  We consider claims of the form $1_{\delta > T}\phi(X_T)$, so that the payoff may depend on the factor process. However, our interest primarily lies when either  $\phi\equiv 0$ (no claim) or $\phi\equiv 1$ (defaultable bond).  We make the common assumption that $\phi$ is bounded, along with a certain regularity requirement:
\begin{assumption}\label{A:phi}
$\phi\in C^{2,\beta}\left(E;\reals\right)$ is bounded with
\begin{equation}\label{E:phi_bounds}
\ul{\phi}\dfn 0 \wedge \inf_{x\in E}\phi(x);\qquad \ol{\phi}\dfn 0\vee \sup_{x\in E}\phi(x).
\end{equation}
\end{assumption}

The investor's preferences are described by the exponential utility function
\begin{equation}\label{E:U}
U(w) \dfn -e^{-\alpha w},
\end{equation}
where $\alpha>0$ is the absolute risk aversion.  The investor trades in $S$ in order to maximize her expected utility of terminal wealth, including her position in the contingent claim.  Thus, for a given wealth $w$ we define (recall: the investment window starts at $t$ and the factor process starts at $x$):
\begin{equation}\label{E:util_funct}
u(t,x;w,\phi) \dfn \sup_{\pi\in\mathcal{A}}\espalt{}{-e^{-\alpha\left(\We^{\pi,w}_T + 1_{\delta>T}\phi(X_T)\right)}};\quad u(t,x;\phi)\dfn u(t,x;0,\phi).
\end{equation}
It is clear for exponential utility that $u(t,x;w,\phi) = e^{-aw}u(t,x;\phi)$.  As such, until Section \ref{S:indiff_px} we consider $w=0$. Lastly, we write $G(t,x;\phi)$ as the certainty equivalent, defined by
\begin{equation}\label{E:CE}
G(t,x;\phi) \dfn -\frac{1}{a}\log\left(-u(t,x;\phi)\right).
\end{equation}

\subsection{Main result}\label{SS:main_results}

Before presenting the main result, we must introduce one last piece of notation.  For $y>0$ define $\theta(y)$ as the unique solution to
\begin{equation}\label{E:theta_def}
\theta(y) e^{\theta(y)} = y.
\end{equation}
$\theta$ is known as the  "Product-Log" (Mathematica) or "Lambert-W" (Matlab) function.  Further properties of $\theta$ are given in Lemma \ref{L:theta_lem} below.  With this notation, for a smooth function $f$ on $[0,T]\times E$, define
\begin{equation}\label{E:theta_f}
\theta_f(s,y) \dfn \theta\left(\frac{\gamma(y)}{\sigma^2(y)} e^{\frac{\mu(y)}{\sigma^2(y)} + \alpha f(s,y) - \frac{\alpha}{\sigma(s)}\nabla f(s,y)'a(y)\rho(y)}\right).
\end{equation}

A heuristic derivation using the dynamic programming principal indicates $G$ from \eqref{E:CE} should solve the semi-linear, degenerate elliptic partial differential equation (PDE) or Hamilton-Jacoby-Bellman (HJB) equation (here, we suppress the function arguments $(t,x)$ and recall $L$ from \eqref{E:L_def})
\begin{equation}\label{E:G_PDE}
\begin{split}
0 & = G_t + LG - \frac{\alpha}{2}\nabla G'A\nabla G + \frac{\sigma^2}{2\alpha}\left(\frac{2\gamma}{\sigma^2} + \left(\frac{\mu}{\sigma^2} - \frac{\alpha}{\sigma}\nabla G'a\rho\right)^2 -\theta_G^2 - 2\theta_G\right);\ t\in (0,T),x\in E\\
\phi & = G(T,\cdot;\phi);\ x\in E
\end{split}
\end{equation}

The dynamic programming principal also suggests that if $G$ solves the above PDE then for any starting point $(t,x)$ the optimal trading strategy is

\begin{equation}\label{E:opt_pi}
\hpi_s = \hpi(s,X_s) = \hpi(s,X^{t,x}_s);\quad \hpi\dfn \frac{1}{\alpha}\left(\frac{\mu}{\sigma^2} - \frac{\alpha}{\sigma}\nabla G(\cdot;\phi)'a\rho - \theta_{G(\cdot;\phi)}\right);\quad t\leq s\leq T.
\end{equation}

\begin{remark}
We do not derive the HJB equation as it is standard.  Furthermore, we will use direct methods to a) yield a solution $G$ to \eqref{E:G_PDE} and b) verify that it is the certainty equivalent, and $\hpi$ the optimal policy.  As such, we do not require the dynamic programming principal to hold.
\end{remark}

We must enforce one more restriction to solve the optimal investment problem.  Here, we split into cases depending on the correlation $\rho$.  The first is when the market absent default is "strictly incomplete" in that $\rho'\rho$ is bounded below $1$, while the second  places no restrictions on $\rho'\rho$.  We make this split because in the former case, the main result goes through under a very mild, and simple to state, condition upon the market price of risk $(\mu-\gamma)/\sigma$.  In the latter case, the condition we must assume is more complicated to formulate, but none-the-less holds in many models of interest: see Section \ref{S:examples}.

First we consider when $\rho'\rho$ is bounded below $1$, and recall the notation of Remark \ref{R:mart_prob}:

\begin{assumption}\label{A:opt_main_ass_inc}
$\ol{\rho}\dfn \sup_{x\in E}\rho'\rho(x) < 1$ and there exists an $\eps>0$ so that for each $n$
\begin{equation*}
\sup_{x\in \ol{E}_n} \espalt{x}{e^{\eps\int_0^T \left(\frac{\mu-\gamma}{\sigma}\right)(X_u)^2 du}} = C(n) < \infty.
\end{equation*}
\end{assumption}

Next, we consider when $\rho'\rho$ is not bounded below $1$:

\begin{assumption}\label{A:opt_main_ass_com}
There are no restrictions on $\ol{\rho}$. However,
\begin{enumerate}[(A)]
\item There is a solution to the martingale problem on $E$ for the operator
\begin{equation*}
L_0 \dfn \frac{1}{2}\tr\left(AD^2\right) + \left(b-\frac{\mu-\gamma}{\sigma}a\rho\right)'\nabla.
\end{equation*}
With $\prob_0 = \cbra{\prob_0^{x}}_{x\in E}$ denoting the resultant solution, there is some $\eps > 0$ so that for each $n$:
\begin{equation*}
\sup_{x\in \ol{E}_n} \espalt{\prob_0^x}{e^{\eps\int_0^T \left(\frac{\mu-\gamma}{\sigma}\right)(X_u)^2 du}} = C(n) < \infty.
\end{equation*}
\item For some $p>1$ there is a solution the martingale problem on $E$ for the operator
\begin{equation*}
L_{p} \dfn \frac{1}{2}\tr\left(A D^2\right) + \left(b+(p-1)\frac{\mu-\gamma}{\sigma}a\rho\right)'\nabla.
\end{equation*}
With $\prob_{p} = \cbra{\prob_{p}^x}_{x\in E}$ denoting the resultant solution, we have for each $n$ that
\begin{equation*}
\sup_{x\in \ol{E}_n} \espalt{\prob_{p}^x}{e^{\frac{1}{2}p(p-1)\int_0^T \left(\frac{\mu-\gamma}{\sigma}\right)(X_u)^2 du}} = C(n) < \infty.
\end{equation*}
\end{enumerate}

\end{assumption}

\begin{remark} Note that each condition in Assumption \ref{A:opt_main_ass_inc} allows for unbounded $(\mu-\gamma)/\sigma$. To gain intuition for why we split into two cases, note that if $\ol{\rho} < 1$ then one may obtain a measure $\qprob_0\in\M$ via the density process
\begin{equation*}
Z_{0,s} = \EN\left(-\int_t^\cdot \left(\frac{\mu-\gamma}{\sigma\sqrt{1-\rho'\rho}}\right)(X_u)dW^0_u\right)_s;\quad t\leq s\leq T.
\end{equation*}
Since $X$ is independent of $W^0$, we see that $\qprob_0\in\tM$ when Assumption \ref{A:opt_main_ass_inc} holds: in fact, this assumption is much stronger than what is actually needed, but is used in the delicate unwinding procedure of Proposition \ref{P:unwind_prob_A} below.  However, when $\ol{\rho} =1$ we are no longer able to "put" the market price of risk to the independent Brownian motion $W^0$: here we instead consider the density process
\begin{equation*}
Z_{0,s} = \EN\left(-\int_t^\cdot \left(\frac{\mu-\gamma}{\sigma}\rho'\right)(X_u)dW_u\right)_s;\quad t\leq s\leq T.
\end{equation*}
Since $X$ is not independent of $W$ we must enforce the integrability requirements of Assumption \ref{A:opt_main_ass_com} to obtain our results.  But, as is shown in Section \ref{S:examples}, Assumption \ref{A:opt_main_ass_inc}, while seemingly complicated, is very easy to check, and holds under mild parameter restrictions for many common models.
\end{remark}

With all the assumptions in place, we state the main result. For a definition of the parabolic H\"{o}lder space $H_{2+\beta,(0,T)\times E,\textrm{loc}}$ see Section \ref{SS:moll_notation} below, but for now we remark that $H_{2+\beta,(0,T)\times E,\textrm{loc}}\subset C^{1,2}((0,T)\times E)$.

\begin{theorem}\label{T:main_result}
Let Assumptions \ref{A:region}, \ref{A:factor}, \ref{A:intensity}, \ref{A:asset_coeff}, \ref{A:phi} and either \ref{A:opt_main_ass_inc} or \ref{A:opt_main_ass_com} hold.  Then, the certainty equivalent $G$ from \eqref{E:CE} is in $H_{2+\beta,(0,T)\times E, \textrm{loc}}$ and solves the PDE in \eqref{E:G_PDE}.  Furthermore, for each $0\leq t\leq T, x\in E$, the optimal trading strategy is $\hpi\in\mathcal{A}$ from \eqref{E:opt_pi} and the process $\hat{Z} = \hat{Z}^{t,x}$ defined by
\begin{equation}\label{E:hat_Z}
\hat{Z}_s \dfn e^{-\alpha\left(\We^{\hpi}_s - G(t,x;\phi) + 1_{\delta > s}G(s,X_s;\phi)\right)};\qquad t\leq s\leq T,
\end{equation}
is the density process of a measure $\hat{\qprob}\in \tM$ which solves the dual problem given in \eqref{E:dual_util_funct} below.
\end{theorem}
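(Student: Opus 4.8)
The plan is to carry out the localization programme announced in the introduction: solve a sequence of localized optimal investment problems on the bounded subdomains $E_n$ of Assumption \ref{A:region}, show each is classically well posed at the local level, and then unwind the localization using the exponential integrability of Assumption \ref{A:opt_main_ass_inc} or \ref{A:opt_main_ass_com}.

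\emph{Step 1: the localized PDE.} For each $n$ let $\tau_n$ be the first exit time of $X$ from $E_n$, and consider the localized problem in which the investor must liquidate her stock position at $\tau_n\wedge\delta\wedge T$; write $G^n$ for the corresponding certainty equivalent. On the bounded cylinder $(0,T)\times E_n$, equation \eqref{E:G_PDE} with terminal datum $\phi$ and a suitable, explicitly computable lateral Dirichlet datum on $(0,T)\times\partial E_n$ (the continuation value of holding cash and the no-longer-tradable claim, a linear Feynman--Kac quantity) is a semilinear parabolic PDE with quadratic growth in $\nabla G$ whose coefficients are bounded, H\"{o}lder, and uniformly elliptic on $\ol{E}_n$ by Assumptions \ref{A:factor}--\ref{A:phi}. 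An a priori sup-norm bound follows from boundedness of $\phi$ and comparison; an a priori gradient bound follows from a Bernstein-type estimate exploiting the favourable sign of $-\tfrac{\alpha}{2}\nabla G'A\nabla G$ together with the sublinear growth of $\theta$ in its argument, so that $\theta_G^2+2\theta_G$ is dominated by $|\nabla G|^2$ up to lower-order terms. The Leray--Schauder continuation method of \cite{MR0181836, MR1465184} then produces $G^n\in H_{2+\beta,(0,T)\times E_n}$, and comparison gives uniqueness.

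\emph{Step 2: local verification.} Using the smoothness of $G^n$, define $\hpi^n$ by \eqref{E:opt_pi} with $G$ replaced by $G^n$ and let $\hat Z^n$ be the process \eqref{E:hat_Z} built from $G^n$ and $\We^{\hpi^n}$, stopped at $\tau_n$. The PDE \eqref{E:G_PDE} and the formula for $\hpi$ were engineered so that an \ito\ formula computation makes the drift of $\hat Z^n$ vanish identically; hence $\hat Z^n$ is a positive local martingale and, being bounded on the bounded domain, a true martingale defining $\hat\qprob^n\in\M$ (for the localized market) with finite relative entropy. For an arbitrary admissible $\pi$ the analogous exponential process is a supermartingale, giving $u^n(t,x)\le -e^{-\alpha G^n(t,x)}$ with equality at $\hpi^n$; this is the classical martingale-optimality / convex-duality argument for exponential utility (cf. \cite{MR1891731, MR1891730, MR2489605}) and simultaneously identifies $\hat\qprob^n$ as the localized dual optimizer.

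\emph{Step 3: unwinding.} The maps $n\mapsto G^n$ are monotone (non-decreasing, since a strategy for the $n$-th problem extends to the $(n+1)$-th by being set to zero after $\tau_n$) and locally uniformly bounded, hence converge pointwise to a limit $\bar G$; interior Schauder estimates upgrade this to convergence in $H_{2+\beta,\mathrm{loc}}$, so $\bar G\in H_{2+\beta,(0,T)\times E,\mathrm{loc}}$ and solves \eqref{E:G_PDE}. It remains to show $\bar G=G$ and that the limiting $\hat Z$, $\hpi$, $\hqprob$ have the stated properties; this is the heart of the proof and the reason Section \ref{S:proof} is long. Because no gradient estimate for $\bar G$ is available up to $\partial E$, one cannot run the \ito\ verification globally and must instead control the limit probabilistically, which is exactly what forces the case split. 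When $\ol{\rho}<1$ the relevant reference measure $\qprob_0$ is obtained by putting the market price of risk on the \emph{independent} Brownian motion $W^0$, leaving the law of $X$ unchanged, and the single bound $\sup_{\ol{E}_n}\espalt{x}{e^{\eps\int_0^T((\mu-\gamma)/\sigma)(X_u)^2du}}<\infty$ of Assumption \ref{A:opt_main_ass_inc} then suffices -- via de la Vall\'ee-Poussin and the delicate localization argument of Proposition \ref{P:unwind_prob_A} -- to prove $\hat Z$ is a uniformly integrable martingale with $\relent{\hqprob}{\prob}<\infty$, that $\hpi\in\A$, and that the duality gap closes. When $\ol{\rho}$ is not bounded below $1$ the change of measure must act on $W$ itself, replacing the generator of $X$ by $L_0$ (resp. $L_p$), and the two bounds of Assumption \ref{A:opt_main_ass_com} are exactly what is needed to reproduce the same conclusions under $\prob_0$ and $\prob_p$. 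Combining the PDE limit with this probabilistic control yields $G=\bar G\in H_{2+\beta,(0,T)\times E,\mathrm{loc}}$ solving \eqref{E:G_PDE}, with $\hpi\in\A$ optimal and $\hqprob\in\tM$ the solution of \eqref{E:dual_util_funct}. The main obstacle is precisely this last unwinding: transferring the clean local verification to the unbounded state space in the absence of boundary gradient control.
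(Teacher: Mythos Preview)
Your overall programme matches the paper's: localize, verify locally, unwind. But two substantive points differ from what the paper actually does, and one of them is a genuine gap.

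First, the localization mechanism. The paper does not merely stop trading at $\tau_n$; it multiplies the asset coefficients \emph{and the default intensity} by the smooth cutoff $\chi_n$ of \eqref{E:moll_facts} (so $\gamma\to\chi_n\gamma$, the default time becomes $\delta^n$ of \eqref{E:delta_n}, and the terminal datum becomes $\chi_n\phi$). This forces the lateral Dirichlet data in \eqref{E:G_PDE_local} to be identically zero on $\partial E_n$ and makes the first-order compatibility condition in the Lieberman existence argument automatic. Your proposed lateral boundary datum is not ``a linear Feynman--Kac quantity'': the continuation certainty equivalent of $1_{\delta>T}\phi(X_T)$ is $-\tfrac1\alpha\log\espalt{}{e^{-\alpha 1_{\delta>T}\phi(X_T)}}$, a nonlinear transform of a Feynman--Kac expectation whose $H_{1+\beta}$ regularity on $(0,T)\times\partial E_n$ you would have to establish separately before invoking \cite{MR1465184}.

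Second, and this is the real gap: you assert that $n\mapsto G^n$ is monotone and ``locally uniformly bounded'' and then pass to interior Schauder estimates. Monotonicity in $n$ does not hold for the paper's localization (because $\delta^n$, $\chi_n\phi$ and the $\chi_n$-weighted dynamics all change with $n$), and even in your setup monotonicity gives only one inequality; you still need an upper envelope. The locally uniform upper bound on $G^n$ is the crux of the entire proof and is where Assumptions \ref{A:opt_main_ass_inc}/\ref{A:opt_main_ass_com} are used \emph{first}: the paper constructs, for each starting point, an explicit $\ol\qprob^n\in\tM^n$ (by loading the market price of risk onto $W^0$ when $\ol\rho<1$, or onto $W$ via the generator $L_0$ otherwise) whose relative entropy is bounded by a constant depending only on the compact $\ol E_k\ni x$. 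The dual inequality \eqref{E:standard_dual_local} then gives $G^n(t,x)\le C(k)$ for all $n\ge k+1$, which is the hypothesis \eqref{E:G_n_loc_unif_ub} of Proposition \ref{P:G_n_bounds_give_soln}. Only after that are the assumptions used a second time, to deform an arbitrary $\qprob\in\tM$ into some $\qprob^n\in\tM^n$ and pass to the limit to close the duality gap (Propositions \ref{P:unwind_prob_A}, \ref{P:unwind_prob_B}). Your Step~3 invokes the integrability assumptions only for this second purpose, so as written the very existence of the limit $\bar G$ is not established.
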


\section{Examples}\label{S:examples}

\subsection{OU factor process with constant default intensity}\label{SS:OU}  Consider when $X$ has dynamics $dX_t = -bX_t dt + dW_t$ ($b\in\reals$), taking values in $E = \reals$,  for which we set $E_n = (-n,n)$. The correlation $\rho \in [-1,1]$ is constant.  For the asset dynamics we take $\sigma(x) = \sigma > 0$ and  $\mu(x) = \sigma\left(\mu_1 + \mu_2x\right), \mu_1,\mu_2\in\reals$. The default intensity is $\gamma(x) = \sigma\gamma, \gamma > 0$. Clearly, Assumptions \ref{A:region}, \ref{A:factor}, \ref{A:intensity} and \ref{A:asset_coeff} hold.  The claim is either $\phi\equiv 0$ or $\phi\equiv 1$ so that Assumption \ref{A:phi} holds as well.  As for the more complicated Assumptions \ref{A:opt_main_ass_inc}, \ref{A:opt_main_ass_com} we note that
\begin{enumerate}[(1)]
\item $((\mu-\gamma)^2/\sigma^2)(x) \leq 2(\mu_1-\gamma_1)^2 + 2\mu_2^2 x^2$
\item Under each of the operators in Assumptions \ref{A:opt_main_ass_inc}, \ref{A:opt_main_ass_com}, $X$ is an OU process.
\item $\esp\bra{e^{\int_0^T kX_t^2dt}} \leq (1/T)\int_0^T \esp\bra{e^{kT X_t^2}}dt$ for $k\in\reals$.
\item For $k>0$ small enough and $X_0=x$, $\esp\bra{e^{kT X_t^2}} \leq C(T)e^{C(T)x^2}$ for $t\leq T$, since $X_t$ is normally distributed with mean and variance continuous in $t$.
\end{enumerate}
Therefore, it is easy to see that both Assumptions \ref{A:opt_main_ass_inc}, \ref{A:opt_main_ass_com}, hold with no additional parameter restrictions.

\subsection{CIR factor process with affine default intensity}\label{SS:CIR} Consider when $X$ has dynamics $dX_t = \kappa(\theta - X_t)dt + \xi\sqrt{X_t}dW_t$.  We assume $\kappa > 0, \kappa\theta \geq \xi^2/2$ so that $E=(0,\infty)$ ($E_n =((1/n),n)$), $b(x) = \kappa(\theta-x)$, $A(x) = \xi^2 x$.  The correlation $\rho\in [-1,1]$ is again constant. For the asset dynamics we take $\sigma(x) = \sigma\sqrt{x}, \sigma > 0$ and $\mu(x) = \sigma(\mu_1 + \mu_2 x), \mu_1,\mu_2\in\reals$.  Lastly, we assume $\gamma(x) = \sigma(\gamma_1 + \gamma_2 x), \gamma_1,\gamma_2 > 0$. Thus, the model falls into the "extended affine" class: see \cite{dai2000saa,MR1994043,filipovic2001general} amongst others.  As with the OU case, Assumptions \ref{A:region}, \ref{A:factor}, \ref{A:intensity} and \ref{A:asset_coeff} clearly hold.  As for Assumptions \ref{A:opt_main_ass_inc}, \ref{A:opt_main_ass_com} we have the following:

\begin{lemma}\label{L:CIR_goes_through}
Assume $\kappa\theta > (1/2)\xi^2$ and additionally, if $\rho = 1$\footnote{The case $\rho = -1$ is similar.}, then $\mu_1-\gamma_1 < (1/\xi^2)(\kappa\theta-(1/2)\xi^2)$ and $\mu_2-\gamma_2 > -\kappa/\xi^2$. Then Assumptions \ref{A:opt_main_ass_inc}, \ref{A:opt_main_ass_com} hold.
\end{lemma}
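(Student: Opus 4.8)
The plan is to reduce both assumptions to two elementary exponential-moment estimates for a CIR-type diffusion, exploiting the fact that the drift perturbations entering Assumption~\ref{A:opt_main_ass_com} preserve the affine structure.

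\emph{Reduction.} Set $m_1 \dfn \mu_1-\gamma_1$ and $m_2 \dfn \mu_2-\gamma_2$. Since $\sigma(x)=\sigma\sqrt x$, $\mu(x)-\gamma(x)=\sigma(m_1+m_2x)$ and $a(x)=\xi\sqrt x$, a direct computation gives
\begin{equation*}
\pare{\frac{\mu-\gamma}{\sigma}}^2(x) = \frac{m_1^2}{x} + 2m_1m_2 + m_2^2 x,\qquad \pare{\frac{\mu-\gamma}{\sigma}}(x)\,a(x)\rho = \rho\xi(m_1+m_2x).
\end{equation*}
Hence $\int_0^T\pare{(\mu-\gamma)/\sigma}^2(X_u)\,du = m_1^2\int_0^T du/X_u + 2m_1m_2T + m_2^2\int_0^T X_u\,du$, and by Cauchy--Schwarz every exponential-integrability requirement in Assumptions~\ref{A:opt_main_ass_inc} and \ref{A:opt_main_ass_com} follows once we establish, for the relevant law of $X$, the existence of $c_*>0$ with $\sup_{x\in\ol E_n}\espalt{x}{\exp(c\int_0^T du/X_u)}<\infty$ and $\sup_{x\in\ol E_n}\espalt{x}{\exp(c\int_0^T X_u\,du)}<\infty$ for all $c<c_*$ and all $n$ (with $\espalt{x}{\cdot}$ replaced by $\espalt{\prob_0^x}{\cdot}$ or $\espalt{\prob_p^x}{\cdot}$ where needed).

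\emph{The perturbed laws are again CIR.} Because $\rho\xi(m_1+m_2x)$ is affine, the operators $L_0$ and $L_p$ of Assumption~\ref{A:opt_main_ass_com} are CIR generators $\tfrac12\xi^2 x\,\partial_{xx}+(\hat b-\hat\kappa x)\partial_x$; for $L_0$ one has $\hat\kappa=\kappa+\rho\xi m_2$, $\hat b=\kappa\theta-\rho\xi m_1$, and for $L_p$, $\hat\kappa=\kappa-(p-1)\rho\xi m_2$, $\hat b=\kappa\theta+(p-1)\rho\xi m_1$. If $\abs\rho<1$ we verify Assumption~\ref{A:opt_main_ass_inc} directly (the extra hypotheses on $m_1,m_2$ are then vacuous), using only that $X$ is CIR under $\prob$ with $\kappa>0$, $\kappa\theta>\xi^2/2$. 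If $\rho=\pm1$, the stated inequalities on $m_1,m_2$ are exactly what is needed to force $\hat\kappa>0$ and $\hat b>\xi^2/2$ for $L_0$, while for $L_p$ the same holds for every $p>1$ sufficiently close to $1$, by continuity from $\kappa>0,\ \kappa\theta>\xi^2/2$. In all cases the martingale problem is well posed, its solution a CIR diffusion valued in $(0,\infty)$ with no explosion, so $\prob_0,\prob_p$ exist and $X$ is CIR-type under each.

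\emph{The two moment bounds.} It remains to prove that for a CIR diffusion $Y$, $dY=(\hat b-\hat\kappa Y)dt+\xi\sqrt Y\,dW$ with $\hat\kappa>0$, $\hat b>\xi^2/2$, there is $c_*>0$ with $\sup_{y\in K}\espalt{y}{\exp(c\int_0^T du/Y_u)}<\infty$ and $\sup_{y\in K}\espalt{y}{\exp(c\int_0^T Y_u\,du)}<\infty$ for $c<c_*$ and every compact $K\subset(0,\infty)$. The second bound is elementary: the Laplace transform of $Y_t$ is explicit, so $\espalt{y}{e^{cY_t}}$ is finite and locally bounded in $(t,y)$ for $t\le T$ and $c$ below an explicit threshold, whence $\espalt{y}{e^{c\int_0^T Y_u du}}\le\tfrac1T\int_0^T\espalt{y}{e^{cTY_u}}\,du$ by Jensen, exactly as in Section~\ref{SS:OU}. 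The first bound is the crux, and is where $\hat b>\xi^2/2$ is essential: via the affine machinery the joint transform $\espalt{y}{\exp(-\lambda\int_0^T du/Y_u-\nu Y_T)}$ solves an explicitly integrable Riccati system (whose solution involves Bessel/confluent-hypergeometric functions), and analytic continuation in $\lambda$ shows it stays finite, and continuous in $y$, for $\lambda$ down to a strictly negative threshold depending only on $\hat b,\xi$ --- near $0$, $Y$ behaves like a squared Bessel process of dimension $4\hat b/\xi^2>2$, which is precisely what makes $\int_0^T du/Y_u$ admit exponential moments; alternatively one may cite known estimates for inverse functionals of CIR/squared-Bessel processes. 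Combining these via Cauchy--Schwarz and choosing $\eps$ (respectively $p-1$) small enough that $2\eps m_1^2$ and $2\eps m_2^2$ (respectively $p(p-1)m_1^2$, $p(p-1)m_2^2$) fall below the thresholds just obtained for the relevant CIR law, and absorbing the deterministic factor from the $2m_1m_2T$ term, yields all the required uniform estimates, giving Assumption~\ref{A:opt_main_ass_inc} when $\abs\rho<1$ and Assumption~\ref{A:opt_main_ass_com} when $\rho=\pm1$. I expect the only genuinely nontrivial point to be this exponential moment bound for $\int_0^T du/Y_u$, uniform over compact sets of starting points; the rest is routine affine bookkeeping.
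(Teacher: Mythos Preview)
Your reduction is correct and matches the paper's: both observe that $((\mu-\gamma)/\sigma)^2(x)=m_1^2/x+2m_1m_2+m_2^2x$ and that the drift changes in $L_0,L_p$ preserve the CIR structure, with the stated hypotheses on $m_1,m_2$ exactly ensuring $\hat\kappa>0$, $\hat b>\xi^2/2$ (and $p>1$ close to $1$). The difference is in how the exponential moment of $\int_0^T(A/X_u+BX_u)\,du$ is established. You split via Cauchy--Schwarz, handle $\int X\,du$ by Jensen and the explicit Laplace transform of $X_t$, and for $\int du/X_u$ appeal to extended-affine Riccati theory or cite known squared-Bessel estimates. The paper instead treats both terms at once with a single Lyapunov function: for the CIR generator $\tilde L$, the test function $f(x)=x^{-C}e^{Dx}$ satisfies $\tilde Lf+(A/x+Bx)f=\lambda f$ for suitable $C,D,\lambda>0$ whenever $A<(\tilde\kappa\tilde\theta-\xi^2/2)/(2\xi^2)$ and $B<\tilde\kappa^2/(2\xi^2)$, and It\^o's formula plus the supermartingale property of the resulting stochastic exponential yields the explicit bound $\espalt{x}{e^{\int_0^T(A/X_u+BX_u)du}}\le (Ce/D)^C x^{-C}e^{Dx+\lambda T}$. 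This is more elementary and fully self-contained, avoiding any appeal to analytic continuation of Riccati solutions or external results on inverse CIR functionals; your route is correct but leaves the $\int du/X_u$ estimate as a black box where the paper gives a two-line proof.
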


\begin{proof}[Proof of Lemma \ref{L:CIR_goes_through}]
We first claim that if $X$ is a general CIR process with dynamics $dX_t = \tilde{\kappa}\left(\tilde{\theta}-X_t\right)dt + \tilde{\xi}\sqrt{X_t}dW_t, X_0 = x$, then provided $\tilde{\kappa} > 0$ and  $\tilde{\kappa}\tilde{\theta} > (1/2)\tilde{\xi}^2$  we have for $0 < A < (\tilde{\kappa}\tilde{\theta}-(1/2)\tilde{\xi}^2)/(2\tilde{\xi}^2)$ and $0 < B < \tilde{\kappa}^2/(2\tilde{\xi}^2)$ that
\begin{equation*}
\espalt{x}{e^{\int_0^T\left(\frac{A}{X_t} + BX_t\right)dt}} \leq \left(\frac{C e}{D}\right)^C x^{-C} e^{Dx + \lambda T},
\end{equation*}
where $C = (\tilde{\kappa}\tilde{\theta} - (1/2)\tilde{\xi}^2)/\tilde{\xi}^2\left(1 - \sqrt{1 - 2\tilde{\xi}^2A/(\tilde{\kappa}\tilde{\theta}-(1/2)\tilde{\xi}^2)^2}\right)$, $D = (\tilde{\kappa}/\tilde{\xi}^2)\left(1-\sqrt{1-2\tilde{\xi}^2B/\tilde{\kappa}^2}\right)$,
and $\lambda = \tilde{\kappa}C + \tilde{\kappa}\tilde{\theta}D - \tilde{\xi^2}{CD}$. Indeed, note that $C,D>0$ and set $f(x) = x^{-C}e^{Dx}$, $\tilde{L}$ as the second order operator associated to $X$.  Then $\tilde{L}f(x) + (A/x + Bx)f = \lambda f$ and by \ito's formula we see that
\begin{equation*}
\espalt{x}{e^{\int_0^T\left(\frac{A}{X_t} + BX_t\right)dt}} = x^{-C} e^{Dx + \lambda T}\espalt{x}{M_T X_T^C e^{-DX_T}} \leq \left(\frac{eC}{D}\right)^C x^{-C} e^{Dx+\lambda T},
\end{equation*}
where $M_\cdot = \EN\left(\int_0^\cdot \left(-C/X_t + D\right)\tilde{\xi}\sqrt{X_t}dW_t\right)_\cdot$, and where the last inequality holds since $C,D>0$ imply $x^{C}e^{-Dx} \leq (eC/D)^C$, and since $M$ is a super-martingale. Next, note that $((\mu-\gamma)^2/\sigma^2)(x) = (\mu_1-\gamma_1)^2/x + 2(\mu_1-\gamma_1)(\mu_2-\gamma_2) + (\mu_2-\gamma_2)^2x$, and that under each of the operators in Assumption \ref{A:opt_main_ass_inc}, \ref{A:opt_main_ass_com} $X$ is still a CIR process.  The parameter restrictions imply $X$ does not explode under each of the operators (for $p>1$ small enough), and a straight-forward calculation yields Assumptions \ref{A:opt_main_ass_inc}, \ref{A:opt_main_ass_com}.
\end{proof}

\section{Indifference Pricing}\label{S:indiff_px}

As an immediate application of Theorem \ref{T:main_result}, we can obtain the indifference price for a defaultable bond, taking into account the investor's ability to trade in the defaultable asset.  Here, we assume the buyer holds notional $q > 0$ of the defaultable bond, and seek to identify the (per unit notional) indifference price.  Formally, we fix $0\leq t\leq T, x\in E$ and define the (buyer's average) indifference $p(t,x;w,q)$ via the balance equation
\begin{equation}\label{E:abstract_indiff_px}
u(t,x;w,0) = u(t,x;w-qp(t,x;w,q),q).
\end{equation}
Above, we have used \eqref{E:util_funct} and written $0$ for $\phi\equiv 0$ and $q$ for $\phi\equiv q$. Qualitatively, \eqref{E:abstract_indiff_px} states that at the price $p(t,x;w,q)$ the buyer of the claim is indifferent between not owning the claim and paying $qp(t,x;w,q)$ to purchase $q$ units of the claim.  For exponential utility it is well known the initial wealth $w$ does not affect the indifference price so we write $p(t,x;w,q) = p(t,x;q)$ where
\begin{equation}\label{E:exp_indiff_px}
p(t,x;q) = -\frac{1}{\alpha q}\log\left(\frac{u(t,x;q)}{u(t,x;0)}\right) = \frac{1}{q}\left(G(t,x;q)-G(t,x;0)\right),
\end{equation}
where $G$ is as in \eqref{E:CE}.  Thus, Theorem \ref{T:main_result} implies

\begin{proposition}\label{P:indiff_px}
Let Assumptions \ref{A:region}, \ref{A:factor}, \ref{A:intensity}, \ref{A:asset_coeff} and either of Assumptions \ref{A:opt_main_ass_inc} or \ref{A:opt_main_ass_com} hold.  Then for a fixed $0\leq t\leq T, x\in E$ the per unit buyer's indifference price for owning $q$ units notional of a defaultable bond is given by $p(t,x;q)$ from \eqref{E:exp_indiff_px}.  For a fixed $q$ the indifference price is in $H_{2+\beta,[0,T]\times E,\textrm{loc}}$.
\end{proposition}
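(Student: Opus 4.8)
The plan is to deduce this directly from Theorem \ref{T:main_result} together with the linear structure of the representation \eqref{E:exp_indiff_px}. First I would observe that the claims relevant here, namely $\phi\equiv 0$ and $\phi\equiv q$, trivially satisfy Assumption \ref{A:phi} (constants lie in $C^{2,\beta}(E;\reals)$ and are bounded), so Theorem \ref{T:main_result} applies under the stated hypotheses both with $\phi\equiv 0$ and with $\phi\equiv q$. This yields that $G(\cdot;0)$ and $G(\cdot;q)$ both belong to $H_{2+\beta,(0,T)\times E,\textrm{loc}}$ and are classical solutions of \eqref{E:G_PDE} with terminal data $0$ and $q$ respectively; in particular each is continuous up to $t=T$, where it attains the (smooth) terminal value.

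Next, \eqref{E:exp_indiff_px} gives $p(t,x;q) = q^{-1}\left(G(t,x;q)-G(t,x;0)\right)$. Since $H_{2+\beta,(0,T)\times E,\textrm{loc}}$ is a vector space (being defined through local parabolic H\"older seminorms on compactly contained cylinders), the difference $p(\cdot;q)$ lies in $H_{2+\beta,(0,T)\times E,\textrm{loc}}$; in particular $p(\cdot;q)\in C^{1,2}((0,T)\times E)$.

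The one point requiring comment — and the only thing resembling an obstacle — is upgrading the time interval from $(0,T)$ to $[0,T]$. Near $t=0$ there is nothing to do, as \eqref{E:G_PDE} assigns no special role to $t=0$; interior parabolic regularity already gives $H_{2+\beta}$-regularity in a neighbourhood of $\{0\}\times \ol{E}_n$ (equivalently, one may rerun the construction of $G$ on a slightly longer horizon). Near $t=T$, fix $n$ and a sub-region $E'$ with $\ol{E}_n\subset E'$ and $\ol{E'}\subset E_{n+1}$. On $[T-\tau,T]\times E'$ the functions $G(\cdot;0), G(\cdot;q)$ and their gradients are bounded — continuity up to $t=T$ holds with $C^{2,\beta}$ terminal data, and local boundedness of the gradient up to the terminal time is part of the output of the proof of Theorem \ref{T:main_result}. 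Hence the quadratic terms in both function and gradient in \eqref{E:G_PDE} can be frozen along the solution, exhibiting $G(\cdot;\phi)$ as a solution of a linear, uniformly parabolic equation on $[T-\tau,T]\times E'$ with $H_\beta$-coefficients and $C^{2,\beta}$ terminal data; the boundary Schauder estimate for the (backward) Cauchy problem (see e.g. \cite{MR1465184, MR0181836}) then gives $G(\cdot;\phi)\in H_{2+\beta,[T-\tau,T]\times\ol{E}_n}$.

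Combining the interior regularity on $(0,T)\times E$ with the two boundary estimates above, and taking the difference $q^{-1}(G(\cdot;q)-G(\cdot;0))$ as before, yields $p(\cdot;q)\in H_{2+\beta,[0,T]\times E,\textrm{loc}}$, completing the argument. The proof is essentially a corollary of Theorem \ref{T:main_result}, the only mildly delicate point being the need for local boundedness of $\nabla G(\cdot;\phi)$ up to $t=T$ in order to linearize the semilinear PDE near the terminal time.
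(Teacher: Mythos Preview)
Your proposal is correct and follows the same approach as the paper, which simply states the proposition as an immediate consequence of Theorem \ref{T:main_result} (applied with $\phi\equiv 0$ and $\phi\equiv q$) and gives no separate proof. Your additional discussion of the $(0,T)$ versus $[0,T]$ regularity is more careful than the paper itself; in fact this regularity up to the time endpoints is already implicit in the proof of Theorem \ref{T:main_result} via Proposition \ref{P:G_n_bounds_give_soln}, where the uniform bounds $|G^n|_{2+\beta,\Omega_k}\leq C(k)$ (and the compatibility condition noted at the end of Proposition \ref{P:local_pde_exist}) yield $H_{2+\beta}$-regularity on the closed cylinders $[0,T]\times\ol{E}_k$, so your bootstrapping argument near $t=T$, while correct, is not strictly needed.
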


\section{The Pricing of Dynamic Default Insurance}\label{S:insurance} We now consider the problem of finding a fair price for dynamic insurance against default, which takes into account market incompleteness as well as investor preferences.  The results of this section are meant to complement those in \cite{MR2359373}, where a similar optimal investment problem is considered, with the difference being that upon default of $S$, it is implicitly assumed the investor is protected \footnote{Indeed, in \cite[Section 2.1]{MR2359373} the author's state "For simplicity, we assume she (the investor) receives full pre-default market value on her stock holdings on liquidation, though one might extend to consider some loss, or jump downwards in the stock price at default time."}. Specifically, if the investor owns a dollar amount $\pi_\delta$ at $\delta$, she does not lose this dollar amount. Rather, the default only indicates that further trading is not possible.

Presently, we describe the method for obtaining a fair price of dynamic default insurance.  In the model of Section \ref{SS:opt_invest} with no defaultable bond, we assume the investor has two potential strategies:
\begin{enumerate}[(1)]
\item The investor chooses a strategy $\pi\in\mathcal{A}$ as above, and does not purchase default insurance, so that upon default at $\delta$ she will lose her dollar position $\pi_\delta$ in the stock. Thus, for a fixed $0\leq t\leq T, x\in E$ her indirect utility is given by $u(t,x;0)$ from \eqref{E:util_funct}, with certainty equivalent $G(t,x;0)$ from \eqref{E:CE}.
\item The investor again chooses a strategy $\pi^d$ in the stock. However, she enters into a contract which offers dynamic protection against default.  Prior to default, this contract costs a continuous payment rate of $\pi^d f$ where $f$ is the (to-be-determined) per-dollar cost of the protection.  At default, the investor does not lose her dollar-position $\pi^d_\delta$, but is no longer able to trade in the underlying asset.  In this scenario, for a given strategy $\pi^d$, the wealth process evolves according to (see \eqref{E:wealth_dynamics} for comparison)
    \begin{equation}\label{E:wealth_dynamics_alt}
    d\We^{\pi,d}_{s} = \pi^d_s1_{\delta\leq s}\left(\left((\mu-\gamma)(X_s)ds - f_s\right)ds + (\sigma\rho)(X_s)'dW_s + (\sigma\sqrt{1-\rho'\rho})(X_s)dW^0_s\right).
    \end{equation}
    Here, $\pi^d\in\mathcal{A}^d$ is admissible if $\pi^d$ is $\filtg$ predictable, such that the relevant integrals are well defined, and such that $\We^{\pi,d}$ is a $\qprob^d$ super-martingale for all $\qprob^d\in \tM_d$, the class of equivalent local martingale measures with finite relative entropy in this market.  She then seeks to solve the optimal investment problem
    \begin{equation}\label{E:util_funct_al}
    u^d(t,x;0) \dfn \sup_{\pi\in\mathcal{A}^d}\espalt{}{-e^{-\alpha \We^{\pi,d}_T}},
    \end{equation}
    and we denote by $G^d(t,x;0) \dfn -(1/\alpha)u^d(t,x;0)$ the certainty equivalent.
\end{enumerate}

The goal is to find $f$ so that the investor is indifferent between the two alternatives; i.e. $u(t,x;0) = u_d(t,x;0)$ for all starting points $(t,x)$.  In light of Theorem \ref{T:main_result}, we can immediately identify $f$ by equating the PDEs for $G,G^d$.  Indeed, following the arguments in \cite[Proposition 2.1]{MR2359373}, under the a-priori assumption that $f_t = f(t,X_t)$ is functionally determined, the HJB equation for $G^d$ on $[0,T]\times E$ is
\begin{equation}\label{E:G_d_PDE}
\begin{split}
0 & = G^d_t + LG^d - \frac{\alpha}{2}(\nabla G^d)'A\nabla G^d + \frac{\sigma^2}{2\alpha}\left(\frac{2\gamma}{\sigma^2}\left(1-e^{\alpha G^d}\right) + \left(\frac{\mu-f}{\sigma^2} - \frac{\alpha}{\sigma}(\nabla G^d)'a\rho\right)^2\right);\\
0 &= G^d(T,\dot); x\in E.
\end{split}
\end{equation}
The corresponding optimal trading strategy for $t\leq s\leq T$ is
\begin{equation}\label{E:opt_pi_d}
\hpi^d_s  = \hpi^d(s,X_s) = \hpi^d(s,X^{t,x}_s);\quad \hpi^d = \frac{1}{\alpha}\left(\frac{\mu-f}{\sigma^2} - \frac{\alpha}{\sigma}(\nabla G^d)'a\rho\right).
\end{equation}

Now, consider when Theorem \ref{T:main_result} holds, so that $G=G(\cdot;0)$ solves \eqref{E:G_PDE} with $\phi \equiv 0$. Upon comparison with \eqref{E:G_d_PDE} we see that $G$ will also solve \eqref{E:G_d_PDE} provided
\begin{equation*}
\frac{2\gamma}{\sigma^2}\left(1-e^{\alpha G}\right) + \left(\frac{\mu-f}{\sigma^2} - \frac{\alpha}{\sigma}\nabla G'a\rho\right)^2 = \frac{2\gamma}{\sigma^2} + \left(\frac{\mu}{\sigma^2} - \frac{\alpha}{\sigma}\nabla G'a\rho\right)^2 - \theta_G^2 - 2\theta_G.
\end{equation*}
This has two (real) solutions
\begin{equation*}
f_{\pm} = \sigma^2\left(\frac{\mu}{\sigma^2} - \frac{\alpha}{\sigma}\nabla G'a\rho \pm \sqrt{\left(\frac{\mu}{\sigma^2} - \frac{\alpha}{\sigma}\nabla G'a\rho\right)^2 - \left(\theta_G^2 + 2\theta_G - \frac{2\gamma}{\sigma^2}e^{\alpha G}\right)}\right),
\end{equation*}
as part $(3)$ of Lemma \ref{L:theta_lem} below at $x=((\mu/\sigma^2)-(\alpha/\sigma)\nabla G'a\rho)$ and $y=(\gamma/\sigma^2)e^{\alpha G}$ shows the term within the square root is non-negative.  Since the investor is paying for the default insurance, we take the perspective that she seeks the lowest possible cost and hence set the price of default insurance as
\begin{equation}\label{E:default_insurance}
f \dfn  \sigma^2\left(\frac{\mu}{\sigma^2} - \frac{\alpha}{\sigma}\nabla G'a\rho  - \sqrt{\left(\frac{\mu}{\sigma^2} - \frac{\alpha}{\sigma}\nabla G'a\rho\right)^2 - \left(\theta_G^2 + 2\theta_G - \frac{2\gamma}{\sigma^2}e^{\alpha G}\right)}\right)\footnote{ A second argument in favor of using $f_{-}$ follows by inspecting the optimal $\hpi^d$ in the protected market from \eqref{E:opt_pi_d}.  Here
\begin{equation*}
\alpha \hpi^d_{\pm} = \mp \sqrt{\left(\frac{\mu}{\sigma^2} - \frac{\alpha}{\sigma}\nabla G'a\rho\right)^2 - \left(\theta_G^2 + 2\theta_G - \frac{2\gamma}{\sigma^2}e^{\alpha G}\right)}.
\end{equation*}
Thus, if $f_+ >0$ is used, the investor is short the stock while also paying for default insurance, a highly questionable situation.}.
\end{equation}
For $f$ defined above, we have the following proposition:
\begin{proposition}\label{P:default_insurance}
Let Assumptions \ref{A:region}, \ref{A:factor}, \ref{A:asset_coeff}, \ref{A:intensity} and either of \eqref{A:opt_main_ass_inc}, \ref{A:opt_main_ass_com} hold.  Let $G$ from \eqref{E:CE} be the certainty equivalent to \eqref{E:util_funct} which solves the PDE in \eqref{E:G_PDE}.  Define $f$ as in \eqref{E:default_insurance} and consider the optimal trading strategy function $\hpi$ from \eqref{E:opt_pi}.  Then we have the following facts regarding $f$:
\begin{enumerate}[(1)]
\item $f\leq \gamma e^{\alpha(G+\hpi)} = \gamma^{\hat{\qprob}}$, the default intensity function of $\delta$ under $\hat{\qprob}$ from Theorem \ref{T:main_result}. Here, equality is achieved if and only if $\hat{\pi} = 0$.
\item $f$ has the same sign (including $0$) as $\gamma e^{\alpha(2\hpi + G)}/(2\sigma^2) + e^{\alpha\hpi} - 1$.  In particular, $f>0$ when $\hpi > 0$.
\end{enumerate}
\end{proposition}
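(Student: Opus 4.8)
The plan is to reduce both claims to elementary properties of $t\mapsto e^{t}$, after recording the algebraic relations that tie $f$, $\hpi$ and $\theta_G$ together. Abbreviate $x\dfn(\mu/\sigma^2)-(\alpha/\sigma)\nabla G'a\rho$ and $y\dfn(\gamma/\sigma^2)e^{\alpha G}$, all functions evaluated along $(s,X_s)$ (the same notation as is used just before \eqref{E:default_insurance}). By \eqref{E:theta_f} with $\phi\equiv 0$ the argument of $\theta$ defining $\theta_G$ is $ye^{x}$, so $\theta_G e^{\theta_G}=ye^{x}$, while \eqref{E:opt_pi} gives $\alpha\hpi=x-\theta_G$. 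Eliminating $x$ yields the two identities used throughout:
\begin{equation*}
\theta_G=ye^{\alpha\hpi}>0,\qquad \gamma e^{\alpha(G+\hpi)}=\sigma^2 ye^{\alpha\hpi}=\sigma^2\theta_G .
\end{equation*}
Positivity is part $(1)$ of Lemma \ref{L:theta_lem}, and the identification $\gamma e^{\alpha(G+\hpi)}=\gamma^{\hat\qprob}$ comes from the form of $\hat Z$ in \eqref{E:hat_Z}: $\hat Z$ has a single jump, at $\delta$, where $\We^{\hpi}$ drops by $\hpi$ and $1_{\delta>\cdot}G(\cdot,X_\cdot)$ drops by $G(\delta,X_\delta)$, so $\hat Z_\delta/\hat Z_{\delta-}=e^{\alpha(\hpi+G)(\delta,X_\delta)}$; since the $\filtg$-compensator of $H$ under $\prob$ is $\int\gamma(X_u)du$ up to $\delta$, the standard point-process change-of-measure formula gives $\gamma^{\hat\qprob}=\gamma e^{\alpha(\hpi+G)}$ (this is implicit in the proof of Theorem \ref{T:main_result}). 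With $f_\pm$ as in Section \ref{S:insurance}, set $R\dfn \theta_G^2+2\theta_G-2y=\theta_G\big(\theta_G+2-2e^{-\alpha\hpi}\big)$; recall from Lemma \ref{L:theta_lem}$(3)$ that the discriminant $x^2-R\ge 0$, and note $f_+f_-=\sigma^4 R$ with $f=f_-$.

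For part $(1)$, since $\gamma e^{\alpha(G+\hpi)}=\sigma^2\theta_G$ the assertion is $f_-\le\sigma^2\theta_G$, i.e. $\alpha\hpi=x-\theta_G\le\sqrt{x^2-R}$. If $\alpha\hpi\le 0$ this is trivial. If $\alpha\hpi>0$, squaring (both sides nonnegative) and substituting $x=\theta_G+\alpha\hpi$ and $R=\theta_G^2+2\theta_G-2y$, the $\theta_G^2$ and $\alpha^2\hpi^2$ terms cancel and the inequality collapses to $y\ge\theta_G(1-\alpha\hpi)$; dividing by $\theta_G>0$ and using $y=\theta_G e^{-\alpha\hpi}$ this is exactly $e^{-\alpha\hpi}\ge 1-\alpha\hpi$, which holds for every real argument with equality only at $0$. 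Hence $f\le\gamma^{\hat\qprob}$. Equality forces $\alpha\hpi=0$: in the case $\alpha\hpi\le 0$ equality needs $\alpha\hpi=\sqrt{x^2-R}\ge 0$, hence $\alpha\hpi=0$; in the case $\alpha\hpi>0$ it would require $e^{-\alpha\hpi}=1-\alpha\hpi$, impossible. Conversely when $\hpi=0$ one has $x=\theta_G=y$, $R=y^2$, so $x^2-R=0$ and $f=\sigma^2 x=\sigma^2\theta_G=\gamma^{\hat\qprob}$.

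For part $(2)$, from $f_+f_-=\sigma^4 R$ it suffices to show $f_+>0$ always, for then $\operatorname{sign}(f)=\operatorname{sign}(f_-)=\operatorname{sign}(R)$. If $x>0$ then $f_+/\sigma^2=x+\sqrt{x^2-R}>0$. If $x\le 0$ then $\alpha\hpi=x-\theta_G\le-\theta_G<0$, so $e^{-\alpha\hpi}\ge e^{\theta_G}>1+\theta_G$, whence $\theta_G+2-2e^{-\alpha\hpi}<-\theta_G<0$ and $R<0$; therefore $f_+/\sigma^2=x+\sqrt{x^2+|R|}>x+|x|\ge 0$. Thus $\operatorname{sign}(f)=\operatorname{sign}(R)=\operatorname{sign}\big(\theta_G+2-2e^{-\alpha\hpi}\big)=\operatorname{sign}\big(e^{\alpha\hpi}(\theta_G+2)-2\big)$, using $\theta_G>0$ and $e^{\alpha\hpi}>0$. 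Finally $e^{\alpha\hpi}(\theta_G+2)-2=e^{\alpha\hpi}\theta_G+2(e^{\alpha\hpi}-1)$ and $e^{\alpha\hpi}\theta_G=ye^{2\alpha\hpi}=(\gamma/\sigma^2)e^{\alpha(2\hpi+G)}$, so $f$ has the same sign (including $0$) as $(\gamma/(2\sigma^2))e^{\alpha(2\hpi+G)}+e^{\alpha\hpi}-1$; when $\hpi>0$ both summands are strictly positive (recall $\theta_G>0$), so $f>0$.

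The computations are routine; the only points requiring care are the Lambert--W identity $\theta_G=ye^{\alpha\hpi}$ that powers everything, the (standard, but worth stating) identification $\gamma^{\hat\qprob}=\gamma e^{\alpha(\hpi+G)}$, and, in part $(2)$, the observation that $x\le 0$ forces $R<0$, which is what keeps $f_+$ strictly positive and hence lets the sign of $f$ be read off from $R$.
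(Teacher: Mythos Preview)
Your proof is correct and shares the paper's core substitution: setting $x=(\mu/\sigma^2)-(\alpha/\sigma)\nabla G'a\rho$, $y=(\gamma/\sigma^2)e^{\alpha G}$, and extracting the identity $\theta_G=ye^{\alpha\hpi}$. For part~(1) your argument and the paper's are essentially the same calculation; the paper rewrites $f/\sigma^2=h(\alpha\hpi,y)$ with $h(l,y)=l+ye^l-\sqrt{l^2+2y(le^l+1-e^l)}$ and asserts $l-\sqrt{\,\cdots\,}\le 0$ by ``straightforward analysis,'' which unwinds to precisely your inequality $e^{-\alpha\hpi}\ge 1-\alpha\hpi$. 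You are simply more explicit, and you handle the equality case more carefully than the paper does.

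Part~(2) is where the approaches genuinely diverge. The paper locates the unique zero $l_0(y)=\log\big((\sqrt{1+2y}-1)/y\big)$ of $h(\cdot,y)$, argues $h$ changes sign only there, and then checks that $l-l_0$ has the same sign as $(y/2)e^{2l}+e^l-1$. Your route via the Vieta-type identity $f_+f_-=\sigma^4 R$ together with $f_+>0$ is cleaner: it bypasses the explicit root-finding and monotonicity analysis entirely, reducing the sign question directly to $\operatorname{sign}(R)=\operatorname{sign}\big(\theta_G+2-2e^{-\alpha\hpi}\big)$, which you then rewrite into the stated form. The only place requiring care is showing $f_+>0$ when $x\le 0$, and your observation that $x\le 0$ forces $R<0$ (via $e^{-\alpha\hpi}\ge e^{\theta_G}>1+\theta_G$) handles this neatly. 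One cosmetic point: the positivity $\theta_G>0$ is immediate from the definition $\theta(y)e^{\theta(y)}=y>0$ rather than from part~(1) of Lemma~\ref{L:theta_lem}, which records the ODE for $\theta$.
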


\begin{proof}[Proof of Proposition \ref{P:default_insurance}]
With $x=(\mu/\sigma^2)-(\alpha/\sigma)\nabla G'a\rho$ and $y=(\gamma/\sigma^2)e^{\alpha G}$ we see that $f/\sigma^2 = x - \sqrt{x^2 + 2y - \theta(ye^x)^2 - 2\theta(ye^x)}$.  Now, \eqref{E:opt_pi} implies $x = \alpha\hpi + \theta(ye^x)$, and writing $z=\theta(ye^x)$, we have by definition of $\theta$ that $ze^z = ye^x = y e^{\alpha\hpi + z}$. Thus, $z = ye^{\alpha\hpi}$ and $x=\alpha\hpi + ye^{\alpha\hpi}$.  Plugging in for $x,z$ in \eqref{E:default_insurance} gives
\begin{equation}\label{E:default_insurance_alt}
\frac{f}{\sigma^2} = h\left(\alpha\hpi,\frac{\gamma}{\sigma^2}e^{\alpha G}\right) ;\quad h(l,y) \dfn l + ye^{l} - \sqrt{l^2 + 2y(le^{l} +1 - e^l)}.
\end{equation}
Straightforward analysis shows for $l\in\reals, y>0$ that $l - \sqrt{l^2 + 2y(le^{l} +1 - e^l)}\leq 0$ and so $f/\sigma^2 \leq  (\gamma/\sigma^2)e^{\alpha(G+\hpi)}$ and $(1)$ follows since $\gamma e^{\alpha(G+\hpi)}$ is the default intensity of $\delta$ under $\hat{\qprob}$ as can be seen from Lemma \ref{L:BR_Girsanov} and \eqref{E:Z_form} below.  As for $(2)$, analysis shows there is a unique $l_0 = l_0(y) < 0$ so that $h(l_0,y) = 0$.  For $l>l_0$, $h(l,y)>0$ and for $l<l_0$, $h(l,y) < 0$.  Thus, $f$ has the same sign as $l-l_0$. Simple calculation shows both $l_0 = \log((\sqrt{1+2y}-1)/y)$ and that $l-l_0$ has the same sign as $(y/2)e^{2l} + e^l -1$. Plugging in $l=\alpha\hpi$ and $y=(\gamma/\sigma^2)e^{\alpha G}$ gives $(2)$.

\end{proof}

\subsubsection{Discussion}\label{SSS:default_insurance_discussion}

To interpret Proposition \ref{P:default_insurance}, consider when $\hpi > 0$. Here, the investor will lose money upon default, and hence has motivation to pay for default insurance so that $f>0$. Interestingly however, there is a threshold $\hpi_0<0$, where as long as $\hpi_0 < \hpi < 0$ the profit from immediate default does not outweigh the value of default protection if future positions rise: thus the investor is willing to pay for insurance.  Below the level $\hpi_0$ the profit from default is so significant that the investor will need to be compensated for giving this up and hence $f<0$.

A full calculation of $f = f(t,x)$ requires knowledge of $G,\nabla G$, and can be difficult to calculate, especially when $X$ is multi-dimensional.  However, for $t\approx T$ we can use \eqref{E:default_insurance_alt} to provide a simple relationship between $f$ and $\hpi$. Indeed, since $G(T,\cdot) = 0$, if we substitute $G(t,\cdot)\approx 0$ into \eqref{E:default_insurance_alt} we see for $t\approx T$ that
\begin{equation}\label{E:default_insurance_approx}
\frac{f}{\sigma^2}  = \alpha\hpi + \frac{\gamma}{\sigma^2}e^{\alpha\hpi} - \sqrt{(\alpha\hpi)^2 + \frac{2\gamma}{\sigma^2}\left(\alpha\hpi e^{\alpha \hpi} + 1 - e^{\alpha\hpi}\right)}.
\end{equation}
Consider the models of Sections \ref{SS:OU}, \ref{SS:CIR}, where in the latter model we enforce the affine condition by setting $\mu_1=\gamma_1=0$. In each case, $\gamma/\sigma^2$ is constant, and we can view $f/\sigma^2$ solely as a function of $\alpha\hpi$. Figure \ref{F:Ins_Fig} shows the relationship between $f/\sigma^2$ and $\alpha\hpi$ in this setting, along with the theoretical upper bound from $(1)$ of Proposition \ref{P:default_insurance}.

\begin{figure}
\epsfig{file=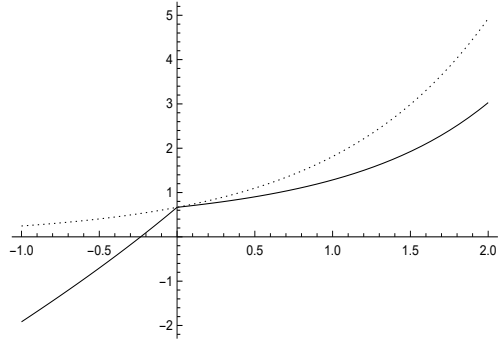,height=4.5cm,width=6.5cm}
\caption{$f/\sigma^2$ as a function of $\alpha\hpi$ for $t\approx T$ and constant $\gamma/\sigma^2= 2/3$.  The solid line is the value of $f/\sigma^2$ from \eqref{E:default_insurance_approx} while the dashed line is the theoretical upper bound from part (1) of Proposition \ref{P:default_insurance}.} \label{F:Ins_Fig}
\end{figure}

\section{A Numerical Application}\label{S:num_app}  Consider the CIR model of Section \ref{SS:CIR} but restricted to the affine, rather than extended affine, case.  In particular we assume $\mu_1 = \gamma_1 = 0$.  We also assume $\rho\neq\pm 1$ so that Theorem \ref{T:main_result} holds under the assumption $\kappa\theta > (1/2)\xi^2$.  The goal is to compute the indifference price of \eqref{E:exp_indiff_px} for $q$ units notional of a defaultable bond, as well as per unit fair price of dynamic default insurance given in \eqref{E:default_insurance}.  To obtain these values, we solve the PDE \eqref{E:G_PDE} for $\phi\equiv 0$ and $\phi\equiv q$ using the semi-linear PDE solver "pdepe" from Matlab, which can solve such PDEs in one spatial dimension\footnote{Our code is available upon request by contacting the second author at "scottrob@bu.edu".}.  For $X$ and $\rho$ we use the same parameter values as in \cite[Section 6]{MR2352905}. For the instantaneous return and variance we assume at the long run mean $\theta$ of $X$ a variance of $\sigma^2\theta = .09$ and mean of $\sigma\mu_2\theta = .10$. For the default intensity we assume at the long run mean of $\theta$ we have a probability of default within one year is $3\%$, which corresponds to $e^{-\sigma\gamma\theta} = .97$.  Lastly, we use a risk aversion of $3$. This yields
\begin{equation}\label{E:CIR_num_param}
\begin{split}
\mu_1 &=0;\ \mu_2 = 1.3608;\ \sigma = 1.2247;\ \rho = -0.53;\\
\kappa &= 0.25;\ \theta = 0.06;\ \xi = 0.1;\\
\gamma_1 &= 0;\ \gamma_2 = 0.4145;\ \alpha = 3.
\end{split}
\end{equation}
The conditions of Lemma \ref{L:CIR_goes_through} are satisfied so Theorem \ref{T:main_result}, as well as Propositions \ref{P:indiff_px}, \ref{P:default_insurance} go through.  At time $t=0$, Figure \ref{F:def_ins_indiff_px} shows the dynamic default insurance price $f(0,x)$ and the indifference price $p(0,x;q)$ as a function of the underlying state variable $x$.  For the dynamic insurance price we have also plotted the upper bound $\gamma^{\hat{\qprob}}$ from Proposition \ref{P:default_insurance} and the default intensity under $\prob$, which in this case is also the intensity under the minimal martingale measure: see \cite{MR1108430,MR1353193}. Here, we see that the insurance price is increasing with the state variable, as intuition would dictate since the physical measure default intensity is linear in $x$.

For the indifference prices, we have plotted the price as a function of the notional size $q$ and state variable $x$ for $q=1,3,5,10$. As expected the price decreases in $q$ and increases in $x$ (recall: this is the price at which we would buy the defaultable bond).

\begin{figure}
\epsfig{file=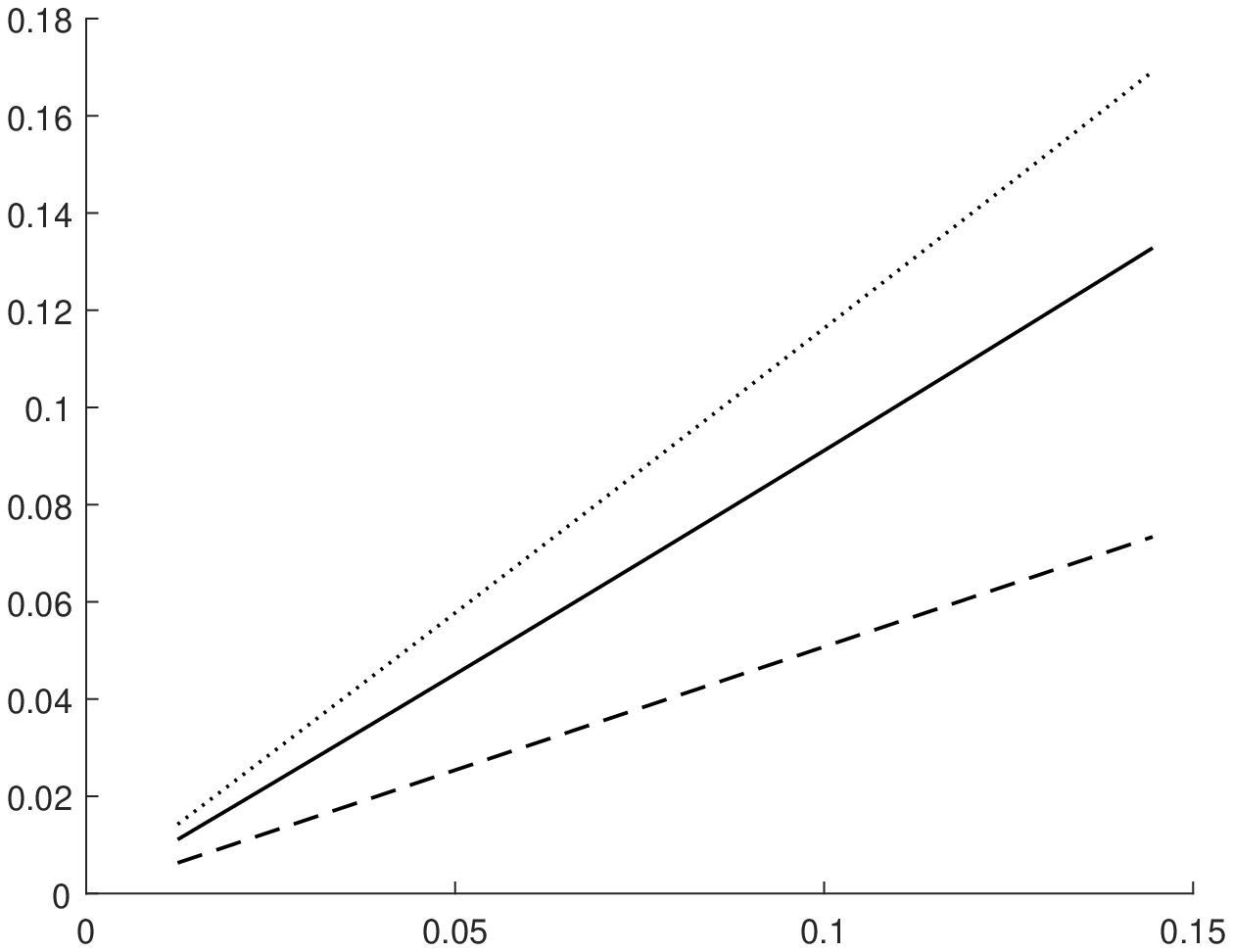,height=4.5cm,width=6.5cm}\qquad \epsfig{file=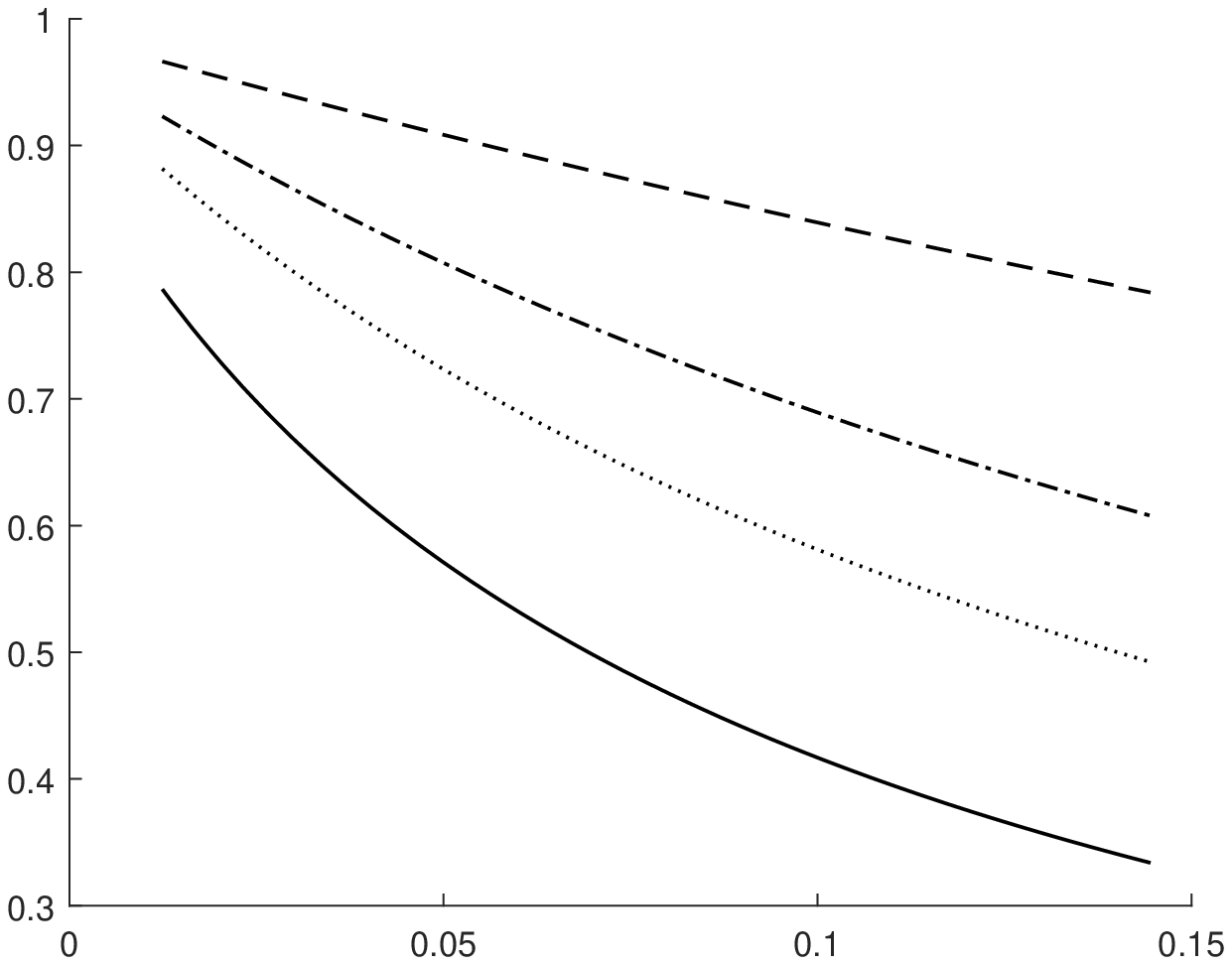,height=4.5cm,width=6.5cm}
\caption{The dynamic default insurance price (left plot) and defaultable bond indifference price (right)  at time $0$ as a function of the underlying state variable. The parameters are as in \eqref{E:CIR_num_param} and the horizon is $T=1$ year. In the insurance plot, the solid line corresponds to $f$ from \eqref{E:default_insurance}. The dotted line is the upper bound $\gamma^{\hat{\qprob}}$ from Proposition \ref{P:default_insurance}, and the dashed line is the default intensity $\gamma$ under $\prob$ (which is also the intensity under the minimal martingale measure). In the indifference price plot, the prices are given for $q=1$ (dash), $q=3$ (dot-dash), $q=5$ (dot) and $q=10$ (solid) notional. The state variable ranges from the 2.5\% to 97.5\% quantiles of the invariant distribution for $X$.} \label{F:def_ins_indiff_px}
\end{figure}

\section{Proof of Theorem \ref{T:main_result}}\label{S:proof}

The proof of Theorem \ref{T:main_result} is lengthy due to the facts that we are working in general domains, not assuming uniform ellipticity of the factor process, and not restricting the model coefficients to be bounded.   The outline for proving Theorem \ref{T:main_result} consists of the following steps:

\begin{enumerate}[(1)]
\item Identify a local version of the PDE in \eqref{E:G_PDE}, where we are able to use the theory of semi-linear elliptic equations with quadratic growth in both the solution and its gradient, to prove existence of solutions with bounded gradient.
\item Associate to the local PDE a local optimal investment problem, and show that any solution to the local PDE is the certainty equivalent of the local optimal investment problem, and hence solutions are unique.
\item Show that solutions to the local PDE are locally uniformly bounded, and hence there exists a solution to the full PDE.
\item Show that this solution to the full PDE is the certainty equivalent to the full optimal investment problem, and identify the optimal trading strategy and equivalent local martingale measure.
\end{enumerate}

Throughout this section, as the function $\phi$ is fixed we write $G = G(\cdot;\phi)$ for the PDE in \eqref{E:G_PDE}. Also, Assumptions \ref{A:region}, \ref{A:factor}, \ref{A:intensity}, \ref{A:asset_coeff} and \ref{A:phi} are in force.  As for Assumptions \ref{A:opt_main_ass_inc}, \ref{A:opt_main_ass_com}, they are only required for Propositions \ref{P:unwind_prob_A}, \ref{P:unwind_prob_B} below and their use will be made explicit.

\subsection{Mollifiers and function spaces}\label{SS:moll_notation}

We first introduce mollifiers in order to define the local PDE. To this end, we claim that without loss of generality we can re-index the sub-domains $E_n$ of Assumption \ref{A:region} so that for each $n$ there exists a function $\chi_n \in C^{\infty}(E;\reals)$ such that
\begin{equation}\label{E:moll_facts}
0\leq \chi_n \leq 1;\quad \chi_n = 1 \textrm{ on } E_{n-1};\quad \chi_n \textrm{ is supported on } \ol{E}_n;\quad \chi_n > 0 \textrm{ on }E_n.
\end{equation}
Indeed, set $\eps_n = (1/3)\textrm{dist}(\partial E_n, \partial E_{n-1})$. Then set $E'_n = \cbra{x\in E \such \textrm{dist}(x,\ol{E}_{n-1}) < \eps_n}$ and $\tilde{E}_n = \cbra{x \in E \such \textrm{dist}(x,\ol{E}_{n-1}) < 2\eps_n}$.  Lastly, set $\chi_n = \eta_{\eps_n} \star 1_{E'_n}$ where $\eta_{\eps}$ is the standard mollifier (see \cite[Appendix C]{MR1625845}). Then $\chi_n\in C^{\infty}(E;\reals)$ with $0\leq \chi_n \leq 1$.  Also, we have $\chi_n = 1$ on $E_{n-1}\supseteq \tilde{E}_{n-1}$, $\chi_n$ is supported in $\ol{\tilde{E}}_n$ and $\chi_n > 0$ on $\tilde{E}_n$.  So, \eqref{E:moll_facts} is satisfied on $\tilde{E}_n$ which also satisfies Assumption \ref{A:region}.  Thus, we can relabel $E_n = \tilde{E}_n$.

Next, we introduce the function spaces where our PDE solutions will lie. We use the notation of \cite[Chapters 1.3,4.1]{MR1465184}.  Namely, let $Q$ denote a region in $\reals^{1+m}$ and write $\ol{X} = (t,x), t\in\reals, x\in \reals^m$ for a typical point in $Q$. The parabolic distance between $\ol{X_1},\ol{X_2}$ is given by $\rho(\ol{X}_1,\ol{X_2}) \dfn \max\cbra{|t_1-t_2|^{1/2}, |x_1-x_2|}$, and for a given function $f$ on $Q$ and $\beta\in (0,1]$ we define
\begin{equation}\label{E:parabolic_holder_norms}
\begin{split}
|f|_{0,Q} &\dfn \sup_{\ol{X}\in Q}|f(\ol{X})|;\\
[f]_{\beta,Q} &\dfn \sup_{\ol{X}_1,\ol{X}_2\in Q,\ol{X}_1\neq \ol{X_2}} \frac{|f(\ol{X}_1)-f(\ol{X}_2))|}{\rho(\ol{X}_1,\ol{X}_2)^\beta};\\
\langle f \rangle_{\beta,Q} &\dfn \sup_{\ol{X}_1\in Q}\ \ \sup_{t_2\neq t_1, (t_2,x_1)\in Q} \frac{|f(t_1,x)-f(t_2,x)|}{|t_1-t_2|^{\beta/2}}.
\end{split}
\end{equation}
Next, for a given non-negative integer $k$, define the $|\cdot|_{k+\beta}$ norm via
\begin{equation}\label{E:parabolic_holder_k_beta}
|f|_{k+\beta, Q} \dfn \sum_{|\alpha| + 2j = k} [D^\alpha_x D^j_t f]_{\beta,Q} + \sum_{|\alpha| + 2j = k-1} \langle D^\alpha_x D^j_t f\rangle_{1+\beta,Q} + \sum_{|\alpha| + 2j\leq k} |D^\alpha_x D^j_t f|_{0,Q}.
\end{equation}
Here $a$ is a multi-index with norm $|a|$. $D^a_x$ is the derivative with respect to $x$ determined by $a$ and $D^j_t$ is the $j^{th}$ derivative with respect to $t$.  The parabolic H\"{o}lder space $H_{k+\beta,Q}$ is the Banach space of all functions $f$ on $Q$ with $|f|_{k+\beta,Q} < \infty$.  When $Q$ takes the special form $Q=(0,T)\times E$, the space $H_{k+\beta,Q,\textrm{loc}}$ is the set of functions $f$ which are in $H_{k+\beta,(0,T)\times K}$ for all bounded regions $K$ with $\ol{K}\subset E$. We pay special attention to when $k=2$ so that
\begin{equation}\label{E:parabolic_holder_2_beta}
|f|_{2+\beta, Q} = \sum_{|\alpha| + 2j = 2} [D^\alpha_x D^j_t f]_{\beta,Q} + \sum_{|\alpha| = 1} \langle D^\alpha_x D^j_t f\rangle_{1+\beta,Q} + \sum_{|\alpha| + 2j\leq 2} |D^\alpha_x D^j_t f|_{0,Q},
\end{equation}
and $k=0$ where $|f|_{0+\beta, Q}  = |f|_{\beta,Q} = [f]_{\beta,Q} + |f|_{0,Q}$.

\subsection{The local PDE and optimal investment problem}\label{SS:local}

With the mollifiers $\chi_n$ in place, consider a localized version of \eqref{E:G_PDE} on $(0,T)\times E_n$:
\begin{equation}\label{E:G_PDE_local}
\begin{split}
0 & = G^n_t + LG^n - \frac{\alpha}{2}\nabla (G^n)'A\nabla G^n + \frac{\sigma^2\chi_n}{2\alpha}\left(\frac{2\gamma}{\sigma^2} + \left(\frac{\mu}{\sigma^2} - \frac{\alpha}{\sigma}\nabla (G^n)'a\rho\right)^2 -\theta_{G^n}^2 - 2\theta_{G^n}\right);\\
\chi_n\phi & = G^n(T,\cdot).
\end{split}
\end{equation}

To conform to the notation in \cite{MR1465184} we reverse time, defining $v^n(t,x)\dfn G^n(T-t,x)$, $\Omega_n \dfn (0,T)\times E_n$, and $\Gamma_n$ as the parabolic boundary of $\Omega_n$.  Additionally, we write the PDE for $v^n$ as
\begin{equation}\label{E:lieb_PDE_form}
\begin{split}
0 &= Pv^n \dfn -v^n_t + \frac{1}{2}\tr\left(AD^2 v^n\right) + \check{a}^n(x,v^n, \nabla v^n);\ (t,x) \in \Omega_n\\
\chi_n\phi & = v^n;\ (t,x)\in \Gamma_n.
\end{split}
\end{equation}
In the above we have defined $(\chi_n\phi)(t,x) \dfn \chi_n(x)\phi(x)$ for $(t,x)\in \Gamma_n$.\footnote{Comparing with \cite[Equation (12.2)]{MR1465184} we have $a^{ij} = (1/2)A^{ij}$ and $a=\check{a}^n$.} Also, for $x\in E_n,z\in\reals,p\in\reals^d$ we have set (recall $\theta$ from \eqref{E:theta_def}):
\begin{equation}\label{E:lieb_notation}
\begin{split}
\theta(x,z,p) & \dfn \theta\left(\frac{\gamma(x)}{\sigma^2(x)}e^{\frac{\mu(x)}{\sigma^2(x)} +\alpha z - \frac{\alpha}{\sigma(x)}p'a(x)\rho(x)}\right);\\
\check{a}^n(x,z,p) & \dfn b(x)'p - \frac{\alpha}{2}p'A(x) p + \frac{\sigma^2(x)\chi_n(x)}{2\alpha}\left(\frac{2\gamma(x)}{\sigma^2(x)} + \left(\frac{\mu(x)}{\sigma^2(x)} - \frac{\alpha}{\sigma(x)}p'a(x)\rho(x)\right)^2\right)\\
&\quad - \frac{\sigma^2(x)\chi_n(x)}{2\alpha}\left(\theta\left(x,z,p\right)^2 - 2\theta\left(x,z,p\right)\right).
\end{split}
\end{equation}

With this notation, the following is an almost immediate consequence of \cite[Theorem 12.16]{MR1465184}.

\begin{proposition}\label{P:local_pde_exist}

There exists a solution $v^n\in H_{2+\beta,\Omega_n}$ to \eqref{E:lieb_PDE_form} and hence a solution $G^n\in H_{2+\beta,\Omega_n}$ to \eqref{E:G_PDE_local}.

\end{proposition}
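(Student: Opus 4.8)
The plan is to verify that the localized semilinear PDE \eqref{E:lieb_PDE_form} satisfies the structure conditions of \cite[Theorem 12.16]{MR1465184}, which gives existence of an $H_{2+\beta}$ solution on a bounded cylinder with $C^{2,\beta}$ parabolic boundary. The boundary cylinder $\Omega_n = (0,T)\times E_n$ is admissible by Assumption \ref{A:region} (each $E_n$ is bounded with $C^{2,\beta}$ boundary), and the boundary datum $\chi_n\phi$ is $C^{2,\beta}$ and compatible at the corner by Assumption \ref{A:phi} and \eqref{E:moll_facts}, since $\chi_n$ is supported in $\ol E_n$. The principal part $\tfrac12\tr(AD^2 v^n)$ is uniformly elliptic on $\ol E_n$ (by Assumptions \ref{A:factor} and \ref{A:asset_coeff}, $A$ is $C^{2,\beta}$ and positive definite, hence bounded below on the compact set $\ol E_n$) with H\"older-continuous coefficients, so all hypotheses on the leading operator hold trivially on the bounded domain.

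The substantive point is to check the growth and sign conditions on the lower-order term $\check a^n(x,z,p)$ from \eqref{E:lieb_notation}. First I would record that on $\ol E_n$ all the model coefficients $b,A,\sigma,\mu,\gamma,\rho,\chi_n$ and their relevant derivatives are bounded (continuity on a compact set), and $\sigma^2,\gamma$ are bounded below. Next I would analyze the $\theta$-dependent terms: by the properties of the Lambert-W function collected in Lemma \ref{L:theta_lem}, $\theta(\cdot)\geq 0$ is increasing and satisfies a sub-logarithmic bound $\theta(y)\le \log(1+y)$ or similar, so $\theta(x,z,p)$ grows at most like a linear function of $z$ plus a linear function of $|p|$ (the exponent in \eqref{E:lieb_notation} being affine in $z$ and in $p$). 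Consequently $\theta^2 + 2\theta$ contributes at most quadratic growth in $(z,p)$. Combined with the explicit quadratic terms, the upshot is the natural-growth (Bernstein) structure: $\check a^n(x,z,p)$ is dominated by $c_1(|z|)(1+|p|^2)$ with $c_1$ locally bounded — precisely the quadratic-gradient-growth condition under which \cite[Theorem 12.16]{MR1465184} applies, provided one also has an a priori sup-norm bound on the solution.

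That a priori $L^\infty$ bound on $v^n$ (equivalently on $G^n$) is the main obstacle, and I would obtain it via a barrier / maximum-principle argument. Taking $\phi$-bounds $\ul\phi,\ol\phi$ from \eqref{E:phi_bounds}, one seeks sub- and super-solutions of the form $w^{\pm}(t,x) = c^{\pm} \mp \lambda(T-t)$ (or with a mild spatial weight if needed), using that on the support of $\chi_n$ the zeroth-order-in-$z$ part of $\check a^n$ is bounded and that the sign of the $-\tfrac\alpha2 p'Ap$ term together with the sub-logarithmic control of $\theta$ keeps the nonlinearity from blowing up for large positive or negative $z$; here the key structural feature is that $\theta_{G^n}$ enters with a favorable sign for large $G^n$. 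Once the sup-norm bound $|v^n|_{0,\Omega_n}\le C(n)$ is in hand, the quadratic-growth existence theorem yields $v^n\in H_{2+\beta,\Omega_n}$, and reversing time gives $G^n\in H_{2+\beta,\Omega_n}$ solving \eqref{E:G_PDE_local}. I expect the barrier construction and the precise verification of Liebermann's hypotheses (H1)--(H5) for $\check a^n$ to be the only nontrivial steps; everything else is bookkeeping on the compact set $\ol E_n$.
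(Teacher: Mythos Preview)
Your proposal is correct and follows the same overall route as the paper: verify the structure hypotheses of \cite[Theorem 12.16]{MR1465184} on the bounded cylinder $\Omega_n$, with uniform ellipticity and H\"older regularity coming for free on the compact set $\ol E_n$, and the quadratic-in-$p$ growth of $\check a^n$ following from the logarithmic asymptotics of $\theta$ (this is the content of Lemma \ref{L:lieb_a_Op}).

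The one place where the paper is more efficient is the a priori $L^\infty$ bound. Rather than constructing explicit sub/super-solutions of the form $c^\pm \mp \lambda(T-t)$, the paper isolates the structural condition $z\,\check a^n(x,z,0)\le C(n)(1+z^2)$ (Lemma \ref{L:lieb_a_bounds}) and feeds it directly into the maximum principle \cite[Theorem 9.5]{MR1465184} applied to $v^n$ and $-v^n$; this gives $\sup_{\Omega_n}|v^n|\le e^{(1+C(n))T}(\sup_{\ol E_n}|\phi|+C(n)^{1/2})$ in one line. Your barrier approach would also work, but note that at an interior extremum $\nabla v^n=0$, so the $-\tfrac\alpha2 p'Ap$ term plays no role there; what matters is precisely the sign/growth of $\check a^n(x,z,0)$ in $z$, which is exactly what Lemma \ref{L:lieb_a_bounds} records. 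Finally, the paper makes explicit that the first-order compatibility condition $P(\chi_n\phi)=0$ on $\{0\}\times\partial E_n$ holds because $\chi_n$ vanishes on $\partial E_n$, which is what upgrades the solution from $H^{-1-\beta}_{2+\beta}$ to $H_{2+\beta,\Omega_n}$; you mention compatibility but should state this reason.
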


\begin{proof}[Proof of Proposition \ref{P:local_pde_exist}]

The result will follow from \cite[Theorem 12.16]{MR1465184} once the requisite hypothesis are met.  To this end, the fact that $\partial E_n$ is $C^{1,\beta}$ implies that $\Gamma_n \in H_{1+\beta}$.  Furthermore, that $\phi\in C^{2,\beta}(E;\reals)$ and $\chi_n\in C^{\infty}(E;\reals)$ implies  $\chi_n\phi\in H_{1+\beta,\Omega_n}$.  Next, Lemma \ref{L:lieb_a_bounds} below implies \cite[Equation (12.26)]{MR1465184} and the local ellipticity of $A$ yields \cite[Equation (12.25a)]{MR1465184}. $A\in C^{2,\beta}(E;\mathbb{S}^d_{++})$ implies $A^{ij} \in H_{1,K}$ (i.e. is Lipschitz) for all bounded subsets $K$ of $E_n$.  The coefficient regularity also implies  $\check{a}^n(x,z,p)$ is in $H_{\beta,K}$ for all bounded subsets $K$ of $E_n\times \reals\times\reals^d$.

Regarding \cite[Equation (12.27)]{MR1465184}, first note \cite[Equation (12.26)]{MR1465184}, which follows from Lemma \ref{L:lieb_a_bounds}, implies an a-priori maximum principal for solutions to the PDE in \eqref{E:lieb_PDE_form}. Indeed, using Lemma \ref{L:lieb_a_bounds} in conjunction with \cite[Theorem 9.5]{MR1465184} applied to both $v^n,-v^n$ it follows that any solution $v^n$ to \eqref{E:lieb_PDE_form} satisfies the bound $\sup_{\Omega_n} |v^n| \leq e^{(1+C(n))T}\left(\sup_{\ol{E}_n}|\phi| + C(n)^{1/2}\right)$. Thus, any solution $v^n$ lies in a compact interval $[z_1,z_2]$ of $\reals$.  In view of this, Lemma \ref{L:lieb_a_Op} below shows that $|\check{a}^n|$ is on the order of $|p|^2$.  It is also clear that $A^{ij}_z = A^{ij}_p = 0$ and that $A^{ij}_x$ is independent of $p$ and hence on the order of $1$.  Therefore the resultant hypotheses of \cite[Theorem 12.16]{MR1465184} are met and there exists a solution $v^n$ to the PDE in \eqref{E:lieb_PDE_form}.   Now, \cite[Theorem 12.16]{MR1465184} yields a solution $v^n \in H^{-1-\beta}_{2+\beta}$ which is defined in \cite[Chapter 4]{MR1465184}.  However, since in \eqref{E:lieb_PDE_form} it follows that $\chi_n\phi$ (the boundary term) satisfies the \emph{compatibility condition of the first order}:
\begin{equation*}
P\chi_n\phi = 0;\qquad \cbra{0}\times \partial E_n,
\end{equation*}
since $\chi_n$ vanishes on $\partial E_n$.  Thus, as remarked at the end of \cite[Theorem 12.16]{MR1465184} (c.f. \cite[Theorems 5.14,8.2]{MR1465184}), it follows that $v^n \in H_{2+\beta,\Omega_n}$.  This gives the result.

\end{proof}

\begin{remark}\label{R:gradient_bound}

We record that $|G^{n}|_{2+\beta,\Omega_n} < \infty$ implies $\sup_{0\leq t\leq T, x\in \ol{E}_n} |\nabla G^n(t,x)| \leq C(n)$ for some constant $C(n)$. This will be used in the proof of Proposition \ref{P:opt_invest_local} below.

\end{remark}

We next show $G^n$ is the certainty equivalent for a localized version of the optimal investment problem in Section \ref{SS:opt_invest}.  Indeed, fix $0\leq t\leq T, x\in E$ and consider $n$ large enough so that $x\in E_n$.  The factor process $X = X^{t,x}$ is the same as in Remark \ref{R:mart_prob}.   Next, define a localized default time $\delta^n$ via
\begin{equation}\label{E:delta_n}
\delta^n \dfn \inf\cbra{s\geq t \such \int_t^s \left(\chi_n\gamma\right)(X_u)du = -\log(U)},
\end{equation}
and the localized default indicator process and its compensator via $H^n_s \dfn 1_{\delta^n \leq s}$ and $M^n_s \dfn H^n_s - \int_t^{s\wedge\delta^n}\left(\chi_n\gamma\right)(X_u)du$ for $s\geq t$. Set $\filtg^n$ in a similar manner to \eqref{E:G_def} and note that $M^n$ is a $\filtg^n$ martingale (c.f. \cite[Theorems 1.51, 4.48]{MR2273672}. The asset price process $S^n$ defined by $S^n_t = S_0$ and for $s\geq t$:
\begin{equation}\label{E:S_n_dynamics}
\frac{dS^n_s}{S^n_{s-}}= 1_{s\leq \delta^n}\left(\left(\chi_n(\mu-\gamma)\right)(X_u) du + \left(\chi_n\sigma\rho\right)(X_u)'dW_u + \left(\sqrt{\chi_n}\sigma\sqrt{1-\chi_n\rho'\rho}\right)(X_u)dW^0_u\right) - dM^n_s.
\end{equation}
Having localized the default intensity and asset dynamics, we next localize the optimal investment problem to stop when $X$ exits $E_n$.  To this end define
\begin{equation}\label{E:tau_n}
\tau^n \dfn \inf\cbra{s\geq t \such X_s \in \partial E_n}.
\end{equation}
Set $\M^n$ as the class of equivalent local martingale measures on $\G^n_{T\wedge\tau^n}$ (these are the measures so that $S^n$ stopped at $\tau^n$ is a $\filtg^n$ local martingale on $[t,T]$) and let $\tM^n$ denote the subset with finite relative entropy with respect to $\prob$ on $\G^n_{T\wedge\tau^n}$. Denote by $\mathcal{A}^n$ the class of $\filtg^n$ predictable trading strategies $\pi^n$ so that $\pi^n_\cdot 1_{\cdot\leq \tau^n}/S^n_{\cdot-}$ is $S^n$ integrable on $[t,T]$ and such that the resultant wealth process $\We^{\pi^n}$ is a $\qprob^n$ super-martingale for all $\qprob^n\in \tM^n$. Here, $\We^{\pi^n}$ has dynamics for $s\geq t$:
\begin{equation}\label{E:wealth_dynamics_n}
\begin{split}
d\We^{\pi^n,w}_s &= 1_{s\leq \tau^n} \pi^n_s\frac{dS^n_s}{S^n_{s-}};\\
&= 1_{s\leq\tau^n\wedge\delta^n}\pi^n_s\left(\left(\chi_n(\mu-\gamma)\right)(X_u) du + \left(\chi_n\sigma\rho\right)(X_u)'dW_u + \left(\sqrt{\chi_n}\sigma\sqrt{1-\chi_n\rho'\rho}(X_u)\right)dW^0_u\right)\\
&\qquad - 1_{s\leq \tau^n}\pi^n_s dM^n_s,
\end{split}
\end{equation}

For the starting point $(t,x)$, the localized optimal investment problem is
\begin{equation}\label{E:util_funct_n}
u^n(t,x) \dfn \sup_{\pi^n\in\mathcal{A}^n} \espalt{}{-e^{-\alpha\left(\We^{\pi^n}_{T\wedge\tau^n} + 1_{\delta^n > T\wedge\tau^n} \left(\chi_n\phi\right)(X_{T\wedge\tau^n})\right)}}.
\end{equation}

Before identifying the certainty equivalent $-(1/\alpha)\log(-u^n(t,x))$ with $G^n(t,x)$ from Proposition \ref{P:local_pde_exist}, we present two supplementary results concerning the structure of the local martingale measures in this setting. The first result is given in \cite[Proposition 5.3.1]{bielecki2004credit}:

\begin{lemma}\label{L:BR_Girsanov}
For any measure $\qprob^n\sim \prob$ on $\G^n_{T\wedge\tau^n}$ there is the representation
\begin{equation*}
\frac{d\qprob}{d\prob}\bigg|_{\G^n_{T\wedge\tau^n}} = \EN\left(\int_0^\cdot (A^n_u)'dW_u + \int_0^\cdot B^n_u dW^0_u + \int_0^\cdot C^n_u dM^n_u\right)_{T\wedge\tau^n},
\end{equation*}
where $A^n,B^n$ and $C^n$ are $\filtg^n$ predictable processes.  Additionally, $(W^{\qprob^n},W^{0,\qprob^n})$ is a $\qprob^n$ Brownian motion stopped at $T\wedge\tau^n$ where $W^{\qprob^n}_\cdot \dfn W_{\cdot} - \int_0^{\cdot} A^n_u du$, $W^{0,\qprob^n}_\cdot \dfn W^0_{\cdot} - \int_0^{\cdot} B^n_u du$, and $M^{\qprob^n}_\cdot\dfn M^n_{\cdot} - \int_t^{\cdot\wedge\delta^n} \left(\chi_n\gamma\right)(X_u)C^n_udu$  is a $\qprob^n$ martingale stopped at $T\wedge\tau^n$.
\end{lemma}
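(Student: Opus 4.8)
The plan is to recognize this as \cite[Proposition 5.3.1]{bielecki2004credit} and to check that the present setting falls within its scope, then recall the mechanism. First I would verify the canonical (Cox) construction: $(\chi_n\gamma)(X)$ is a non-negative, $\filt^W$-adapted, continuous process, and since $\chi_n$ is supported on $\ol{E}_n$ and $\gamma$ is continuous on $E$ (Assumptions \ref{A:region}, \ref{A:intensity}), it is bounded, so $\int_t^T (\chi_n\gamma)(X_u)du < \infty$; moreover $\delta^n$ from \eqref{E:delta_n} is built from $U\sim U(0,1)$ independent of $\filt^{W,W^0}$. Exactly as in the computation following \eqref{E:delta}, this gives, for $s>t$,
\begin{equation*}
\condprobalt{}{\delta^n > s}{\F^{W,W^0}_\infty} = \condprobalt{}{\delta^n > s}{\F^{W,W^0}_s} = e^{-\int_t^s (\chi_n\gamma)(X_u)du},
\end{equation*}
so the immersion (H-)hypothesis holds between $\filt^{W,W^0}$ and $\filtg^n$; in particular $(W,W^0)$ remains a $\filtg^n$ Brownian motion and $M^n$ from \eqref{E:S_n_dynamics} is a $\filtg^n$ martingale.

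Next I would invoke the predictable representation property: under the Cox construction and immersion, every $\filtg^n$ local martingale stopped at $T\wedge\tau^n$ can be written as $N_0 + \int_0^\cdot (\phi_u)'dW_u + \int_0^\cdot \psi_u dW^0_u + \int_0^\cdot \zeta_u dM^n_u$ for $\filtg^n$ predictable $\phi,\psi,\zeta$ --- this is the content of \cite[Proposition 5.3.1]{bielecki2004credit}, obtained by combining Brownian representation (available in $\filtg^n$ via immersion) with the elementary representation for martingales of the single-jump filtration generated by $H^n$. Given $\qprob^n\sim\prob$ on $\G^n_{T\wedge\tau^n}$, its density process $Z^n_s \dfn \condespalt{}{d\qprob^n/d\prob}{\G^n_s}$ is a strictly positive $\filtg^n$ martingale stopped at $T\wedge\tau^n$, so by the representation property $Z^n = \EN(N^n)_{\cdot\wedge T\wedge\tau^n}$ with $N^n = \int_0^\cdot (A^n_u)'dW_u + \int_0^\cdot B^n_u dW^0_u + \int_0^\cdot C^n_u dM^n_u$ and $C^n > -1$ on $\dbra{\delta^n}$ (to keep $Z^n$ positive). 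Finally I would apply Girsanov: since $W$, $W^0$, $M^n$ are pairwise strongly orthogonal ($W\perp W^0$; $M^n$ is of finite variation with continuous compensator, so $[W,M^n]=[W^0,M^n]=0$), one has $\langle W,N^n\rangle_\cdot = \int_0^\cdot A^n_u du$, $\langle W^0,N^n\rangle_\cdot = \int_0^\cdot B^n_u du$, while the $(\chi_n\gamma)(X)C^n$-weighted compensator of the jump term is $\int_t^{\cdot\wedge\delta^n}(\chi_n\gamma)(X_u)C^n_u du$; hence $W^{\qprob^n}_\cdot = W_\cdot - \int_0^\cdot A^n_u du$ and $W^{0,\qprob^n}_\cdot = W^0_\cdot - \int_0^\cdot B^n_u du$ are $\qprob^n$ Brownian motions and $M^{\qprob^n}_\cdot = M^n_\cdot - \int_t^{\cdot\wedge\delta^n}(\chi_n\gamma)(X_u)C^n_u du$ is a $\qprob^n$ martingale, all stopped at $T\wedge\tau^n$.

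The step I expect to be the only non-routine one is the predictable representation property for $(W,W^0,M^n)$ in the progressively enlarged filtration $\filtg^n$: this is exactly what forces the density of an arbitrary equivalent measure to decompose along those three drivers, and it is what the cited proposition supplies. Everything after that is the standard stochastic-exponential and Girsanov computation, and because all processes are stopped at the bounded time $T\wedge\tau^n$, there is no local-versus-true-martingale subtlety to address.
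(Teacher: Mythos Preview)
Your proposal is correct and matches the paper's approach: the paper does not prove this lemma but simply cites \cite[Proposition 5.3.1]{bielecki2004credit}, exactly as you identify. Your additional verification that the Cox construction and immersion hypothesis apply in the localized setting, together with the sketch of the representation and Girsanov arguments, is a faithful unpacking of what that reference supplies.
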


The second lemma characterizes when a measure $\qprob^n\sim\prob$ on $\G^n_{T\wedge\tau^n}$ is in $\M^n$:

\begin{lemma}\label{L:mkt_px_of_risk_n}
Let $\qprob^n\sim\prob$ on $\G^n_{T\wedge\tau^n}$, and let $A^n,B^n,C^n$ be as in Lemma \ref{L:BR_Girsanov}.  Then $\qprob^n\in\M^n$ if and only if for $\prob\times\textrm{leb}_{[t,T]}$ almost all $(\omega,u)$:
\begin{equation}\label{E:mkt_px_of_risk_n}
1_{t\leq u\leq \delta^n\wedge\tau^n\wedge T}\left( \left(\chi_n(\mu-\gamma)\right)(X_u) + \left(\chi_n\sigma\rho\right)(X_u)'A^n_u + \left(\sqrt{\chi_n}\sigma\sqrt{1-\chi_n\rho'\rho}\right)(X_u)B^n_u - \left(\chi_n\gamma(X_u)\right)C^n_u\right) = 0.
\end{equation}
\end{lemma}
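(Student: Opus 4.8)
The plan is to carry out a Girsanov-type drift computation. Using Lemma \ref{L:BR_Girsanov}, I would rewrite the $\prob$-dynamics of $S^n$ stopped at $\tau^n$ in terms of the $\qprob^n$-driving processes $(W^{\qprob^n},W^{0,\qprob^n},M^{\qprob^n})$, identify the predictable finite-variation part of the resulting $\qprob^n$-semimartingale, and appeal to uniqueness of the canonical decomposition of a special semimartingale to conclude that $S^n$ stopped at $\tau^n$ is a $\qprob^n$-local martingale precisely when that finite-variation part vanishes.

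Concretely, fix $\qprob^n\sim\prob$ on $\G^n_{T\wedge\tau^n}$ and let $A^n,B^n,C^n$ be the associated $\filtg^n$-predictable processes from Lemma \ref{L:BR_Girsanov}. Substituting $dW_u=dW^{\qprob^n}_u+A^n_u\,du$, $dW^0_u=dW^{0,\qprob^n}_u+B^n_u\,du$ and $dM^n_u=dM^{\qprob^n}_u+1_{u\le\delta^n}(\chi_n\gamma)(X_u)C^n_u\,du$ into \eqref{E:S_n_dynamics} and stopping at $\tau^n$ yields the decomposition
\begin{equation*}
S^n_{s\wedge\tau^n}=S_0+N_s+\int_t^{s\wedge\tau^n}S^n_{u-}\,1_{u\le\delta^n}\,\Phi^n_u\,du,\qquad s\ge t,
\end{equation*}
where
\begin{equation*}
\Phi^n_u=\pare{\chi_n(\mu-\gamma)}(X_u)+\pare{\chi_n\sigma\rho}(X_u)'A^n_u+\pare{\sqrt{\chi_n}\sigma\sqrt{1-\chi_n\rho'\rho}}(X_u)B^n_u-\pare{\chi_n\gamma}(X_u)C^n_u
\end{equation*}
is exactly the expression inside the parentheses in \eqref{E:mkt_px_of_risk_n}, and $N$ (stopped at $\tau^n$) is the stochastic integral of $S^n_{\cdot-}$ against $\int1_{\cdot\le\delta^n}(\chi_n\sigma\rho)(X)'dW^{\qprob^n}+\int1_{\cdot\le\delta^n}(\sqrt{\chi_n}\sigma\sqrt{1-\chi_n\rho'\rho})(X)dW^{0,\qprob^n}-M^{\qprob^n}$. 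Since $\chi_n$ is supported on the compact set $\ol{E}_n$, the coefficients $\mu,\sigma,\rho,\gamma$ are continuous hence bounded there, and $S^n_{\cdot-}$ is locally bounded, the integrands are locally bounded; as $(W^{\qprob^n},W^{0,\qprob^n})$ is a $\qprob^n$-Brownian motion and $M^{\qprob^n}$ a $\qprob^n$-martingale (Lemma \ref{L:BR_Girsanov}), $N$ is a genuine $\qprob^n$-local martingale.

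The process $S^n$ stopped at $\tau^n$ is a special semimartingale (its only jump is $-S^n_{\delta^n-}$ at $\delta^n$, and $S^n_{\cdot-}$ is locally bounded on $[t,T\wedge\tau^n]$), so the displayed decomposition is its $\qprob^n$-canonical decomposition, with predictable finite-variation part $D_s\dfn\int_t^{s\wedge\tau^n}S^n_{u-}1_{u\le\delta^n}\Phi^n_u\,du$. By uniqueness of that decomposition, $\qprob^n\in\M^n$ if and only if $D\equiv0$, which holds if and only if $S^n_{u-}1_{t\le u\le\delta^n\wedge\tau^n\wedge T}\Phi^n_u=0$ for $\prob\times\textrm{leb}_{[t,T]}$-a.e. $(\omega,u)$. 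Since $S^n_{u-}>0$ on $\cbra{u\le\delta^n}$ and the indicator $1_{t\le u\le\delta^n\wedge\tau^n\wedge T}$ is precisely the one appearing in \eqref{E:mkt_px_of_risk_n}, this is equivalent to \eqref{E:mkt_px_of_risk_n}; and because $\qprob^n\sim\prob$ the null set may be taken under either measure, giving both implications. The only step I would write out with real care is the equivalence "$\qprob^n$-local martingale $\iff$ vanishing predictable finite-variation part" — it relies on first verifying that $S^n$ (stopped) is special and then on uniqueness of the canonical decomposition; everything else is routine substitution and bookkeeping of the indicators $1_{\cdot\le\delta^n}$ and $1_{\cdot\le\tau^n}$, together with the observation that the drift integrand in \eqref{E:mkt_px_of_risk_n} already carries these indicators.
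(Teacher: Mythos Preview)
Your proposal is correct and follows essentially the same approach as the paper: both substitute the Girsanov-shifted processes from Lemma \ref{L:BR_Girsanov} into the dynamics \eqref{E:S_n_dynamics} to obtain \eqref{E:S_n_dynamics_q} and then argue that the continuous finite-variation drift must vanish. The paper phrases the last step as ``continuous local martingales with finite variation paths must be constant,'' while you invoke uniqueness of the canonical decomposition of a special semimartingale; these are the same observation, and your formulation is arguably the more careful one given that $S^n$ itself has a jump.
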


\begin{proof}[Proof of Lemma \ref{L:mkt_px_of_risk_n}]
Using the dynamics for $S^n$ in \eqref{E:S_n_dynamics} in conjunction with Lemma \ref{L:BR_Girsanov} it follows that under $\qprob^n\in\M^n$ the asset $S^n$ has dynamics on $[t,T\wedge\tau^n]$:
\begin{equation}\label{E:S_n_dynamics_q}
\begin{split}
\frac{dS^n_u}{S^n_{u-}} &= 1_{u\leq \delta^n}\left(\left(\chi_n(\mu-\gamma)\right)(X_u) + \left(\chi_n\sigma\rho\right)(X_u)'A^n_u + \left(\sqrt{\chi_n}\sigma\sqrt{1-\chi_n\rho'\rho}\right)(X_u)B^n_u - \left(\chi_n\gamma(X_u)\right)C^n_u\right)du\\
& + 1_{u\leq \delta^n}\left(\left(\chi_n\sigma\rho\right)(X_u)'dW^{\qprob^n}_u + \left(\sqrt{\chi_n}\sigma\sqrt{1-\chi_n\rho'\rho}\right)(X_u)dW^{0,\qprob^n}_u\right) - dM^{\qprob^n}_u.
\end{split}
\end{equation}
Since continuous local martingales with finite variation paths must be constant (c.f. \cite[Ch IV, Prop (1.2)]{MR1725357}) the result follows.
\end{proof}

A heuristic use of the dynamic programming principle shows that the PDE for the certainty equivalent to $u^n$ is the same as in \eqref{E:G_PDE_local}.  The following proposition shows that indeed, $G^n$ from Proposition \ref{P:local_pde_exist} is the certainty equivalent.

\begin{proposition}\label{P:opt_invest_local}

There is a unique solution $G^n\in H_{2+\beta,\Omega_n}$ to the PDE in \eqref{E:G_PDE_local} which takes the form $G^n(t,x) = -(1/\alpha)\log\left(-u^n(t,x)\right)$, for $u^n$ defined in \eqref{E:util_funct_n}.  For the localized optimal investment problem, the optimal trading strategy is given by
\begin{equation}\label{E:opt_pi_n}
\hpi^n_s  = \hpi^n(s,X_s) = \hpi^n(s,X^{t,x}_s);\quad \hpi^n \dfn \frac{1}{\alpha}\left(\frac{\mu}{\sigma^2} - \frac{\alpha}{\sigma}(\nabla G^n(\cdot))'a\rho - \theta_{G^n(\cdot)}\right);\quad t\leq s\leq T\wedge\tau_n.
\end{equation}
The optimal martingale density process is given by $\hz^n = \hz^{n,(t,x)}$ where
\begin{equation}\label{E:hat_Z_n}
\hz^n_s = e^{-\alpha\left(\We^{\hpi^n}_s - G^n(t,x) + 1_{s\wedge\tau^n < \sigma^n} G^n(s\wedge\tau^n,X_{s\wedge\tau^n}) \right)};\quad t\leq s\leq T\wedge\tau^n.
\end{equation}

\end{proposition}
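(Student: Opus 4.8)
Existence of an $H_{2+\beta,\Omega_n}$-solution $G^n$ to \eqref{E:G_PDE_local} is already furnished by Proposition \ref{P:local_pde_exist}, so the plan is a verification (martingale-optimality) argument: fix any such solution, produce the candidate dual measure $\hqprob^n$ with density process $\hz^n$, show $-\tfrac1\alpha\log(-u^n(t,x))=G^n(t,x)$ for every $(t,x)$ with $x\in E_n$, and then read off uniqueness of $G^n$, optimality of $\hpi^n$, and the identification of $\hz^n$ as the dual optimizer. The starting observations are that, by Remark \ref{R:gradient_bound}, $G^n$ and $\nabla G^n$ are bounded on $\ol{\Omega}_n$, and that $\mu,\sigma,\rho,a,\chi_n$ are bounded on $\ol{E}_n$; together with local boundedness of the Lambert-$W$ function $\theta$ (Lemma \ref{L:theta_lem}), this makes $\hpi^n$ of \eqref{E:opt_pi_n} a bounded function on $[0,T]\times\ol{E}_n$, and $S^n$ stopped at $\tau^n$ has bounded characteristics under $\prob$.

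First I would construct $\hqprob^n$. Applying It\^o's formula to $\hz^n$ of \eqref{E:hat_Z_n} on $[t,T\wedge\tau^n]$ — accounting for the single jump of $H^n$ (equivalently $M^n$) at $\delta^n$, and noting from \eqref{E:wealth_dynamics_n} that pre-default $\We^{\hpi^n}$ has drift $\hpi^n\chi_n\mu$ (the $-\hpi^n\,dM^n$ term restoring the compensator $\hpi^n\chi_n\gamma\,ds$), martingale coefficients $\hpi^n\chi_n\sigma\rho$ against $W$ and $\hpi^n\sqrt{\chi_n}\sigma\sqrt{1-\chi_n\rho'\rho}$ against $W^0$, and jump $-\hpi^n_{\delta^n}$ at $\delta^n$, while $1_{\delta^n>s}G^n(s,X_s)$ has pre-default drift $(G^n_t+LG^n)(s,X_s)$, $W$-coefficient $(a\nabla G^n)(s,X_s)$, and jump $-G^n(\delta^n,X_{\delta^n})$ — I expect the $ds$-terms (including the compensator $\chi_n\gamma$ of $H^n$) to collapse upon substituting the PDE \eqref{E:G_PDE_local}, the definition of $\hpi^n$, and the defining identity $\theta_{G^n}e^{\theta_{G^n}}=(\gamma/\sigma^2)e^{(\mu/\sigma^2)+\alpha G^n-(\alpha/\sigma)\nabla G^n{}'a\rho}$ of \eqref{E:theta_def}, which with \eqref{E:opt_pi_n} linearizes the default contribution to $e^{\alpha(\hpi^n+G^n)}=\sigma^2\theta_{G^n}/\gamma$. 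This should show $\hz^n=\EN(\int A^n\,dW+\int B^n\,dW^0+\int C^n\,dM^n)$ is a strictly positive $\prob$-local martingale with $\hz^n_t=1$, where $A^n=-\alpha(\hpi^n\chi_n\sigma\rho+a\nabla G^n)$, $B^n=-\alpha\hpi^n\sqrt{\chi_n}\sigma\sqrt{1-\chi_n\rho'\rho}$ are bounded on $[t,T\wedge\tau^n]$ and $1+C^n=e^{\alpha(\hpi^n+G^n)}>0$; boundedness of $A^n,B^n$ and of the single jump's compensator then upgrades $\hz^n$ to a true $\prob$-martingale, so $\hqprob^n$ with $d\hqprob^n/d\prob|_{\G^n_{T\wedge\tau^n}}=\hz^n_{T\wedge\tau^n}$ is a probability measure $\sim\prob$. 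Checking that $(A^n,B^n,C^n)$ satisfies \eqref{E:mkt_px_of_risk_n} (again via $e^{\alpha(\hpi^n+G^n)}=\sigma^2\theta_{G^n}/\gamma$) gives $\hqprob^n\in\M^n$ by Lemma \ref{L:mkt_px_of_risk_n}, and $\relent{\hqprob^n}{\prob}=\alpha\espalt{\hqprob^n}{G^n(t,x)-\We^{\hpi^n}_{T\wedge\tau^n}-1_{\delta^n>T\wedge\tau^n}G^n(T\wedge\tau^n,X_{T\wedge\tau^n})}<\infty$ (using boundedness of $G^n$ and $\hqprob^n$-integrability of $\We^{\hpi^n}$), so $\hqprob^n\in\tM^n$; in particular $\tM^n\ne\emptyset$.

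For the verification, for $\pi^n\in\mathcal{A}^n$ I would set $J^{\pi^n}_s\dfn-e^{-\alpha(\We^{\pi^n}_{s\wedge\tau^n}+1_{\delta^n>s\wedge\tau^n}G^n(s\wedge\tau^n,X_{s\wedge\tau^n}))}$. The terminal and lateral boundary conditions in \eqref{E:G_PDE_local} (with $\chi_n=0$ on $\partial E_n$, so $G^n=0$ there) identify $J^{\pi^n}_{T\wedge\tau^n}$ with the integrand in \eqref{E:util_funct_n}, and $J^{\pi^n}_t=-e^{-\alpha G^n(t,x)}$; moreover, since the $1_{\delta^n>\cdot}G^n$-term and the constant $G^n(t,x)$ are common to $J^{\pi^n}$ and $J^{\hpi^n}=-e^{-\alpha G^n(t,x)}\hz^n$, one gets $J^{\pi^n}_s=-e^{-\alpha G^n(t,x)}\hz^n_s\,e^{-\alpha(\We^{\pi^n}-\We^{\hpi^n})_{s\wedge\tau^n}}$. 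Passing to $\hqprob^n$ and applying Jensen's inequality, together with the facts that $\We^{\hpi^n}$ is a true $\hqprob^n$-martingale started at $0$ and $\We^{\pi^n}$ a $\hqprob^n$-supermartingale started at $0$ (the latter by admissibility, as $\hqprob^n\in\tM^n$), yields $\esp\bra{J^{\pi^n}_{T\wedge\tau^n}}\le-e^{-\alpha G^n(t,x)}$, hence $u^n(t,x)\le-e^{-\alpha G^n(t,x)}$. For the reverse, I would first check $\hpi^n\in\mathcal{A}^n$: boundedness of $\hpi^n$ and of the characteristics of $S^n$ give finite exponential moments of $\We^{\hpi^n}$ under $\prob$, so the entropy (Donsker--Varadhan) inequality gives $\espalt{\qprob^n}{\sup_{t\le s\le T}|\We^{\hpi^n}_s|}<\infty$ for every $\qprob^n\in\tM^n$, making $\We^{\hpi^n}$ a $\qprob^n$-local martingale of class (D), hence a true $\qprob^n$-martingale and in particular a supermartingale. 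Then taking $\pi^n=\hpi^n$ in the displayed identity gives equality, so $u^n(t,x)=-e^{-\alpha G^n(t,x)}$, i.e. $G^n(t,x)=-\tfrac1\alpha\log(-u^n(t,x))$; since the right-hand side does not depend on the chosen solution, uniqueness of the $H_{2+\beta,\Omega_n}$-solution follows, the optimal strategy is $\hpi^n$ of \eqref{E:opt_pi_n}, and since the primal optimal value $-e^{-\alpha G^n(t,x)}$ is attained with $\hqprob^n\in\tM^n$, exponential-utility duality (\cite{MR1891731,MR1891730,MR2489605}) identifies $\hqprob^n$, equivalently $\hz^n$, as the dual optimizer.

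The hard part will be the It\^o computation showing that the PDE \eqref{E:G_PDE_local} makes $\hz^n$ a local martingale (and $J^{\pi^n}$ a local supermartingale for general $\pi^n$, via a pointwise Hamiltonian inequality with unique maximizer $\hpi^n$): it is routine in principle but error-prone, requiring the Lambert-$W$ identity to cancel the default contribution $e^{\alpha(\hpi^n+G^n)}$ against $\theta_{G^n}$ and, crucially, careful bookkeeping of the compensator of $M^n$, which feeds into the pre-default drift of $\We^{\hpi^n}$. Secondary care is needed for the local-to-true (super)martingale passages and for finiteness of the relative entropy, all of which rest on the gradient bound of Remark \ref{R:gradient_bound}, boundedness of the coefficients on $\ol{E}_n$, and the entropy inequality.
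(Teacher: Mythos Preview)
Your proposal is correct and follows essentially the same verification strategy as the paper, but with two organizational differences worth noting.

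First, the paper does not reprove the primal upper bound via the $J^{\pi^n}$/Jensen argument you outline. Instead, after observing that $\hz^n_{T\wedge\tau^n}=e^{\alpha G^n(t,x)}e^{-\alpha(\hwe^n_{T\wedge\tau^n}+1_{\delta^n>T\wedge\tau^n}(\chi_n\phi)(X_{T\wedge\tau^n}))}$ (the first-order condition), it simply invokes the known exponential-utility duality results of \cite{MR1891730,MR1891731,MR1743972,MR1920099} and reduces everything to three checks: (i) $\hwe^n$ is a $\qprob^n$-supermartingale for all $\qprob^n\in\tM^n$; (ii) $\hz^n$ is the density of some $\hqprob^n\in\tM^n$; (iii) $\hwe^n$ is a $\hqprob^n$-martingale. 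Your route is more self-contained but arrives at the same place.

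Second, for the admissibility of $\hpi^n$ (your entropy/class~(D) argument), the paper takes a shorter path: since $\hpi^n$, $\sigma$, $\chi_n$ are bounded on $\ol E_n$, the quadratic variation $[\hwe^n,\hwe^n]_{T\wedge\tau^n}$ is almost surely bounded by a deterministic constant $C(n)$, and hence $\hwe^n$ is a true martingale under \emph{every} $\qprob^n\in\M^n$ (not just $\tM^n$), by the bounded-QV criterion. This bypasses the Donsker--Varadhan inequality entirely. The paper also cites Protter--Shimbo (\cite[Theorem~9]{MR2574236}) explicitly to pass from local to true martingale for $\hz^n$, using that $C^n$ is bounded and bounded away from $-1$. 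Otherwise, your identification of $A^n,B^n,C^n$, the use of the Lambert-$W$ identity to collapse the drift, and the relative-entropy computation all match the paper (which packages the It\^o calculation into a separate Lemma~\ref{L:ITO_result}).
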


\begin{proof}[Proof of Proposition \ref{P:opt_invest_local}]
Write $\hwe^n = \We^{\hpi^n}$ and note that below, $C(n)$ is a constant which may change from line to line. Since $G^n$ solves \eqref{E:G_PDE_local}  at $T\wedge\tau^n$ we have
\begin{equation*}
\begin{split}
\hz^n_{T\wedge\tau^n} &= e^{\alpha G^n(t,x)} \times e^{-\alpha\left(\hwe^n_{T\wedge\tau^n} + 1_{T\wedge\tau^n< \delta^n}(\chi_n\phi)(X_{T\wedge\tau^n})\right)},
\end{split}
\end{equation*}
so $\hwe^n, \hz^n$ satisfy the first order conditions for optimality. From the well known utility maximization results for exponential utility (see \cite{MR1891730,MR1891731,MR1743972,MR1920099}) the result will follow provided
\begin{enumerate}[(1)]
\item $\hwe^n$ is a $\qprob^n$ super-martingale for all $\qprob^n\in\tM^n$.
\item $\hz^n_{T\wedge\tau^n} = d\hqprob^n/d\prob \big|_{\G^n_{T\wedge\tau^n}}$ for some $\hqprob^n \in \tM^n$.
\item $\hwe^n$ is a $\hqprob^n$ martingale.
\end{enumerate}

Parts $(1)$ and $(3)$ follow immediately from Lemmas \ref{L:BR_Girsanov}, \ref{L:mkt_px_of_risk_n} and Remark \ref{R:gradient_bound}.  Indeed, using \eqref{E:S_n_dynamics_q} we have
\begin{equation*}
\begin{split}
d\hwe^n_s &= 1_{u\leq \tau^n\wedge\delta^n}\hpi^n(u,X_u)\left(\left(\chi_n\sigma\rho\right)(X_u)'dW^{\qprob^n}_u + \left(\sqrt{\chi_n}\sigma\sqrt{1-\chi_n\rho'\rho}\right)(X_u)dW^{0,\qprob^n}_u\right)\\
&\qquad  - 1_{u\leq\tau^n}\hpi^n(u,X_u)dM^{\qprob^n}_u.
\end{split}
\end{equation*}
Thus,
\begin{equation*}
\left[\hwe^n, \hwe^n\right]_{T\wedge\tau^n} = \int_t^{T\wedge\tau^n\wedge\delta^n}\left(\hpi^n(u,X_u)\right)^2 \chi_n\sigma^2(X_u)du + 1_{\delta^n\leq T\wedge\tau^n}\left(\hpi^n(\delta^n,X_{\delta^n})\right)^2.
\end{equation*}
Since $|\hpi^n|\leq C(n)$ on $[t,T]\times \ol{E}_n$ (see Remark \ref{R:gradient_bound}) and $\sigma,\chi_n$ are bounded on $\ol{E}_n$ it follows that $\qprob^n$ almost surely $[\hwe^n]_{T\wedge\tau^n} \leq C(n)$, and hence $\hwe^n$ is a $\qprob^n$ martingale (c.f. \cite[Theorems 1.51,4.48]{MR2273672}).  This gives $(1)$ and $(3)$ as well, provided we can show $(2)$. To this end, a lengthy but straight-forward calculation using \ito's formula in Lemma \ref{L:ITO_result} below shows $\hz^n$ has dynamics:
\begin{equation}\label{E:Z_n_form}
\begin{split}
&\frac{d\hz^n_s}{\hz^n_{s-}} = 1_{s\leq\tau^n\wedge\delta^n}\left((A^n_s)'dW_s + B^n_s dW^0_s\right) + 1_{s\leq \tau^n}C^n_s dM^n_s;\\
&\qquad A^n_s = -\alpha \left(\hpi^n\chi_n\sigma \rho + a\nabla G^n\right)(s,X_s);\quad B^n_s = -\alpha \left(\hpi^n\sqrt{\chi_n}\sigma\sqrt{1-\chi_n\rho'\rho}\right)(s,X_s);\\
&\qquad C^n_s = \left(e^{\alpha\left(\hpi^n + G^n\right)}-1\right)(x,X_s).
\end{split}
\end{equation}
Remark \ref{R:gradient_bound} and \eqref{E:opt_pi_n} show that almost surely $|A^n_s|, |B^n_s|, |C^n_s|\leq C(n)$. Also, note that $A^n, B^n, C^n$ are $\filt^W$ predictable, and there is some $\eps_n>0$ so that $C^n_s > -(1-\eps_n)$. Therefore, by \cite[Theorem 9]{MR2574236} it follows that $\hz^n$ is a strictly positive martingale\footnote{Note that $\int_t^\cdot1_{s\leq \tau^n}C^n_s dM^n_s$ is a martingale since it has bounded quadratic variation: see \cite[Theorems 1.51 and 4.48]{MR2273672}. Clearly, the Brownian stochastic integrals are martingales.}. It remains to show that $\hqprob^n$ defined by $\hz^n_{T\wedge\tau^n}$ is in $\tM^n$. First, that $\hqprob^n\in\M^n$ follows immediately from \eqref{E:functional_mkt_px_of_risk_n} in Lemma \ref{L:ITO_result} below in conjunction with Lemma \ref{L:mkt_px_of_risk_n}. For the relative entropy, note that by definition of $\hz^n$ in \eqref{E:hat_Z_n}, and \eqref{E:G_PDE_local} we obtain
\begin{equation*}
\hz^n_{T\wedge\tau^n}\log\left(\hz^n_{T\wedge\tau^n}\right) = -\alpha \hz^n_{T\wedge\tau^n}\hwe^n_{T\wedge\tau^n} - \alpha \hz^n_{T\wedge\tau^n} 1_{T\wedge\tau^n<\delta^n}(\chi_n\phi)(X_{T\wedge\tau^n}) + \alpha \hz^n_{T\wedge\tau^n}G^n(t,x).
\end{equation*}
Since $0\leq\chi_n\leq 1$, $\phi$ is bounded, $\hz^n$ is a martingale, and $G^n$ is deterministic, the fact that $\hz^n\hwe^n$ is a martingale implies the desired result.  Thus, $\espalt{}{\hz^n_{T\wedge\tau^n}\log\left(\hz^n_{T\wedge\tau^n}\right)} < \infty$ so that $\hqprob^n\in\tM^n$.


\nada{

 Next, we can explicitly solve for $\hz^n$:
\begin{equation*}
\begin{split}
\hz^n_s &= 1_{\delta^n > s\wedge\tau^n} \EN\left(\int_t^\cdot (A^n_u)'dW_u + \int_t^\cdot B^n_u dW^0_u\right)_{s\wedge\tau^n} e^{-\int_t^{s\wedge\tau^n} C^n_u\chi_n\gamma(X_u) du}\\
&\quad + 1_{\delta^n\leq s\wedge\tau^n}\EN\left(\int_t^\cdot (A^n_u)'dW_u + \int_t^\cdot B^n_u dW^0_u\right)_{\delta^n} e^{-\int_t^{\delta^n} C^n_u\chi_n\gamma(X_u) du}(1+C^n_{\delta^n}).
\end{split}
\end{equation*}
By conditioning on $\F^{W,W^0}_{s\wedge\tau^n}$ and using \cite[Proposition 5.1.1 (ii)]{bielecki2004credit}\footnote{This is only proved for bounded random variables, but the result extends to positive random variables by the monotone convergence theorem.}:
\begin{equation*}
\begin{split}
\espalt{}{\hz^n_s} &= \espalt{}{\EN\left(\int_t^\cdot (A^n_u)'dW_u + \int_t^\cdot B^n_u dW^0_u\right)_{s\wedge\tau^n} e^{-\int_t^{s\wedge\tau^n}(\chi_n\gamma)(X_u)(1+ C^n_u) du}}\\
&\qquad + \espalt{}{\int_t^{s\wedge\tau^n} \EN\left(\int_t^\cdot (A^n_u)'dW_u + \int_t^\cdot B^n_u dW^0_u\right)_{u} (1+C^n_u)(\chi_n\gamma)(X_u)e^{-\int_t^u (1+C^n_u)(\chi_n\gamma)(X_v)dv}du};\\
&=\espalt{}{\int_t^{s\wedge\tau^n}e^{-\int_t^u (1+C^n_u)(\chi_n\gamma)(X_v)dv}\EN\left(\int_t^\cdot (A^n_u)'dW_u + \int_t^\cdot B^n_u dW^0_u\right)_{u}\left((A^n_u)'dW_u + B^n_u dW^0_u\right)}
\end{split}
\end{equation*}
The last inequality follows by integrating by parts.  Now, consider the local martingale
\begin{equation*}
\check{M}^n_\cdot \dfn \int_t^{\cdot\wedge\tau^n}e^{-\int_t^u (1+C^n_u)(\chi_n\gamma)(X_v)dv}\EN\left(\int_t^\cdot (A^n_u)'dW_u + \int_t^\cdot B^n_u dW^0_u\right)_{u}\left((A^n_u)'dW_u + B^n_u dW^0_u\right),
\end{equation*}
and, since $A^n,B^n$ are bounded, the martingale
\begin{equation*}
\check{N}^n_\cdot = \EN\left(\int_t^\cdot (A^n_u)'dW_u + \int_t^\cdot B^n_u dW^0_u\right)_{\cdot\wedge\tau^n}.
\end{equation*}
Since $1+C^n_u\geq 0$ it follows that $[\check{M}^n,\check{M}^n]_\cdot \leq [\check{N}^n,\check{N}^n]_\cdot$. But, as $A^n,B^n$ are bounded we have
\begin{equation*}
\espalt{}{(\check{N}^n_T)^2} = \espalt{}{\EN\left(2\int_t^\cdot (A^n_u)'dW_u + 2\int_t^\cdot B^n_udW^0_i\right)_{T\wedge\tau^n} e^{\int_t^{T\wedge\tau^n}\left((A^n_u)'(A^n_u) + (B^n_u)^2\right)du}} \leq C(n).
\end{equation*}
Therefore, as $\check{N}^n_T$ is square integrable we see that $\espalt{}{[\check{N}^n,\check{N}^n]_T} \leq C(n)$ and hence by the BGD (CITE) inequality we know that $\check{M}^n$ is of class DL and hence a martingale.  Thus,  $\espalt{}{\hz^n_s} = 1$ proving that it is a strictly positive martingale.

}


\end{proof}

\subsection{Unwinding the localization: analytic results}\label{SS:unwind_anal}

We now provide two analytic results for unwinding the localization.  The first uses the maximum principal to obtain a uniform lower bound for solutions $G^n$ to \eqref{E:G_PDE_local}.  The second proves existence of solutions to \eqref{E:G_PDE} provided locally uniform upper bounds for $G^n$.

\begin{proposition}\label{P:G_n_global_min}

Let $G^n$ denote the unique solution to \eqref{E:G_PDE_local} from Propositions \ref{P:local_pde_exist} and \ref{P:opt_invest_local}.  Recall the definition of $\ul{\phi}$ from Assumption \ref{A:phi}.  Then, for each $n$, $\inf_{0\leq t\leq T, x\in\ol{E}_n} G^n(t,x) \geq \ul{\phi}$.
\end{proposition}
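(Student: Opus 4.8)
The statement is a lower bound $G^n \geq \underline{\phi}$ for the unique classical solution $G^n\in H_{2+\beta,\Omega_n}$ of the localized HJB equation \eqref{E:G_PDE_local}. Since $G^n$ is also the certainty equivalent $-(1/\alpha)\log(-u^n)$ by Proposition \ref{P:opt_invest_local}, there are two natural routes: a purely analytic comparison-principle argument on the PDE, or a probabilistic argument using a suboptimal strategy in $u^n$. I would take the analytic route, since the PDE \eqref{E:G_PDE_local} has a convenient structure once one isolates the sign of the zeroth-order terms. The key observation is that the terminal datum is $\chi_n\phi \geq \underline\phi$ (because $0\le\chi_n\le 1$ and $\underline\phi = 0\wedge\inf_E\phi\le 0$, so $\chi_n\phi$ lies between $\underline\phi$ and $\overline\phi$), so it suffices to show the constant function $\underline\phi$ is a (sub)solution in the appropriate sense and then invoke a maximum principle.

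\textbf{Main steps.} First I would reverse time as in \eqref{E:lieb_PDE_form}, so that $v^n(t,x) = G^n(T-t,x)$ solves $-v^n_t + \tfrac12\tr(AD^2v^n) + \check a^n(x,v^n,\nabla v^n) = 0$ on $\Omega_n$ with $v^n = \chi_n\phi$ on $\Gamma_n$, and I want to show $v^n \geq \underline\phi$. Set $w = v^n - \underline\phi$. Then $w = \chi_n\phi - \underline\phi \geq 0$ on $\Gamma_n$, and $w$ satisfies a linear (in $w$) equation of the form $-w_t + \tfrac12\tr(AD^2w) + \mathbf{b}(t,x)\cdot\nabla w + c(t,x)w = g(t,x)$, where the coefficients $\mathbf b, c$ arise from writing the differences of the nonlinear terms $\check a^n(x,v^n,\nabla v^n) - \check a^n(x,\underline\phi,\nabla\underline\phi) = \check a^n(x,v^n,\nabla v^n) - \check a^n(x,\underline\phi,0)$ in incremental-ratio (mean-value) form — legitimate because $v^n\in H_{2+\beta,\Omega_n}$ has bounded gradient on $\overline\Omega_n$ (Remark \ref{R:gradient_bound}) and $\check a^n$ is smooth, so $\mathbf b, c$ are bounded on $\Omega_n$. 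The right-hand side is $g(t,x) = -\check a^n(x,\underline\phi,0)$. The second crucial step is to check $g \geq 0$: from \eqref{E:lieb_notation}, $\check a^n(x,\underline\phi,0) = \tfrac{\sigma^2\chi_n}{2\alpha}\big(\tfrac{2\gamma}{\sigma^2} + (\tfrac{\mu}{\sigma^2})^2 - \theta_0^2 - 2\theta_0\big)$ where $\theta_0 = \theta((\gamma/\sigma^2)e^{\mu/\sigma^2 + \alpha\underline\phi})$; I expect part (3) of Lemma \ref{L:theta_lem} (the same lemma used to show the square root in \eqref{E:default_insurance} is real, evaluated at $x = \mu/\sigma^2$, $y = (\gamma/\sigma^2)e^{\alpha\underline\phi}$) gives exactly $(\mu/\sigma^2)^2 + 2(\gamma/\sigma^2)e^{\alpha\underline\phi} - \theta_0^2 - 2\theta_0 \geq 0$, and since $\underline\phi\le 0$ we have $2(\gamma/\sigma^2)e^{\alpha\underline\phi} \le 2\gamma/\sigma^2$, which makes the bracket $\ge (\mu/\sigma^2)^2 + 2(\gamma/\sigma^2)e^{\alpha\underline\phi} - \theta_0^2 - 2\theta_0 \ge 0$; hence $\check a^n(x,\underline\phi,0)\ge 0$ and $g\le 0$ — wait, this gives $g=-\check a^n(x,\underline\phi,0)\le 0$, the wrong sign, so I would instead show $\underline\phi$ is a \emph{subsolution}: $-(\underline\phi)_t + \tfrac12\tr(AD^2\underline\phi) + \check a^n(x,\underline\phi,0) = \check a^n(x,\underline\phi,0)\ge 0$, i.e. $P(\underline\phi)\ge 0 = P(v^n)$ with $\underline\phi\le v^n$ on $\Gamma_n$. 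The final step is to apply the comparison principle for the parabolic operator $P$ (available from the a-priori maximum principle already invoked in the proof of Proposition \ref{P:local_pde_exist} via \cite[Theorem 9.5]{MR1465184}, applied to $w = v^n - \underline\phi$ with bounded coefficients and $w\ge 0$ on $\Gamma_n$, $-w_t + \tfrac12\tr(AD^2w) + \mathbf b\cdot\nabla w + cw \le 0$) to conclude $w\ge 0$, i.e. $v^n\ge\underline\phi$, hence $G^n\ge\underline\phi$ on $[0,T]\times\overline E_n$.

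\textbf{Main obstacle.} The delicate point is verifying the sign condition $\check a^n(x,\underline\phi,0)\ge 0$, which is where the special structure of the HJB nonlinearity — encoded through the Lambert-$W$ function $\theta$ — enters; this is precisely what Lemma \ref{L:theta_lem}(3) is designed to deliver, and the only subtlety beyond invoking it is the monotonicity $e^{\alpha\underline\phi}\le 1$ used to absorb the $2\gamma/\sigma^2$ versus $2(\gamma/\sigma^2)e^{\alpha\underline\phi}$ discrepancy, together with confirming that $\theta$ is increasing so that replacing $e^{\alpha\underline\phi}$ by $1$ only increases $\theta_0$ in a controlled way — actually one must be careful here, since both $\theta_0^2$ and $2\theta_0$ also increase, so the cleanest argument is to apply Lemma \ref{L:theta_lem}(3) directly at $y = (\gamma/\sigma^2)e^{\alpha\underline\phi}$ giving $(\mu/\sigma^2)^2 + 2y - \theta_0^2 - 2\theta_0\ge 0$ and then note $\tfrac{2\gamma}{\sigma^2} + (\tfrac\mu{\sigma^2})^2 - \theta_0^2 - 2\theta_0 \ge (\tfrac\mu{\sigma^2})^2 + 2y - \theta_0^2 - 2\theta_0\ge 0$ since $\tfrac{2\gamma}{\sigma^2}\ge 2y$. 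The rest — linearization with bounded coefficients and the maximum principle — is routine given the regularity of $G^n$ and the tools already cited in the excerpt.
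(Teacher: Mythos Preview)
Your approach is correct and rests on the same engine as the paper's proof: the sign inequality from Lemma~\ref{L:theta_lem}(3), together with $\underline\phi\le 0$, combined with a parabolic maximum/comparison principle. Your verification that $\check a^n(x,\underline\phi,0)\ge 0$ is exactly right (apply Lemma~\ref{L:theta_lem}(3) at $x=\mu/\sigma^2$, $y=(\gamma/\sigma^2)e^{\alpha\underline\phi}$ and use $2\gamma/\sigma^2\ge 2y$). Note also that $\partial_z\check a^n=-\chi_n\sigma^2\,\theta(x,z,p)\le 0$, so the zeroth-order coefficient $c$ in your linearized equation is nonpositive and the comparison principle applies without any exponential-weight trick.

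The paper's argument is a bit more direct: rather than building the constant subsolution $\underline\phi$ and linearizing, it simply looks at a hypothetical interior minimum $(t_0,x_0)$ of $G^n$, plugs $\nabla G^n(t_0,x_0)=0$ and $G^n_t(t_0,x_0)\ge 0$ into \eqref{E:G_PDE_local}, and applies Lemma~\ref{L:theta_lem}(3) at $y=(\gamma/\sigma^2)e^{\alpha G^n(t_0,x_0)}$, $x=\mu/\sigma^2$ to obtain $0\ge(\chi_n\gamma/\alpha)(1-e^{\alpha G^n(t_0,x_0)})$, forcing $G^n(t_0,x_0)\ge 0\ge\underline\phi$. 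This avoids the linearization and the appeal to a general comparison theorem, and incidentally yields the slightly sharper statement that any interior minimum of $G^n$ is nonnegative. Your route trades that directness for a more mechanical, reusable template (subsolution + comparison), which would generalize more easily if one wanted to compare $G^n$ with non-constant barriers.
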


\begin{proof}[Proof of Proposition \ref{P:G_n_global_min}]
First, assume $G^n$ has a minimum in $[0,T)\times E_n$ at $(t_0,x_0)$. If $t_0>0$ then $G^n_t(t_0,x_0) = 0$.  If $t_0 = 0$ then $G^n_t(t_0,x_0)\geq 0$.  Also, $\nabla G^n(t_0,x_0) = 0$.  By the ellipticity of $A$ in $E_n$, \eqref{E:G_PDE_local} implies at $(t_0,x_0)$:
\begin{equation*}
\begin{split}
0 &\geq \frac{\chi_n\sigma^2}{2\alpha}\left(\frac{2\gamma}{\sigma^2} + \frac{\mu^2}{\sigma^4} - \theta^2\left(\frac{\gamma}{\sigma^2}e^{\frac{\mu}{\sigma^2} + \alpha G^n}\right) - 2\theta\left(\frac{\gamma}{\sigma^2}e^{\frac{\mu}{\sigma^2} + \alpha G^n}\right)\right);\\
&\geq \frac{\chi_n\sigma^2}{2\alpha}\left(\frac{2\gamma}{\sigma^2} + \frac{\mu^2}{\sigma^4} - \frac{2\gamma}{\sigma^2}e^{\alpha G^n} - \frac{\mu^2}{\sigma^4}\right);\\
&=\frac{\chi_n\gamma}{\alpha}\left(1-e^{\alpha G^n}\right).
\end{split}
\end{equation*}
Above the second inequality uses Lemma \ref{L:theta_lem} below at $y=(\gamma/\sigma^2)e^{\alpha G^n}$ and $x=\mu/\sigma^2$. Since $x_0\in E_n$ and $\chi_n, \gamma > 0$ in $E_n$ we see $0 \geq 1 - e^{\alpha G^n}$ which implies $G^n \geq 0 \geq \ul{\phi}$. But, we already know $G^n(T,\cdot) = \chi_n\phi \geq \ul{\phi}$ and $G^n(t,\partial E_n) = 0 \geq \ul{\phi}$.  Thus, the result follows.

\end{proof}

The next proposition is significantly more involved. Though it can be deduced from \cite[Theorem V.8.1]{MR0241822}, for transparency we offer a detailed proof using the results in \cite{MR1465184, MR0181836}.

\begin{proposition}\label{P:G_n_bounds_give_soln}
Let $G^n$ be the unique solution to \eqref{E:G_PDE_local} from Propositions \ref{P:local_pde_exist} and \ref{P:opt_invest_local}.  Assume for each $k\in\nats$ that
\begin{equation}\label{E:G_n_loc_unif_ub}
\sup_{n\geq k+1}\sup_{0\leq t\leq T, x\in \ol{E}_k} G^n(t,x) = C(k) < \infty.
\end{equation}
Then there exists a solution $G$ to \eqref{E:G_PDE}. In particular, there is a subsequence (still labeled $n$) such that $G^n$ converges to $G$ in $H_{2+\beta,(0,T)\times E,\textrm{loc}}$.
\end{proposition}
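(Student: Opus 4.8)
The plan is to derive locally uniform (in $n$) a priori estimates for the $G^n$ in the parabolic H\"older space $H_{2+\beta}$ on compact subsets of $(0,T]\times E$, and then extract a convergent subsequence by a diagonal argument. The key preliminary observation is that, by \eqref{E:moll_facts}, for $n\geq k+1$ one has $\chi_n\equiv 1$ on $E_k\subset E_{n-1}$; hence on $(0,T)\times E_k$ the function $G^n$ solves the genuine, un-mollified PDE \eqref{E:G_PDE}, with terminal data $\phi$. Combining Proposition \ref{P:G_n_global_min} with the hypothesis \eqref{E:G_n_loc_unif_ub}, for each $k$ there is a fixed compact interval $[\ul{\phi},C(k)]$ into which $G^n$ maps $[0,T]\times\ol{E}_k$ for all large $n$; here and below $C(k)$ denotes a finite constant depending on $k$ but not on $n$, which may change from line to line.

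Fix $k$ and work on $(0,T)\times E_{k+3}$, on which $A$ is uniformly elliptic --- being continuous and $\mathbb{S}^d_{++}$-valued on the compactum $\ol{E}_{k+3}$ --- and on which all model coefficients, hence the nonlinearity $\check{a}^n$ of \eqref{E:lieb_notation}, are bounded. Since $G^n$ maps $\ol{E}_{k+3}$ into a fixed compact set, Lemma \ref{L:lieb_a_Op} supplies the quadratic bound $|\check{a}^n(x,G^n,\nabla G^n)|\leq C(k)(1+|\nabla G^n|^2)$ there. Now run a standard interior bootstrap for $G^n_t+\tfrac{1}{2}\tr(AD^2G^n)+\check{a}^n(x,G^n,\nabla G^n)=0$ (after reversing time as in \eqref{E:lieb_PDE_form}), each step shrinking the domain by one level: (i) the interior gradient estimate for quasilinear uniformly parabolic equations with quadratic gradient growth and bounded solution (\cite[Chapter XII]{MR1465184}, \cite{MR0181836}), valid up to $t=T$ since $\phi\in C^{2,\beta}(E)$, gives $\sup_{[0,T]\times\ol{E}_{k+2}}|\nabla G^n|\leq C(k)$; (ii) the interior gradient-H\"older estimate for quasilinear parabolic equations (\cite{MR0181836}) then yields a uniform H\"older modulus for $\nabla G^n$ on $[0,T]\times\ol{E}_{k+1}$, so that $F^n\dfn-\check{a}^n(x,G^n,\nabla G^n)$ --- a H\"older function composed with the now uniformly H\"older maps $(t,x)\mapsto(x,G^n,\nabla G^n)$ --- lies in a bounded subset of $H_{\beta,(0,T)\times E_{k+1}}$; (iii) linear interior and terminal Schauder estimates for $G^n_t+\tfrac{1}{2}\tr(AD^2G^n)=F^n$ (\cite[Chapter 4]{MR1465184}) give $|G^n|_{2+\beta,(0,T)\times E_k}\leq C(k)$, uniformly in $n$. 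The lateral boundary of $E_k$ never enters, since $\ol{E}_k\subset E_{k+3}$.

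With the uniform bounds $|G^n|_{2+\beta,(0,T)\times E_k}\leq C(k)$ for every $k$, the compact embedding $H_{2+\beta}\hookrightarrow H_{2+\beta'}$ ($\beta'<\beta$) on bounded domains together with a diagonal argument over $k$ produce a subsequence (still labelled $n$) and a limit $G\in H_{2+\beta,(0,T)\times E,\textrm{loc}}$ with $G^n\to G$ in $H_{2+\beta,(0,T)\times E,\textrm{loc}}$; in particular $G^n,\nabla G^n,D^2G^n,G^n_t$ converge locally uniformly on $(0,T]\times E$. Passing to the limit in \eqref{E:G_PDE_local} on each $(0,T)\times E_k$, where $\chi_n\equiv 1$ for large $n$, and using the continuity of $y\mapsto\theta(y)$ from Lemma \ref{L:theta_lem}, shows $G$ solves \eqref{E:G_PDE}. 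For the terminal condition, on $\ol{E}_k$ we have $G^n(T,\cdot)=\chi_n\phi=\phi$ for all $n\geq k+1$, and convergence is uniform up to $t=T$, so $G(T,\cdot)=\phi$ on $E_k$; since $k$ was arbitrary, $G(T,\cdot)=\phi$ on $E$.

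\textbf{Main obstacle.} The heart of the argument is the locally uniform-in-$n$ gradient estimate (i): the growth of $\check{a}^n$ in $\nabla G^n$ is exactly quadratic --- the borderline case for interior gradient bounds --- so the a priori $C^0$ bound furnished by the hypothesis \eqref{E:G_n_loc_unif_ub} is indispensable, and one must check that the constants in the quasilinear estimates depend only on the ellipticity constant and coefficient bounds on $\ol{E}_{k+3}$, on $C(k+3)$, and on the geometry of the nested domains, never on $n$. Granted that, steps (ii)--(iii) are routine linear theory on fixed compacta.
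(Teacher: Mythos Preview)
Your proposal is correct and follows essentially the same four-step bootstrap as the paper: locally uniform $C^0$ bound $\Rightarrow$ interior gradient bound for quasilinear parabolic equations with quadratic gradient growth $\Rightarrow$ H\"older gradient bound $\Rightarrow$ linear Schauder estimate $\Rightarrow$ diagonal subsequence. The paper implements steps (ii)--(iii) by introducing the auxiliary function $\check{v}^n=\chi_k(v^n-\phi)$, which vanishes on the parabolic boundary of $\Omega_k$ and satisfies a linear equation with right-hand side controlled by the earlier steps, whereas you invoke interior estimates directly; this is a presentational difference only.
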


\begin{proof}[Proof of Proposition \ref{P:G_n_bounds_give_soln}]
In what follows, $C(k)$ is a constant which depends on all model quantities in $[0,T]\times\ol{E}_k$, but may change from line to line. Note that Proposition \ref{P:G_n_global_min} and \eqref{E:G_n_loc_unif_ub} imply
\begin{equation}\label{E:G_n_loc_unif}
\sup_{n\geq k+1}\sup_{0\leq t\leq T, x\in \ol{E}_k} |G^n(t,x)| <\infty.
\end{equation}
\subsubsection*{Step 1:} Use \cite[Theorem 11.3(b)]{MR1465184} and \eqref{E:G_n_loc_unif} to conclude that
\begin{equation}\label{E:G_n_gradient_loc_unif}
\sup_{n\geq k+2}\sup_{0\leq t\leq T,x\in \ol{E}_k} |\nabla G^n(t,x)| < \infty.
\end{equation}
To show this, we recall that $G^n(t,x) = v^n(T-t,x)$ where $v^n$ solves the PDE in \eqref{E:lieb_PDE_form} with $\check{a}^n$ therein defined in \eqref{E:lieb_notation}.  Next, we define the Bernstein function (c.f. \cite[Equation (8.3)]{MR1465184}):
\begin{equation}\label{E:bernstein}
\EN(x,p) \dfn \frac{1}{2}p'A(x)p;\quad x\in E, p\in\reals^d.
\end{equation}
Note that by assumption we have that $\EN(x,p) \geq \lambda_k p'p > 0$ for $x\in \ol{E}_k$ and some $\lambda_k>0$.  Next, we define the differential operators from \cite[Chapter 11]{MR1465184} which act on functions $f(x,z,p)$ via
\begin{equation}\label{E:bernstein_delta}
\delta(p)[f](x,z,p) \dfn f_z(x,z,p) + \frac{1}{p'p} p'\nabla_x f(x,z,p);\quad \ol{\delta}(p)[f](x,z,p) \dfn p'\nabla_p f(x,z,p).
\end{equation}
For the domain $\Omega_k = (0,T)\times E_k$ the quantities $A,B,C$ from \cite[Equation (11.7)]{MR1465184} become\footnote{$v$ of \cite[Equation (11.7)]{MR1465184} is equal to $p'p$: see right after \cite[Equation (11.2)]{MR1465184}.}
\begin{equation}\label{E:lieb_A_B_C}
\begin{split}
A_k(x,z,p) &\dfn \frac{1}{\EN(x,p)}\left(\frac{p'p}{8\lambda_k}\sum_{i,j=1}^d \left(\ol{\delta}(p)[A^{ij}](x,z,p)\right)^2 + \left(\ol{\delta}(p)-1\right)[\EN](x,z,p)\right);\\
B_k(x,z,p) &\dfn \frac{1}{\EN(x,p)}\left(\delta(p)[\EN](x,z,p) + \left(\ol{\delta}(p)-1\right)[\check{a}^n](x,z,p)\right);\\
C_k(x,z,p) &\dfn \frac{1}{\EN(x,p)}\left(\frac{p'p}{8\lambda_k} \sum_{i,j=1}^d \left(\delta(p)[A^{ij}](x,z,p)\right)^2 + \delta(p)[\check{a}^n](x,z,p)\right).
\end{split}
\end{equation}
From Lemma \ref{L:lieb_A_B_C} below we see that the quantities $A^\infty_k,B^\infty_k,C^\infty_k$ of \eqref{E:lieb_A_B_C_inf} are finite with $A^\infty_k = 1$ and $C^\infty_k = 0$. Lastly, as noted right after \cite[Equation (11.4)]{MR1465184} we take $a^{ij}_{*} = a^{ij}, f_j =0$,
and use these values to verify \cite[Equations (11.17abc)]{MR1465184}.  After all these preparations, we are ready to invoke \cite[Theroem 11.3(b)]{MR1465184}.  First of all, note that \cite[Equations (11.17ac)]{MR1465184} hold.  As for \cite[Equation (11.17b)]{MR1465184}, take $\theta=1$ therein, and note that there is a constant $\Lambda_k>0$ so that $\EN(x,p)\leq \Lambda_k p'p$ in $\ol{E}_k$ (this is the $\Lambda$ of \cite[Equation (11.17b)]{MR1465184}). Next, define
\begin{equation}\label{E:lieb_D}
D_k(x,z,p) \dfn \frac{1}{\EN(x,p)}\left(\Lambda_k p'p + |p|\left(|\nabla_p \EN(x,p)| + |\nabla_p \check{a}^n(x,z,p)|\right)\right).
\end{equation}
From Lemma \ref{L:lieb_A_B_C} below we see the quantity $D^\infty_k$ of \eqref{E:lieb_D_inf} is finite and hence \cite[Equation (11.17b)]{MR1465184} holds for all $Q=(0,T)\times B(x_0,R)$ with $x_0 \in E_{k}$ and $R>0$ sufficiently small.  In fact, let $x_0\in E_{k-1}$ and set $R = \textrm{dist}(E_{k-1},\partial E_k)$.  For such $x_0,R$ we have \cite[Equation (11.17b)]{MR1465184} and as such, for $n\geq k+1$ we deduce from \cite[Theorem 11.3(b)]{MR1465184} that
\begin{equation*}
|\nabla G^n(t,x_0)| = |\nabla v^n(x_0,T-t)| \leq c_3\left(1+\left(\frac{\textrm{osc}_{\Omega_k} v^n}{R}\right)\right).
\end{equation*}
Here, $c_3$ depends on $A^k_\infty,B^k_\infty,C^k_\infty$ and on $\sup_{BQ}|\nabla v^n|$.  However, as defined in \cite[Chapter 2.1]{MR1465184}, $BQ = \cbra{0}\times B(x_0,R)$ so that $\sup_{BQ}|\nabla v^n| \leq \sup_{\ol{E}_k}|\nabla\phi|$, since for $n\geq k+1$ we have $v^n = \phi$ on $\cbra{0}\times \ol{E}_k$. Lastly, the quantity $\textrm{osc}_{\Omega_k} v^n$ is defined in \cite[Section 4.1]{MR1465184} but is bounded above in our case by
\begin{equation*}
\sup_{\Omega_k} v^n - \inf_{\Omega_k} v^n = \sup_{\Omega_k} G^n - \inf_{\Omega_k} G^n \leq C(k)-\ul{\phi},
\end{equation*}
where the last equality comes from \eqref{E:G_n_loc_unif_ub} and Proposition \ref{P:G_n_global_min}. Putting all this together and using the uniform continuity of $\nabla G^n$ on $\Omega_k$ we obtain $\sup_{n\geq k+1}\sup_{0\leq t\leq T,x\in\ol{E}_{k-1}}|\nabla G^n(t,x)| \leq C(k)$. If we re-index this by moving $k$ to $k-1$ we obtain \eqref{E:G_n_gradient_loc_unif}.

\subsubsection*{Step 2:} Use \eqref{E:G_n_gradient_loc_unif}  and \cite[Theorem 4, Ch 7, Section 2]{MR0181836} to show
\begin{equation}\label{E:G_n_holder_gradient_loc_unif}
\sup_{n\geq k+3}\left(\left[G^n\right]_{\beta,\Omega_k} + \left[\nabla G^n\right]_{\beta,\Omega_k}\right) \leq C(k),
\end{equation}
where $[\cdot]_{\beta,\Omega_k}$ is defined in \eqref{E:parabolic_holder_norms}. Indeed, fix $k\geq 2$ and $n\geq k+2$.  Now, instead of applying \cite[Theorem 4, Ch 7, Section 2]{MR0181836} directly to $v^n(t,x) = G^n(T-t,x)$ we will apply it to a truncated version of $v^n$ given by
\begin{equation}\label{E:truncated_v_n}
\check{v}^n\dfn \chi_k\left(v^n - \phi\right)(t,x);\quad (t,x)\in\Omega_k.
\end{equation}
Note that $\check{v}^n\in H_{2+\beta,\Omega_k}$ and that $\check{v}^n = 0$ on $\Gamma_k$. Using the PDE for $v^n$ in \eqref{E:lieb_PDE_form} it follows that $\check{v}^n$ solves the PDE
\begin{equation}\label{E:check_v_n_pde}
-\check{v}^n_t + \frac{1}{2}\tr\left(AD^2\check{v}^n\right) = f,
\end{equation}
where
\begin{equation}\label{E:holder_f_def}
f \dfn -\chi_k \check{a}^n + v^n\frac{1}{2}\tr\left(AD^2\chi_k\right) + (\nabla v^n)'A\nabla\chi_k - \frac{1}{2}\tr\left(AD^2(\chi_k\phi)\right).
\end{equation}
Note that $f$ vanishes at $\cbra{0}\times\partial E_k$, and moreover, since
\begin{enumerate}[(1)]
\item $|\chi_k|_{0,\Omega_k} + \sum_{i=1}^d |\partial_{x_i}\chi_k|_{0,\Omega_k} + \sum_{i,j=1}^d |\partial^2_{x_i,x_j}\chi_k|_{0,\Omega_k} \leq C(k)$ by construction.
\item $\ul{\phi} \leq v^n \leq C(k)$ by Proposition \ref{P:G_n_global_min} and \eqref{E:G_n_loc_unif_ub}.
\item $|\nabla v^n|_{0,\Omega_k} \leq C(k)$ by \eqref{E:G_n_gradient_loc_unif}.
\item $\phi\in C^{2,\beta}(E)$ hence $\phi\in H_{2+\beta,\Omega_k}$ by assumption,
\end{enumerate}
we know that
\begin{equation}\label{E:f_unif_bound}
|f|_{0,\Omega_k} \leq C(k).
\end{equation}
Thus, by \cite[Theorem 4, Ch 7, Section 2]{MR0181836} we obtain $[\check{v}^n]_{\beta,\Omega_k} + [\nabla\check{v}^n]_{\beta,\Omega_k} \leq C(k)|f|_{0,\Omega_k}$. Now, since $v^n = \phi + \check{v}^n$ in $\Omega_{k-1}$  the triangle inequality implies
\begin{equation*}
\begin{split}
[v^n]_{\beta,\Omega_{k-1}} + [\nabla v^n]_{\beta,\Omega_{k-1}} &\leq [\phi]_{\beta,\Omega_{k-1}} + [\nabla\phi]_{\beta,\Omega_{k-1}} + [\check{v}^n]_{\beta,\Omega_{k-1}} + [\nabla\check{v}^n]_{\beta,\Omega_{k-1}};\\
&\leq |\phi|_{2+\beta,\Omega_{k-1}} + C(k)|f|_{0,\Omega_{k}};\\
&\leq C(k).
\end{split}
\end{equation*}
This held for $n\geq k+2$. Replacing $k-1$ with $k$ we see that for $n\geq k+3$ we have
\begin{equation*}
[G^n]_{\beta,\Omega_k} + [\nabla G^n]_{\beta,\Omega_{k}} = [v^n]_{\beta,\Omega_k} + [\nabla v^n]_{\beta,\Omega_{k}} \leq C(k).
\end{equation*}
which is what we wanted to show.

\subsubsection*{Step 3}: Use \eqref{E:G_n_holder_gradient_loc_unif}  and \cite[Theorem (5.14)]{MR1465184} to show
\begin{equation}\label{E:G_n_holder_full_loc_unif}
\sup_{n\geq k+4} |G^n|_{2+\beta,\Omega_k} \leq C(k).
\end{equation}
Let $n\geq k+3$. We retain the notation $\check{v}^n$ from the previous step.  Since $\check{v}^n$ satisfies the linear parabolic PDE in \eqref{E:check_v_n_pde} in $\Omega_k$ with boundary condition $\check{v}^n = 0$ on $\Gamma_k$, it follows by the well-known existence results on linear parabolic PDE (see, for example, \cite[Theorem 5.14]{MR1465184}) that
\begin{equation*}
|\check{v}^n|_{2+\beta,\Omega_k} \leq C(k)|f|_{\beta,\Omega_k} = C(k)\left(|f|_{0,\Omega_k} + [f]_{\beta,\Omega_k}\right).
\end{equation*}
By \eqref{E:f_unif_bound} we already know that $|f|_{0,\Omega_k} \leq C(k)$.  However, from \eqref{E:holder_f_def}, \eqref{E:G_n_holder_gradient_loc_unif} and the regularity of the model coefficients and $\chi_k$ it is easily seen that $[f]_{\beta,\Omega_k} \leq C(k)$.  This yields that $|\check{v}^n|_{2+\beta,\Omega_k} \leq C(k)$. Since $\check{v}^n = v^n$ in $\Omega_{k-1}$ we see that
\begin{equation*}
|G^n|_{2+\beta,\Omega_{k-1}} = |v^n|_{2+\beta,\Omega_{k-1}} \leq |\check{v}^n|_{2+\beta,\Omega_k} \leq C(k).
\end{equation*}
Thus, replacing $k-1$ with $k$ it follows from $n\geq k+4$ that $|G^n|_{2+\beta,\Omega_k} \leq C(k)$, which is precisely \eqref{E:G_n_holder_full_loc_unif}.

\subsubsection*{Step 4:} Use \eqref{E:G_n_holder_full_loc_unif} to obtain the solution $G$ to \eqref{E:G_PDE}.  This proof is standard and follows a diagonal sub-sequence argument.  Indeed, fix an integer $k_0$.  By \eqref{E:G_n_holder_full_loc_unif} applied to $k=k_0$ we may extract a sub-sequence $G^{n_{k_0}}$ which converges in $|\cdot|_{2+\beta,\Omega_{k_0}}$ to a function $G^{k_0}$.  Clearly, $G^{k_0}$ solves \eqref{E:G_PDE} in $\Omega_{k_0}$ with $G^{k_0}(T,\cdot) = \phi$ on $E_{k_0}$. Then for $k=k_0+1$ we may take a further subsequence $G^{n_k}$ which converges in $|\cdot|_{2+\beta,\Omega_k}$ to a function $G^k$ satisfying the PDE in $\Omega_k$.  By construction $G^{k_0} = G^{k}$ in $\Omega_{k_0}$ and hence setting $G$ as this common function, $G$ is well defined in $\Omega_k$.  Repeating this process for $k_0,k_0+1,...$ we obtain a function $G$ which solves the full PDE in $(0,T)\times E$ with correct boundary condition.  It is clear that $G\in H_{2+\beta,\Omega_k}$ for each $k$ and hence $G\in H_{2+\beta,(0,T)\times E,\textrm{loc}}$.  This finished the proof.

\end{proof}

\subsection{Unwinding the localization: probabilistic results}\label{SS:unwind_prob}  We now provide probabilistic results complementary to the analytic ones in Section \ref{SS:unwind_anal}.  The first lemma is similar in spirit to Lemmas \ref{L:BR_Girsanov}, \ref{L:mkt_px_of_risk_n}, but contains an additional statement concerning the dual problem to \eqref{E:util_funct}.  The dual problem is (recall we are starting at $t\leq T$ and the factor process satisfies $X_t = x$):
\begin{equation}\label{E:dual_util_funct}
v(t,x;\phi)\dfn \inf_{\qprob\in\tM}\left(\espalt{}{Z^{\qprob}_T\log\left(Z^{\qprob}_T\right)} + \alpha\espalt{}{Z^{\qprob}_T 1_{\delta>T}\phi(X_T)}\right);\qquad Z^{\qprob}_T =  \frac{d\qprob}{d\prob}\bigg|_{\G_T}.
\end{equation}

\begin{lemma}\label{L:dual_opt_structure}
Assume that $\tM\neq\emptyset$ and hence there is a unique optimizer $\hat{\qprob}\in\tM$ to the right hand side of \eqref{E:dual_util_funct} (c.f., \cite[Theorem 1.1]{MR2489605}. Then $Z^{\hat{\qprob}}$ must be of the form for $t\leq s\leq T$:
\begin{equation}\label{E:Z_Q_structure}
Z^{\hat{\qprob}}_s = \EN\left(\int_t^{\cdot} 1_{u\leq T\wedge\delta} A_u'dW_u + \int_t^\cdot 1_{u\leq T\wedge\delta} B_u dW^0_u + \int_{t}^\cdot 1_{u\leq T\wedge\delta} C_u dM_u\right)_s,
\end{equation}
where $A, B, C$ are $\filt^{W,W^0}$ predictable processes such that
\begin{equation}\label{E:mkt_px_of_risk}
0 = (\mu-\gamma)(X_u) + (\sigma\rho)(X_u)'A_u + (\sigma\sqrt{1-\rho'\rho})(X_u)B_u - \gamma(X_u)C_u,
\end{equation}
for $\prob\times\textrm{leb}_{[t,T]}$ almost every $(\omega,u)$.
\end{lemma}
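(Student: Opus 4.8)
\emph{Strategy.} The statement packages three things: the general form of densities on $\G_T$, the market--price--of--risk (MPR) constraint forced by the local martingale property of $S$, and --- the actual content --- the fact that the \emph{optimal} density $Z^{\hat{\qprob}}$ carries no dynamics after $T\wedge\delta$. I would establish them in that order.

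\emph{Density representation and the MPR identity.} The $\G_T$--analogue of Lemma \ref{L:BR_Girsanov} (predictable representation for $(W,W^0,M)$ under $\filtg$; see \cite{bielecki2004credit}) says every $\qprob\sim\prob$ on $\G_T$ has $Z^{\qprob}_\cdot = \EN\big(\int_t^\cdot \bar A_u'\,dW_u + \int_t^\cdot \bar B_u\,dW^0_u + \int_t^\cdot \bar C_u\,dM_u\big)$ for $\filtg$--predictable $\bar A,\bar B,\bar C$ with $\bar C>-1$. Applying this to $\hat{\qprob}$ and running Girsanov as in the proof of Lemma \ref{L:mkt_px_of_risk_n}, the predictable finite--variation part of $S$ under $\hat{\qprob}$ computed from \eqref{E:S_def} is $\int_t^\cdot 1_{u\le\delta}S_{u-}\big((\mu-\gamma) + (\sigma\rho)'\bar A_u + (\sigma\sqrt{1-\rho'\rho})\bar B_u - \gamma\bar C_u\big)(X_u)\,du$; as $\hat{\qprob}\in\M$ and $S_{u-}>0$ on $\{u\le\delta\}$, this gives \eqref{E:mkt_px_of_risk} for $\prob\times\Leb_{[t,T]}$--a.e.\ $(\omega,u)$ with $u\le\delta$.

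\emph{The crux: freezing the optimal density at $\delta$.} To show $\bar A,\bar B,\bar C$ vanish after $T\wedge\delta$, perturb an arbitrary $\qprob\in\tM$ by stopping its density: put $\tilde Z_T \dfn Z^{\qprob}_{T\wedge\delta} = \condespalt{}{Z^{\qprob}_T}{\G_{T\wedge\delta}}$ (optional sampling of the $\filtg$--martingale $Z^{\qprob}$ at the bounded stopping time $T\wedge\delta$) and let $\tilde{\qprob}$ be the induced measure. Then: (i) $\tilde Z_T>0$, $\espalt{}{\tilde Z_T}=1$, so $\tilde{\qprob}\sim\prob$ on $\G_T$; (ii) because $S$ jumps to $0$ at $\delta$, $S_\cdot Z^{\qprob}_\cdot = S_\cdot \tilde Z_\cdot$ on $[t,T]$, so $S\tilde Z$ is a $\prob$--local martingale and $\tilde{\qprob}\in\M$; (iii) conditional Jensen for $x\mapsto x\log x$ gives $\relent{\tilde{\qprob}}{\prob}\le\relent{\qprob}{\prob}<\infty$, so $\tilde{\qprob}\in\tM$; (iv) since $\tilde Z_T = Z^{\qprob}_T$ on $\{\delta>T\}$, the linear term in \eqref{E:dual_util_funct} is unchanged, $\espalt{}{\tilde Z_T 1_{\delta>T}\phi(X_T)} = \espalt{}{Z^{\qprob}_T 1_{\delta>T}\phi(X_T)}$. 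Hence the dual objective at $\tilde{\qprob}$ never exceeds that at $\qprob$; taking $\qprob = \hat{\qprob}$ and using uniqueness of the minimizer (\cite[Theorem 1.1]{MR2489605}) forces $\tilde{\hat{\qprob}} = \hat{\qprob}$, i.e.\ $Z^{\hat{\qprob}}_T = Z^{\hat{\qprob}}_{T\wedge\delta}$, so $Z^{\hat{\qprob}}$ is constant on $[T\wedge\delta,T]$.

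\emph{Conclusion and main difficulty.} Constancy of $Z^{\hat{\qprob}}$ past $T\wedge\delta$ together with $Z^{\hat{\qprob}}_->0$ forces $\bar A_u=\bar B_u=0$ for a.e.\ $u>T\wedge\delta$, while $\bar C$ there is irrelevant since $M$ is constant past $\delta$; so the indicators $1_{u\le T\wedge\delta}$ may be inserted freely, giving \eqref{E:Z_Q_structure}. As all integrands then live on $\{u\le T\wedge\delta\}\subseteq\{u\le\delta\}$, the classical description of $\filtg$--predictable processes before the totally inaccessible time $\delta$ (\cite{bielecki2004credit}) lets me replace $\bar A,\bar B,\bar C$ by $\filt^{W,W^0}$--predictable $A,B,C$, and I then fix their (immaterial) values on $\{u>\delta\}$ so that \eqref{E:mkt_px_of_risk} holds $\prob\times\Leb_{[t,T]}$--a.e.\ on all of $[t,T]$. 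I expect the main obstacle to be items (ii)--(iii) above --- confirming the $\delta$--stopped density still defines a measure in $\tM$, in particular the $\prob$--local martingale property of $S\tilde Z$ obtained from $S\equiv 0$ after $\delta$ --- together with the measurability bookkeeping needed to pass from $\filtg$-- to $\filt^{W,W^0}$--predictable integrands at the jump time $\delta$.
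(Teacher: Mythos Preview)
Your strategy is correct and matches the paper's: represent the density via the $(W,W^0,M)$ martingale representation, obtain the market--price--of--risk relation from the local martingale property of $S$, and show the optimal density must be stopped at $T\wedge\delta$ by comparing $\qprob$ with the measure induced by $Z^{\qprob}_{T\wedge\delta}$ and invoking uniqueness of the dual optimizer. Your verification in (ii) that $\tilde{\qprob}\in\M$ via $S_\cdot Z^{\qprob}_\cdot = S_\cdot\tilde Z_\cdot$ (because $S\equiv 0$ after $\delta$) is in fact cleaner than what the paper writes; the paper jumps straight to the entropy inequality without making this explicit.

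There is one point where your wording hides real work. You propose to ``fix their (immaterial) values on $\{u>\delta\}$'' so that \eqref{E:mkt_px_of_risk} holds on all of $[t,T]$. But $\{u>\delta\}$ is not $\filt^{W,W^0}$--measurable, so you cannot simply redefine $A,B,C$ there while preserving $\filt^{W,W^0}$--predictability. The paper handles this differently: it observes that the defect $Y_u \dfn (\mu-\gamma)(X_u)+(\sigma\rho)(X_u)'A_u+(\sigma\sqrt{1-\rho'\rho})(X_u)B_u-\gamma(X_u)C_u$ is itself $\filt^{W,W^0}$--predictable, that $1_{u\le T\wedge\delta}Y_u=0$ holds $\prob\times\Leb$--a.e., and that $\prob[\delta>T\,|\,\F^{W,W^0}_\infty]=e^{-\int_t^T\gamma(X_v)\,dv}>0$ a.s.; these together force $Y=0$ $\prob\times\Leb$--a.e.\ on all of $[t,T]$ with \emph{no redefinition needed}. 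Your idea can be repaired along the same lines (e.g.\ replace $C$ by $C'_u\dfn[(\mu-\gamma)+(\sigma\rho)'A+(\sigma\sqrt{1-\rho'\rho})B](X_u)/\gamma(X_u)$ and then argue, exactly as above, that $C'=C$ $\prob\times\Leb$--a.e.), but the conditional--positivity argument is unavoidable --- it is what transfers the identity from the random set $\{u\le T\wedge\delta\}$ to the whole strip.
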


\begin{proof}[Proof of Lemma \ref{L:dual_opt_structure}]
The same arguments as in Lemmas \ref{L:BR_Girsanov}, \ref{L:mkt_px_of_risk_n} show that if $\qprob\in\M$ then for $t\leq s\leq T$:
\begin{equation*}
Z^{\qprob}_s = \EN\left(\int_t^{\cdot} \tilde{A}_u'dW_u + \int_t^\cdot \tilde{B}_u dW^0_u + \int_{t+}^\cdot \tilde{C}_u dM_u\right)_s,
\end{equation*}
where $\tilde{A},\tilde{B},\tilde{C}$ are $\filtg$ predictable and where for $\prob\times\textrm{leb}_{[t,T]}$ almost every $(\omega,s)$
\begin{equation*}
0 = 1_{u\leq T\wedge\delta}\left((\mu-\gamma)(X_u) + \sigma\rho(X_u)'\tilde{A}_u + \sigma\sqrt{1-\rho'\rho}(X_u)\tilde{B}_u - \gamma(X_u)\tilde{C}_u\right).
\end{equation*}
We first claim that for $\qprob\in\tM$, $\qprob$ cannot be optimal for the dual problem unless $\tilde{A},\tilde{B},\tilde{C}$ are all stopped at $T\wedge\delta$.  Indeed, $\qprob\in\tM$ implies that $Z^\qprob$ as above is a martingale and $\espalt{}{Z^\qprob_T\log\left(Z^\qprob_T\right)} < \infty$. Thus, $Z^\qprob \log(Z^\qprob)$ is a sub-martingale. As such, since $1_{\delta>T}\phi(X_T) = 1_{\delta>T}\phi(X_{T\wedge\delta})$ is in $\G_{T\wedge\delta}$ we have by the sub-martingale property and optional sampling theorem that
\begin{equation*}
\espalt{}{Z^{\qprob}_T\log\left(Z^{\qprob}_T\right)} + \alpha\espalt{}{Z^{\qprob}_T 1_{\delta>T}\phi(X_T)} \geq \espalt{}{Z^{\qprob}_{T\wedge\delta}\log\left(Z^{\qprob}_{T\wedge\delta}\right)} + \alpha\espalt{}{Z^{\qprob}_{T\wedge\delta} 1_{\delta>T}\phi(X_{T\wedge\delta})}.
\end{equation*}
Thus, any optimizer must lie in the class where $\espalt{}{Z^{\qprob}_T\log\left(Z^{\qprob}_T\right)} = \espalt{}{Z^{\qprob}_{T\wedge\delta}\log\left(Z^{\qprob}_{T\wedge\delta}\right)}$,  which implies that almost surely $Z^{\qprob}_T\log\left(Z^{\qprob}_T\right) = Z^{\qprob}_{T\wedge\delta}\log\left(Z^{\qprob}_{T\wedge\delta}\right)$.
Hence, by uniqueness of the optimizer it must hold that the associated $\tilde{A},\tilde{B},\tilde{C}$ are of the form
\begin{equation}
\tilde{A}_\cdot 1_{\cdot\leq T\wedge\delta}, \ \tilde{B}_\cdot 1_{\cdot\leq T\wedge\delta}, \ \tilde{C}_\cdot 1_{\cdot\leq T\wedge\delta}.
\end{equation}
Now, so far, $\tilde{A},\tilde{B},\tilde{C}$ need only be $\filtg$ predictable.  But, as shown in \cite[Chapter 5]{bielecki2004credit}, due to the specific structure of $\filtg$ we have that $\tilde{A},\tilde{B},\tilde{C}$ coincide with $\filt^{W,W^0}$ predictable process $A,B,C$ on the interval $[t,T\wedge \delta)$. It therefore follows that $\prob\times\textrm{leb}_{[t,T]}$ almost surely
\begin{equation*}
\begin{split}
0 &= 1_{u <  T\wedge\delta}\left((\mu-\gamma)(X_u) + (\sigma\rho)(X_u)'A_u + (\sigma\sqrt{1-\rho'\rho})(X_u)B_u - \gamma(X_u)C_u\right);\\
&= 1_{u \leq  T\wedge\delta}\left((\mu-\gamma)(X_u) + (\sigma\rho)(X_u)'A_u + (\sigma\sqrt{1-\rho'\rho})(X_u)B_u - \gamma(X_u)C_u\right)\\
&\qquad - 1_{u =  T\wedge\delta}\left((\mu-\gamma)(X_u) + (\sigma\rho)(X_u)'A_u + (\sigma\sqrt{1-\rho'\rho})(X_u)B_u - \gamma(X_u)C_u\right).\\
\end{split}
\end{equation*}
Define the $\filt^{W,W^0}$ predictable process
\begin{equation*}
Y_u \dfn \left((\mu-\gamma)(X_u) + (\sigma\rho)(X_u)'A_u + (\sigma\sqrt{1-\rho'\rho})(X_u)B_u - \gamma(X_u)C_u\right);\quad t\leq u\leq T.
\end{equation*}
For any $\eps>0$ we have $\prob\times\textrm{leb}_{[t,T]}\bra{1_{u=T\wedge\delta}Y_u \geq \eps} \leq \prob\times\textrm{leb}_{[t,T]}\bra{u=T\wedge\delta}$. But, for each $u\in [t,T)$ we know $\prob\bra{u=T\wedge\delta} = \espalt{}{\condespalt{}{1_{u=\delta}}{\F^{W,W^0}_\infty}} =0$, since $\delta$ has a density conditional on $\F^{W,W^0}_\infty$.  As $\cbra{u=T}$ has Lebesgue measure $0$ we see that $\prob\times\textrm{leb}_{[t,T]}\bra{1_{u=T\wedge\delta}Y_u \geq \eps} = 0$ for all $\eps>0$.  A similar argument shows $\prob\times\textrm{leb}_{[t,T]}\bra{1_{u=T\wedge\delta}Y_u \leq -\eps} = 0$ and hence $\prob\times\textrm{leb}_{[t,T]}$ almost surely
\begin{equation}\label{E:mkt_px_cond}
\begin{split}
0 &= 1_{u \leq  T\wedge\delta}\left((\mu-\gamma)(X_u) + (\sigma\rho)(X_u)'A_u + (\sigma\sqrt{1-\rho'\rho})(X_u)B_u - \gamma(X_u)C_u\right)
\end{split}
\end{equation}
Now with $Y$ as above assume there is an open interval $(a,b)\subset [t,T]$, a set $A\in\F^{W,W^0}$, and a constant $\eps > 0$ so that $Y_u\geq \eps$ on $A\times(a,b)$ and $\prob\bra{A} > 0$. We then have
\begin{equation*}
\cbra{1_{u\leq T\wedge\delta}, Y_u\geq \eps} \supseteq (A\times (a,b)) \cap (\cbra{\delta>T}\times [t,T]) = (A\cap\cbra{\delta > T})\times (a,b).
\end{equation*}
Clearly, $\textrm{leb}_{[t,T]}(a,b) = (b-a)/(T-t) > 0$.  Additionally, $\prob\bra{A\cap\cbra{T>\delta}} = \espalt{}{1_A e^{-\int_0^T \gamma(X_u)du}} > 0$, where the last inequality follows since $\prob\bra{A}>0$ and $\gamma$ is continuous.  This contradicts \eqref{E:mkt_px_cond}.  A similar argument for $Y\leq -\eps$ shows that $\prob\times\textrm{leb}_{[t,T]}$ almost surely that $Y = 0$, finishing the result.
\end{proof}

We next present a probabilistic counterpart to Proposition \ref{P:G_n_bounds_give_soln}, which yields a candidate dual optimizer as well as an upper bound for the certainty equivalent.

\begin{proposition}\label{P:unwind_prob_ub}
Let $G^n$ be the unique solution to \eqref{E:G_PDE_local} from Propositions \ref{P:local_pde_exist} and \ref{P:opt_invest_local}.  As in Proposition \ref{P:G_n_bounds_give_soln}, assume for each $k\in\nats$ that
\begin{equation}\label{E:G_n_loc_unif_ub_prob}
\sup_{n\geq k+1}\sup_{0\leq t\leq T, x\in \ol{E}_k} G^n(t,x) = C(k) < \infty.
\end{equation}
Let $G$ denote the solution to \eqref{E:G_PDE} from Proposition \ref{P:G_n_bounds_give_soln} and recall the subsequence (still labeled $n$) such that $G^n$ converges to $G$ in $H_{2+\beta,(0,T)\times E,\textrm{loc}}$. Fix $t\in[0,T], x\in E$, and for this $G$, define $\hpi$ as in \eqref{E:opt_pi} and $\hz$ as in \eqref{E:hat_Z}.  Then
\begin{enumerate}[(1)]
\item $\hz$ is the density process of a measure $\hqprob\in\tM$.
\item $\We^{\hpi}$ is a $\hqprob$ \underline{sub}-martingale.
\item For $u(t,x;\phi)$ as in \eqref{E:util_funct}
\begin{equation}\label{E:util_funct_ub}
-\frac{1}{\alpha}\log\left(-u(t,x)\right) \leq G(t,x).
\end{equation}
\end{enumerate}
\end{proposition}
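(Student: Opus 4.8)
The three claims follow by passing to the limit in the localized picture of Proposition~\ref{P:opt_invest_local}, the hypothesis \eqref{E:G_n_loc_unif_ub_prob} being needed only to bound the relative entropies $\relent{\hqprob^n}{\prob}$ uniformly in $n$. Fix $(t,x)$, say $x\in E_k$, and recall that $\tau^n\uparrow\infty$ a.s.\ (non-explosion of the well-posed diffusion $X$ within $E$), that $G\in H_{2+\beta,\Omega_j}$ for every $j$ (hence $G\in C^{1,2}$ with locally bounded gradient), and that $G\geq\ul{\phi}$ on $E$ (Proposition~\ref{P:G_n_global_min} and $G^n\to G$ locally uniformly). An \ito\ computation identical to the one producing \eqref{E:Z_n_form}, but with $\chi_n\equiv1$ and $G$ in place of $G^n$, shows that $\hz$ is a strictly positive local martingale with $\hz_t=1$, $d\hz_s/\hz_{s-}=1_{s\leq\delta}\pare{A_s'dW_s+B_sdW^0_s+C_sdM_s}$, $A_s=-\alpha(\hpi\sigma\rho+a\nabla G)(s,X_s)$, $B_s=-\alpha(\hpi\sigma\sqrt{1-\rho'\rho})(s,X_s)$, $C_s=(e^{\alpha(\hpi+G)}-1)(s,X_s)$, and, since $G$ solves \eqref{E:G_PDE}, the triple $(A,B,C)$ satisfies the market-price-of-risk relation \eqref{E:mkt_px_of_risk}; in particular $\hz$ is a $\prob$-supermartingale, and once $\hz$ is shown to be a true martingale the induced measure $\hqprob$ lies in $\M$ by (the argument of) Lemma~\ref{L:dual_opt_structure}.

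To prove (1), combine \eqref{E:hat_Z_n}, the local PDE \eqref{E:G_PDE_local} at $T\wedge\tau^n$, and the fact (Proposition~\ref{P:opt_invest_local}) that $\hz^n\We^{\hpi^n}$ is a $\prob$-martingale; taking $\prob$-expectations gives
\begin{equation}\label{E:unwind_ub_ent_id}
\relent{\hqprob^n}{\prob}+\alpha\,\espalt{\hqprob^n}{1_{T\wedge\tau^n<\delta^n}(\chi_n\phi)(X_{T\wedge\tau^n})}=\alpha G^n(t,x).
\end{equation}
As $\phi$ is bounded and $G^n(t,x)\leq C(k)$ by \eqref{E:G_n_loc_unif_ub_prob}, we get $\sup_n\relent{\hqprob^n}{\prob}<\infty$, so $\{\hz^n_{T\wedge\tau^n}\}_n$ is uniformly integrable (de la Vall\'ee--Poussin). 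On the event $\{\tau^m\geq T\}$, for $n>m$ the truncation $\chi_n$ is inactive along the path and $\delta^n\wedge T=\delta\wedge T$, while $\hpi^n\to\hpi$ uniformly on $[0,T]\times\ol{E}_m$ (from $G^n\to G$ in $H_{2+\beta,\Omega_m}$, so $\nabla G^n\to\nabla G$ and $\theta_{G^n}\to\theta_G$ uniformly there); by stability of stochastic integrals under uniform convergence of integrands, $\We^{\hpi^n}_{T\wedge\tau^n}\to\We^{\hpi}_T$ on that event, and since $\prob[\tau^m\geq T]\uparrow1$ we obtain $\hz^n_{T\wedge\tau^n}\to\hz_T$ in probability, hence a.s.\ along a subsequence, hence in $L^1(\prob)$ by uniform integrability. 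Thus $\espalt{}{\hz_T}=\lim_n\espalt{}{\hz^n_{T\wedge\tau^n}}=1$, so $\hz$ — a positive supermartingale with $\hz_t=1$ — is a true martingale and $\hqprob\in\M$; lower semicontinuity of relative entropy along $\hqprob^n\to\hqprob$ gives $\relent{\hqprob}{\prob}\leq\liminf_n\relent{\hqprob^n}{\prob}<\infty$, so $\hqprob\in\tM$. Passing to the limit in \eqref{E:unwind_ub_ent_id} (the boundary term converges by boundedness of $\phi$ and $L^1$-convergence of the densities) yields
\begin{equation}\label{E:unwind_ub_key}
\relent{\hqprob}{\prob}+\alpha\,\espalt{\hqprob}{1_{\delta>T}\phi(X_T)}\leq\alpha G(t,x).
\end{equation}

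For (2): by \eqref{E:mkt_px_of_risk} the $\hqprob$-drift of $\We^{\hpi}$ vanishes (cf.\ Lemma~\ref{L:BR_Girsanov}), so $\We^{\hpi}$ is a $\hqprob$-local martingale. From \eqref{E:hat_Z} and $G\geq\ul{\phi}$ we have $\We^{\hpi}_s\leq G(t,x)-\ul{\phi}-\tfrac1\alpha\log\hz_s\leq G(t,x)-\ul{\phi}+\tfrac1\alpha\log N^*$ for all $s\in[t,T]$, where $N^*\dfn\sup_{t\leq s\leq T}(1/\hz_s)\geq1$; since $1/\hz$ is a positive $\hqprob$-martingale, Doob's maximal inequality gives $\hqprob[N^*\geq\lambda]\leq1/\lambda$, hence $\espalt{\hqprob}{\log N^*}\leq\int_0^\infty e^{-t}\,dt=1$. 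Therefore $\We^{\hpi}$ is bounded above on $[t,T]$ by a single $\hqprob$-integrable random variable, so $-\We^{\hpi}$ is a $\hqprob$-local martingale bounded below by an integrable random variable, hence a $\hqprob$-supermartingale; equivalently $\We^{\hpi}$ is a $\hqprob$-submartingale.

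Finally (3) is weak duality. For $\pi\in\mathcal{A}$, the Fenchel inequality $-e^{-\alpha w}\leq\tfrac y\alpha(\log\tfrac y\alpha-1)+wy$ with $w=\We^{\pi}_T+1_{\delta>T}\phi(X_T)$ and $y=y\hz_T$, integrated against $\prob$ and using $\espalt{\hqprob}{\We^{\pi}_T}\leq0$ (admissibility) and $\espalt{}{\hz_T}=1$, then optimized over the scalar $y>0$, yields
\begin{equation*}
\espalt{}{-e^{-\alpha(\We^{\pi}_T+1_{\delta>T}\phi(X_T))}}\leq -e^{-\pare{\relent{\hqprob}{\prob}+\alpha\espalt{\hqprob}{1_{\delta>T}\phi(X_T)}}}\leq -e^{-\alpha G(t,x)},
\end{equation*}
the last inequality being \eqref{E:unwind_ub_key}. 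Taking the supremum over $\pi\in\mathcal{A}$ gives $u(t,x)\leq-e^{-\alpha G(t,x)}$, which is \eqref{E:util_funct_ub}. The main obstacle is the limiting step in (1): the hypothesis \eqref{E:G_n_loc_unif_ub_prob} is precisely what makes $\sup_n\relent{\hqprob^n}{\prob}$ finite and hence, through uniform integrability, prevents loss of mass ($\espalt{}{\hz_T}<1$) when the localization is removed — without it, $\hz$ need not be the density process of a measure equivalent to $\prob$.
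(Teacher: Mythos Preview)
Your proof is correct and follows the same overall strategy as the paper: uniformly bound $\relent{\hqprob^n}{\prob}$ via \eqref{E:G_n_loc_unif_ub_prob}, use de la Vall\'ee--Poussin to pass $\hz^n_{T\wedge\tau^n}\to\hz_T$ in $L^1$, and invoke Fatou for the limiting entropy. The implementations of (2) and (3) differ slightly from the paper's, and both variations are worth noting. For (2) the paper works under $\prob$, observing that $-\alpha\hwe_s\hz_s+\alpha\hz_sG(t,x)=\hz_s\log\hz_s+\alpha 1_{s<\delta}\hz_sG(s,X_s)$ is a $\prob$-local martingale bounded below by a \emph{constant} (namely $-e^{-1-\alpha\ul\phi}$, since $y\log y+cy\geq -e^{-1-c}$), hence a $\prob$-supermartingale; your Doob-maximal-inequality bound $\espalt{\hqprob}{\log N^*}\leq 1$ under $\hqprob$ is an equally valid and arguably cleaner alternative, at the price of an extra ingredient. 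For (3) the paper deduces \eqref{E:util_funct_ub} from the identity $\log\hz_T=-\alpha(\hwe_T-G(t,x)+1_{\delta>T}\phi(X_T))$ combined with the submartingality in (2) to get $\espalt{\hqprob}{\hwe_T}\geq 0$; your route via \eqref{E:unwind_ub_key}, obtained directly as the limit of the local entropy identities, is a small economy since (3) then stands independently of (2).
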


\begin{proof}[Proof of Proposition \ref{P:unwind_prob_ub}]
Take $k,n\geq k+1$ large enough so that $x\in\ol{E}_k$. From Proposition \ref{P:opt_invest_local}, $G^n$ is the certainty equivalent to \eqref{E:util_funct_n}.  Additionally, the process $\hz^n$ from Proposition \ref{P:opt_invest_local} defines a measure $\hqprob^n\in\tM^n$ which solves the dual problem (similar to \eqref{E:dual_util_funct}). Thus,
\begin{equation}\label{E:bdd_rel_ent}
\begin{split}
\espalt{}{\hz^n_{T\wedge\tau^n}\log\left(\hz^n_{T\wedge\tau^n}\right)} &= \alpha G^n(t,x) - \alpha \espalt{}{\hz^n_{T\wedge\tau^n}1_{\delta^n>T\wedge\tau^n}(\chi_n\phi)(X_{T\wedge\tau^n})} \leq \alpha(C(k) -\ul{\phi}),
\end{split}
\end{equation}
where the last equality follows from \eqref{E:G_n_loc_unif_ub_prob}. This shows that $\cbra{\hz^n_{T\wedge\tau^n}}_{n\geq k+1}$ is uniformly integrable.  Next, with $X=X^{t,x}$ we know almost surely that $\hpi^n(s,X_s) \rightarrow \hpi(s,X_s), t\leq s\leq T$. $\hpi$ determines the wealth process $\hwe$ with dynamics
\begin{equation*}
\begin{split}
d\hwe_s &= \hpi(s,X_s)1_{s\leq \delta}\left(\mu(X_s)ds + (\sigma\rho)(X_s)'dW_s + (\sigma\sqrt{1-\rho'\rho})(X_s)dW^0_s\right) - \hpi(s,X_s)dH_s.
\end{split}
\end{equation*}
Recall the optimal wealth process $\hwe^{n}$ from Proposition \ref{P:opt_invest_local} with dynamics
\begin{equation*}
\begin{split}
d\hwe^n_s &= \hpi^n(s,X_s)1_{s\leq \tau^n\wedge\delta^n}\left((\chi_n\mu)(X_s)ds + (\chi_n\sigma\rho)(X_s)'dW_s + (\sqrt{\chi_n}\sigma\sqrt{1-\chi_n\rho'\rho})(X_s)dW^0_s\right)\\
&\qquad -\hpi^n(s,X_s)1_{s\leq\tau^n} dH^n_s.
\end{split}
\end{equation*}
For $u\in [t,T]$ write
\begin{equation*}
\begin{split}
\mathbf{A}^n_u &\dfn \hpi(u,X_u)1_{u\leq \delta}\mu(X_u) - \hpi^n(u,X_u)1_{u\leq \tau^n\wedge\delta^n}(\chi_n\mu)(X_u);\\
\mathbf{B}^n_u &\dfn \hpi(u,X_u)1_{u\leq \delta}(\sigma\rho)(X_u) - \hpi^n(u,X_u)1_{u\leq \tau^n\wedge\delta^n}(\chi_n\sigma\rho)(X_u);\\
\mathbf{C}^n_u &\dfn \hpi(u,X_u)1_{u\leq \delta}(\sigma\sqrt{1-\rho'\rho})(X_u) - \hpi^n(u,X_u)1_{u\leq \tau^n\wedge\delta^n}(\sqrt{\chi_n}\sigma\sqrt{1-\chi_n\rho'\rho})(X_u).
\end{split}
\end{equation*}
Note that
\begin{equation*}
\begin{split}
\int_t^T \left|1_{u\leq \delta} - 1_{u\leq\tau^n\wedge\delta^n}\chi_n(X_u)\right|du &\leq \int_t^T 1_{u\leq\tau^{n-1}}\left|1_{u\leq\delta} - 1_{u\leq \delta^n}\right|\\
&\qquad  + \int_t^T 1_{u>\tau^{n-1}}\left|1_{u\leq\delta} - 1_{u\leq\tau^n\wedge\delta^n}\chi_n(X_u)\right|du;\\
&\leq \delta^n\wedge T - \delta\wedge T + 2\max\cbra{T-\tau^{n-1},0}.
\end{split}
\end{equation*}
The last inequality follows from Lemma \ref{L:delta_n_delta} below which shows $\delta^n\geq\delta$.  Thus, by Lemma \ref{L:delta_n_delta} and Assumption \ref{A:factor} we have almost surely that $\lim_{n\uparrow\infty} \int_t^T \left|1_{u\leq \delta} - 1_{u\leq\tau^n\wedge\delta^n}\chi_n(X_u)\right|du = 0$. From here, the facts that $X$ is continuous, the model coefficients are  continuous functions, and $G^n\rightarrow G$ in $H_{2+\beta,(0,T)\times E,\textrm{loc}}$ imply that almost surely
\begin{equation*}
\lim_{n\uparrow\infty} \int_t^T |\mathbf{A}^n_u|du = 0;\quad \lim_{n\uparrow\infty}\int_t^T |\mathbf{B}^n_u|^2du = 0;\quad \lim_{n\uparrow\infty} \int_t^T |\mathbf{C}^n_u|^2du = 0.
\end{equation*}
This shows that the integrals with respect to $du,dW_u$, and $dW^0_u$ in $\hwe^n$ converge in probability to the respective integrals in $\hwe$ uniformly on $[t,T]$ (see \cite[Proposition 3.2.26]{MR1121940}).  Lastly, set
\begin{equation*}
\begin{split}
\check{M}_s &\dfn \int_t^s \hpi(u,X_u)dH_u = 1_{\delta\leq s}\hpi(\delta,X_{\delta});\qquad \check{M}^n_s \dfn \int_t^s 1_{u\leq\tau^n} \hpi^n(u,X_u)dH^n_u = 1_{\delta^n\leq s\wedge\tau^n}\hpi^n(\delta^n,X_{\delta^n}).
\end{split}
\end{equation*}
Lemma \ref{L:delta_n_delta} implies $1_{\delta^n\leq s\wedge\tau^n} \rightarrow 1_{\delta < s}$ almost surely, and clearly $\hpi^n(\delta^n,X_{\delta^n})\rightarrow \hpi(\delta,X_{\delta})$ almost surely as well.  Thus, we have almost surely $\lim_{n\uparrow\infty} |\check{M}^n_s - M_s| = 1_{\delta = s} \hpi(s,X_s)$. But, as shown in the proof of Lemma \ref{L:dual_opt_structure}, $\prob\bra{\delta=s} = 0$ so that $\check{M}^n_s \rightarrow \check{M}_s$ almost surely for each $s\in [t,T]$.  Putting the above facts together gives $\hwe^n_T \rightarrow \hwe_T$ in Probability. Next, as in \eqref{E:hat_Z} define
\begin{equation}\label{E:hat_Z_a}
\hz_s \dfn  e^{-\alpha (\hwe_s - G(t,x) + 1_{s<\delta} G(s,X_s))};\quad t\leq s\leq T.
\end{equation}
In light of the proceeding we have by construction of $\hwe^n$:
\begin{equation*}
\begin{split}
\hz^n_{T\wedge\tau^n} = \hz^n_T &= e^{-\alpha(\hwe^n_T - G^n(t,x) + 1_{T\wedge\tau^n < \delta^n} (\chi_n\phi)(X_{T\wedge\tau^n}))} \rightarrow e^{-\alpha(\hwe_T - G(t,x) + 1_{T\leq \delta}\phi(X_T))};\\
&= \hz_T e^{\alpha 1_{T=\delta}\phi(X_T)}.
\end{split}
\end{equation*}
where the limit is in Probability. Again, since $\prob\bra{T=\delta} = 0$, we see $\hz^n_{T\wedge\tau^n}\rightarrow \hz_T$ in Probability. We have already shown that $\espalt{}{\hz^n_{T\wedge\tau^n}} = 1$ and  $\cbra{\hz^n_{T\wedge\tau^n}}_{n}$ is uniformly integrable.  This shows that $\espalt{}{\hz_T} = 1$.  Furthermore, a lengthly calculation using \ito's formula, exactly mirroring that in Lemma \ref{L:ITO_result} below\footnote{Indeed, the result may be recovered formally by setting $\chi_n\equiv 1, \tau^n\equiv\infty, \delta^n\equiv \delta$ therein.}, shows $\hz$ has dynamics
\begin{equation}\label{E:Z_form}
\begin{split}
\frac{d\hz_s}{\hz_{s-}} &= 1_{s\leq\delta}\left(A_s'dW_s + B_s dW^0_s\right) + C_s dM_s;\\
A_s &= -\alpha \left(\hpi\sigma\rho + a\nabla G\right)(s,X_s);\quad B_s = -\alpha \hpi\sigma\sqrt{1-\rho'\rho}(s,X_s);\quad C_s = \left(e^{\alpha\left(\hpi + G\right)(s,X_s)}-1\right).
\end{split}
\end{equation}
Also, as can be deduced from the calculations leading to \eqref{E:functional_mkt_px_of_risk_n} in Lemma \ref{L:ITO_result}, the facts that $G$ solves the PDE in \eqref{E:G_PDE} and $\hpi$ is as in \eqref{E:opt_pi} prove that $\prob\times\textrm{leb}_{[t,T]}$ almost surely
\begin{equation*}
0 = 1_{u\leq \delta\wedge T}\left((\mu-\gamma)(X_u) + \sigma\rho(X_u)'A_u + \sigma\sqrt{1-\rho'\rho}(X_u)B_u - \gamma(X_u)C_u\right),
\end{equation*}
so that $\hqprob$ defined by $d\hqprob/d\prob|_{\G_T} = \hz_T$ is in $\M$.  Fatou's Lemma and $\hz^n_{T\wedge\tau^n}\rightarrow \hz_T$ in Probability imply
\begin{equation*}
\espalt{}{\hz_T\log(\hz_T)} \leq \liminf_{n\uparrow\infty} \espalt{}{\hz^n_{T\wedge\tau^n}\log\left(\hz^n_{T\wedge\tau^n}\right)} \leq \alpha(C(k)-\ul{\phi}).
\end{equation*}
The last inequality follows from \eqref{E:bdd_rel_ent}.  This shows that $\hqprob\in\tM$ and gives statement $(1)$. Continuing, from \eqref{E:hat_Z_a} it follows that
\begin{equation}\label{E:nice_lb_trick}
-\alpha \hwe_s \hz_s + \alpha \hz_s G(t,x) = \hz_s\log(\hz_s) + \alpha 1_{s<\delta} \hz_s G(s,X_s) \geq -\frac{1}{e} + \alpha\ul{\phi},
\end{equation}
where the inequality follows via Proposition \ref{P:G_n_global_min} and $G^n \rightarrow G$.  Since $\hqprob\in\tM$ we see that $-\alpha \hwe_s\hz_s + \alpha \hz_s G(t,x)$ is a local martingale bounded from below, hence super-martingale.  Thus, since $\hz$ is a martingale we see that $\hwe\hz$ is a sub-martingale.  This gives statement $(2)$. The sub-martginality implies $\espalt{}{\hwe_T \hz_T} \geq 0$.  Therefore, using the well-known duality results we obtain from \eqref{E:util_funct}, \eqref{E:nice_lb_trick} that
\begin{equation}\label{E:big_ub}
\begin{split}
-\frac{1}{\alpha}\log\left(-u(t,x;\phi)\right) &=\inf_{\qprob\in\tM}\left(\frac{1}{\alpha}\espalt{}{Z^{\qprob}_T\log\left(Z^{\qprob}_T\right)} + \espalt{}{Z^{\qprob}_T 1_{\delta > T}\phi(X_T)}\right);\\
&\leq \frac{1}{\alpha}\espalt{}{\hz_T\log(\hz_T)} + \espalt{}{\hz_T 1_{\delta > T}\phi(X_T)};\\
&=G(t,x) - \espalt{}{\hz_T\hwe_T};\\
&\leq G(t,x).
\end{split}
\end{equation}
Thus, \eqref{E:util_funct_ub} holds, finishing the result.

\end{proof}

With the above results we are ready to finish the proof of Theorem \ref{T:main_result}.  Here we split the results according to whether  Assumption \ref{A:opt_main_ass_inc} or Assumption \ref{A:opt_main_ass_com} hold. To make the notation shorter, set
\begin{equation}\label{E:ell_n_def}
\begin{split}
\ell(x) &\dfn \left(\frac{\mu-\gamma}{\sigma}\right)(x);\qquad \ell_n(x) \dfn \frac{\sqrt{\chi_n}}{\sqrt{1-\chi_n\rho'\rho}}\left(\frac{\mu-\gamma}{\sigma}\right)(x),
\end{split}
\end{equation}
and note that Assumptions \ref{A:opt_main_ass_inc}, \ref{A:opt_main_ass_com} essentially concern the exponential integrability of $\int_0^T \ell^2(X_u)du$.

\begin{proposition}\label{P:unwind_prob_A}

Let Assumption \ref{A:opt_main_ass_inc} hold.  Then, the conclusions of Theorem \ref{T:main_result} follow.

\end{proposition}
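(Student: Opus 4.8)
The plan is to supply Propositions \ref{P:G_n_bounds_give_soln} and \ref{P:unwind_prob_ub} with their hypotheses --- a locally uniform upper bound on the localized solutions $G^n$, and the nonemptiness of $\tM$ --- and then to close the remaining gap by showing the candidate strategy $\hpi$ is admissible. For the upper bound \eqref{E:G_n_loc_unif_ub}: since $\ol{\rho}<1$ the function $\ell_n$ of \eqref{E:ell_n_def} is bounded on $\ol{E}_n$, so $Z^n_0\dfn \EN\big(-\int_t^{\cdot}\ell_n(X_u)dW^0_u\big)_{\cdot\wedge T\wedge\tau^n}$ is a true $\prob$-martingale; because $X$ is driven by $W$ and independent of $W^0$, the associated measure $\qprob^n_0$ leaves the law of $X$ unchanged, and it satisfies the market-price-of-risk identity \eqref{E:mkt_px_of_risk_n} with $A^n=C^n=0$ and $B^n=-\ell_n$, so $\qprob^n_0\in\M^n$. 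A direct computation gives $\relent{\qprob^n_0}{\prob}=\tfrac12\espalt{}{\int_t^{T\wedge\tau^n}\ell_n^2(X_u)du}\leq \tfrac{1}{2(1-\ol{\rho})}\espalt{}{\int_t^{T}\ell^2(X_u)du}$, which by Jensen's inequality and Assumption \ref{A:opt_main_ass_inc} is at most $(2\eps(1-\ol{\rho}))^{-1}\log C(k)$ uniformly over $t\in[0,T]$ and $x\in\ol{E}_k$; hence $\qprob^n_0\in\tM^n$, and feeding it into the dual representation of the localized problem recorded in Proposition \ref{P:opt_invest_local} yields $G^n(t,x)\leq \tfrac1\alpha\relent{\qprob^n_0}{\prob}+\ol{\phi}\leq C(k)$ for all $n\geq k+1$ and $(t,x)\in[0,T]\times\ol{E}_k$. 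The same exponential, un-localized and with $\ell_0\dfn(\mu-\gamma)/(\sigma\sqrt{1-\rho'\rho})$, is likewise a $\prob$-martingale of finite relative entropy, so $\tM\neq\emptyset$.

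Proposition \ref{P:G_n_bounds_give_soln} then produces, along a subsequence, a solution $G\in H_{2+\beta,(0,T)\times E,\textrm{loc}}$ of \eqref{E:G_PDE} with $G^n\to G$ in $H_{2+\beta,\textrm{loc}}$, and Proposition \ref{P:unwind_prob_ub} gives, for each $(t,x)$, that $\hz$ from \eqref{E:hat_Z} is the density of some $\hqprob\in\tM$, that $\We^{\hpi}$ is a $\hqprob$ submartingale, and that $-\tfrac1\alpha\log(-u(t,x;\phi))\leq G(t,x)$. It then remains only to show $\hpi\in\A$. Indeed, once this is known, $\espalt{}{\hz_T}=1$ together with $\hz_T=e^{-\alpha(\We^{\hpi}_T+1_{\delta>T}\phi(X_T))+\alpha G(t,x)}$ (from \eqref{E:hat_Z} and $G(T,\cdot)=\phi$) gives $\espalt{}{-e^{-\alpha(\We^{\hpi}_T+1_{\delta>T}\phi(X_T))}}=-e^{-\alpha G(t,x)}$, so $u(t,x;\phi)\geq -e^{-\alpha G(t,x)}$ and hence $-\tfrac1\alpha\log(-u(t,x;\phi))\geq G(t,x)$; combined with Proposition \ref{P:unwind_prob_ub} this forces equality, so $G$ is the certainty equivalent and $\hpi$ is optimal, while the chain \eqref{E:big_ub} collapses to $\espalt{}{\hz_T\hwe_T}=0$, which with the submartingale property makes $\hwe\hz$ a genuine martingale and identifies $\hqprob$ as the dual optimizer of \eqref{E:dual_util_funct} --- precisely the conclusions of Theorem \ref{T:main_result}.

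The hard part is thus proving $\hpi\in\A$, i.e. that $\We^{\hpi}$ is a $\qprob$ supermartingale for every $\qprob\in\tM$. Because neither $\hpi$ nor $\nabla G$ is globally bounded, $\We^{\hpi}$ is not a priori bounded below and its $\qprob$ local-martingale property does not upgrade automatically. I would exploit the identity $\We^{\hpi}_s=G(t,x)-1_{\delta>s}G(s,X_s)-\tfrac1\alpha\log\hz_s$ (read off from \eqref{E:hat_Z}), which reduces admissibility to moment control of $\hz$, of its reciprocal, and of $G(\cdot,X_\cdot)$ along the trajectories of $X$. Such control is exactly what the \emph{exponential} integrability in Assumption \ref{A:opt_main_ass_inc} --- far stronger than the first-moment bound used above for \eqref{E:G_n_loc_unif_ub} --- is designed to deliver: in practice one transfers the corresponding facts from the localized problems, where $\We^{\hpi^n}$ is a $\hqprob^n$ martingale with $\hqprob^n\in\tM^n$ dual-optimal and $\relent{\hqprob^n}{\prob}$ uniformly bounded by \eqref{E:bdd_rel_ent} and the bound above, and passes to the limit via the convergences $\hz^n_{T\wedge\tau^n}\to\hz_T$, $\We^{\hpi^n}_T\to\We^{\hpi}_T$ of Proposition \ref{P:unwind_prob_ub}, the semimartingale representation \eqref{E:Z_form} of $\hz$, Lemma \ref{L:dual_opt_structure}, and the general exponential-utility duality of \cite{MR1891730,MR1743972,MR2489605}. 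This uniform-integrability passage is where I expect the real difficulty to lie.
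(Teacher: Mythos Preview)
Your setup is correct and coincides with the paper: the measure $\qprob^n_0$ built from $\ell_n$ is in $\tM^n$ with relative entropy bounded by $\frac{1}{2(1-\ol\rho)}\espalt{x}{\int_0^T\ell^2(X_u)du}$, and this feeds Propositions~\ref{P:G_n_bounds_give_soln} and~\ref{P:unwind_prob_ub} exactly as you describe. The divergence is in how you close the gap, and there your plan has a genuine hole.

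You propose to prove $\hpi\in\A$ \emph{first} and then read off $u(t,x;\phi)\geq -e^{-\alpha G(t,x)}$ from the primal side. But proving directly that $\We^{\hpi}$ is a $\qprob$-supermartingale for \emph{every} $\qprob\in\tM$ requires control under $\qprob$, not under $\prob$. The entropic inequality you would need, $\espalt{\qprob}{\sup_s(\We^{\hpi}_s)^-}\leq K^{-1}\big(\relent{\qprob}{\prob}+\log\espalt{}{e^{K\sup_s(\We^{\hpi}_s)^-}}\big)$, demands a $\prob$-exponential moment of $\sup_s(\We^{\hpi}_s)^-$; via your identity this amounts to exponential bounds on $\sup_s G(s,X_s)^+$ and on $\sup_s(\log\hz_s)^+$. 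Neither follows from Assumption~\ref{A:opt_main_ass_inc}: that assumption gives only \emph{local} finiteness in $x$ of $\espalt{x}{e^{\eps\int_0^T\ell^2}}$, which says nothing about the tail of $G(s,X_s)$ or $\hz_s$ when $X$ wanders. Your last paragraph acknowledges the difficulty but does not resolve it, and the vague appeal to ``passing to the limit'' from $\We^{\hpi^n}$ does not help, since $\hpi^n\in\A^n$ gives supermartingality only against $\qprob^n\in\tM^n$, not against a fixed $\qprob\in\tM$.

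The paper reverses your order. It first proves the \emph{dual} inequality $G(t,x)\leq -\frac1\alpha\log(-u(t,x;\phi))$ by the following construction: for an arbitrary $\qprob\in\tM$ of the reduced form in Lemma~\ref{L:dual_opt_structure}, it builds $\qprob^n\in\tM^n$ by splicing --- keeping the kernels $(A,B,C)$ of $\qprob$ on $[t,\,a_n]$ with $a_n=\delta\wedge T\wedge\tau^{n-1}$, and switching to $B^n=-\ell_n(X)$ (and $A^n=C^n=0$) on $(a_n,\,b_n]$ with $b_n=\delta^n\wedge T\wedge\tau^n$. Lemma~\ref{L:Z_n_to_Z} gives $Z^n_{a_n}=Z_{a_n}$, and independence of $W^0$ from $(W,\tau^n,\delta^n)$ makes the extra factor harmless. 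The relative entropy decomposes as
\[
\espalt{}{Z^n_{T\wedge\tau^n}\log Z^n_{T\wedge\tau^n}}
= \espalt{}{Z_{a_n}\log Z_{a_n}} + \tfrac12\,\espalt{}{Z_{a_n}\!\int_{a_n}^{b_n}\ell_n^2(X_u)\,du},
\]
and the second term is shown to be uniformly integrable and to vanish as $n\uparrow\infty$ via the Young-type bound $xy\leq K^{-1}e^{Kx}+K^{-1}y\log y$ together with Assumption~\ref{A:opt_main_ass_inc}. Taking $n\uparrow\infty$ in $G^n(t,x)\leq \frac1\alpha\espalt{}{Z^n\log Z^n}+\espalt{}{Z^n 1_{\delta^n>T\wedge\tau^n}\chi_n\phi}$ then yields $G(t,x)\leq \frac1\alpha\relent{\qprob}{\prob}+\espalt{\qprob}{1_{\delta>T}\phi(X_T)}$ for every such $\qprob$, hence \eqref{E:big_lb}. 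Combined with Proposition~\ref{P:unwind_prob_ub} this forces equality, so $\hqprob$ is the dual optimizer. Admissibility of $\hpi$ is then a one-line consequence of the abstract theory: the primal optimizer $\check\pi$ exists, first-order conditions with the now-known dual optimizer $\hqprob$ force $\We^{\check\pi}_T=\We^{\hpi}_T$ a.s., both processes are $\hqprob$-martingales (yours is a submartingale with mean zero by \eqref{E:big_ub} collapsing to equalities), hence indistinguishable, and $\We^{\check\pi}$ is already in $\A$.

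In short: the missing idea is the splicing construction $\qprob\mapsto\qprob^n$ that pushes an arbitrary $\qprob\in\tM$ down to $\tM^n$ with asymptotically no entropy cost. That is where Assumption~\ref{A:opt_main_ass_inc} is really spent, and it lets you avoid a direct admissibility argument altogether.
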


\begin{proof}[Proof of Proposition \ref{P:unwind_prob_A}]

Recall that we have fixed a starting point $(t,x)$ for the optimal investment problems of Sections \ref{SS:opt_invest} and \ref{SS:local}. Furthermore, according to Proposition \ref{P:opt_invest_local} for each $n\in\nats$ there is a unique function $G^n\in H_{2+\beta,\Omega_n}$ solving \eqref{E:G_PDE_local}, which is the certainty equivalent to \eqref{E:util_funct_n}. Thus, by the standard duality results
\begin{equation}\label{E:standard_dual_local}
\begin{split}
G^n(t,x) &= \inf_{\qprob^n\in\tM^n}\left(\frac{1}{\alpha}\espalt{}{Z^{\qprob^n}_{T\wedge\tau^n}\log\left(Z^{\qprob^n}_{T\wedge\tau^n}\right)} + \espalt{}{Z^{\qprob^n}_{T\wedge\tau^n}1_{\delta^n>T\wedge\tau^n}\chi_n\phi(X_T\wedge\tau^n)}\right)\\
&\leq \frac{1}{\alpha} \inf_{\qprob^n\in\tM^n} \espalt{}{Z^{\qprob^n}_{T\wedge\tau^n}\log\left(Z^{\qprob^n}_{T\wedge\tau^n}\right)} + \ol{\phi}.
\end{split}
\end{equation}
Now, define
\begin{equation}\label{E:ol_Z_n_inc}
\ol{Z}^n_s \dfn \EN\left(\int_t^\cdot -\ell_n(X_u)dW^0_u\right)_s,\ t\leq s\leq T.
\end{equation}
If $\ol{Z}^n$ is the density process of a measure $\ol{\qprob}^n$ then, in the notation of Lemma \ref{L:mkt_px_of_risk_n}, we have $A^n_u \equiv 0$, $B^n_u = -\ell_n(X_u)$ and $C^n_u\equiv 0$ for $u\leq T$. Since $B^n$ is $\filt^{W}$ predictable, it is also $\filtg^n$ predictable, and a direct calculation shows that \eqref{E:mkt_px_of_risk_n} is satisfied. Also
\begin{equation*}
\espalt{}{\ol{Z}^n_{T\wedge\tau^n}} = \espalt{}{\EN\left(\int_t^\cdot B^n_u dW^0_u\right)_{T\wedge\tau^n}} = \espalt{}{\EN\left(\int_t^\cdot 1_{u\leq T\wedge\tau^n}B^n_u dW^0_u\right)_T} = 1,
\end{equation*}
where the last equality follows by conditioning on $\F^W_T$ and noting that $1_{\cdot\leq \tau^n B^n_\cdot}$ is $\filt^W$ predictable and $W^0$ is independent of $W$. Thus, $\ol{Z}^n$ is a $\filtg^n$ martingale, and we see from Lemmas \ref{L:BR_Girsanov} and \ref{L:mkt_px_of_risk_n} that $\ol{\qprob}^n\in\M^n$.  In fact, using the independence of $1_{\cdot\leq T\wedge\tau^n} B^n_\cdot$ and $W^0$ we obtain:
\begin{equation*}
\begin{split}
\espalt{}{\ol{Z}^n_{T\wedge\tau^n}\log\left(\ol{Z}^n_{T\wedge\tau^n}\right)} &= \frac{1}{2}\espalt{}{\int_t^{T\wedge\tau^n} \ell_n^2(X_u)du}\leq \frac{1}{2(1-\ol{\rho})}\espalt{}{\int_t^T \ell^2(X_u)du}.
\end{split}
\end{equation*}
Above, the inequality used that $0\leq\chi_n\leq 1$ and $\sup_{E}\rho'\rho  = \ol{\rho} < 1$.  Now, recall that $X = X^{t,x}$.  Using the Markov property and the solution $\prob^x$ to the martingale problem for $L$ on $E$ from Assumption \ref{A:factor} we have
\begin{equation*}
\espalt{}{\int_t^T \ell^2(X^{t,x}_u)du} = \espalt{x}{\int_0^{T-t}\ell^2(X_u)du} \leq \frac{1}{\eps}\espalt{x}{e^{\eps\int_0^T \ell^2(X_u)du}},
\end{equation*}
where the last inequality used $x\leq (1/\eps) e^{\eps x}$ for $x\geq 0$.  Thus, from \eqref{E:standard_dual_local} and Assumption \ref{A:opt_main_ass_inc} we conclude that for each $k\in\nats$, $0\leq t\leq T$, $x\in \ol{E}_k$ and $n\geq k+1$ we have
\begin{equation*}
G^n(t,x) \leq \frac{1}{2(1-\ol{\rho})\eps\alpha} \sup_{x\in\ol{E}_k}\espalt{x}{e^{\eps\int_0^T \ell^2(X_u)du}} + \ol{\phi} = C(\eps,k).
\end{equation*}
As such, the conclusions of Propositions \ref{P:G_n_bounds_give_soln}, \ref{P:unwind_prob_ub} hold.  As in the latter proposition we denote by $\hpi$ the optimal strategy from \eqref{E:opt_pi}, $\hwe = \We^{\hpi}$ the wealth process and $\hz$ the density process from \eqref{E:hat_Z}.  We now turn to proof the opposite inequality in \eqref{E:util_funct_ub}.  To this end, we use Lemma \ref{L:dual_opt_structure}. Namely, let $\qprob\in\tM$ be of the form in Lemma \ref{L:dual_opt_structure}  (from which any optimizer must lie) where $A,B,C$ are $\filt^{W,W^0}$ predictable process satisfying \eqref{E:mkt_px_of_risk}.  Denote by $Z = Z^\qprob$ the resultant density process. Create the $\filtg^n$ predictable processes $A^n,B^n,C^n$ on $[t,T\wedge\tau^n]$ via
\begin{equation*}
\begin{split}
A^n_u &= A_u1_{u\leq \delta^n \wedge T\wedge \tau^{n-1}};\\
B^n_u &= B_u 1_{u\leq \delta^n \wedge T\wedge \tau^{n-1}} - \ell_n(X_u)1_{\delta^n\wedge T\wedge \tau^{n-1} < u \leq \delta^n\wedge T\wedge\tau^n};\\
C^n_u &= C_u1_{u\leq \delta^n \wedge T\wedge \tau^{n-1}}\\
\end{split}
\end{equation*}
A straight-forward calculation shows the market price of risk equation in \eqref{E:mkt_px_of_risk_n} is satisfied. Then, create the $\filtg^n$ local martingale $Z^n$ by
\begin{equation*}
Z^n_\cdot \dfn \EN\left(\int_t^\cdot (A^n_u)'dW_u + \int_t^\cdot B^n_u dW^0_u + \int_{t}^\cdot C^n_u dM^n_u\right)_{\cdot\wedge T\wedge \tau^n}.
\end{equation*}
We now show that $Z^n$ is a martingale.  First, Lemma \ref{L:Z_n_to_Z} below proves the technical facts which essentially hold because $\chi_n = 1$ on $E_{n-1}$:
\begin{equation}\label{E:Z_n_to_Z}
\delta^n\wedge T \wedge \tau^{n-1} = \delta\wedge T\wedge\tau^{n-1};\qquad Z^n_{\delta^n\wedge T\wedge\tau^{n-1}} = Z_{\delta\wedge T\wedge\tau^{n-1}}.
\end{equation}
Next, to simplify notation, define
\begin{equation}\label{E:a_n_b_n_def}
a_n\dfn \delta\wedge T\wedge \tau^{n-1} = \delta^n\wedge T\wedge\tau^{n-1};\qquad b_n\dfn\delta^n\wedge T\wedge \tau^n,
\end{equation}
and note that $a_n\rightarrow \delta\wedge T$, $b_n\rightarrow \delta\wedge T$. Also note that we have defined $a_n$ in terms of $\delta$ but the second equality above comes from Lemma \ref{L:Z_n_to_Z}. With this notation, \eqref{E:Z_n_to_Z} and \eqref{E:ol_Z_n_inc}, we have using iterative conditioning on $\G^n_{a_n}$, and the independence of $W_0$ and $(W,\tau^n,\delta^n)$:
\begin{equation*}\
\begin{split}
\espalt{}{Z^n_{T\wedge\tau^n}} &= \espalt{}{Z^n_{a_n}\frac{\ol{Z}^n_{b_n}}{\ol{Z}^n_{a_n}}} = \espalt{}{Z^n_{a_n}}= \espalt{}{Z_{a_n}} =1.
\end{split}
\end{equation*}
The last equality follows since by hypothesis $Z$ is a martingale (as a density process for $\qprob\in\M$) and the optional sampling theorem. Thus, $Z^n$ is the density of a martingale measure $\qprob^n\in\M^n$.   As for the relative entropy:
\begin{equation}\label{E:Z_n_rel_ent_equality}
\begin{split}
&\espalt{}{Z^n_{T\wedge\tau^n}\log\left(Z^n_{T\wedge\tau^n}\right)}\\
&\quad = \espalt{}{Z^n_{a_n} \log\left(Z^n_{a_n}\right) \frac{\ol{Z}^n_{b_n}}{\ol{Z}^n_{a_n}}} + \espalt{}{Z^n_{a_n}\frac{\ol{Z}^n_{b_n}}{\ol{Z}^n_{a_n}}\log\left(\frac{\ol{Z}^n_{b_n}}{\ol{Z}^n_{a_n}}\right)};\\
&\quad = \espalt{}{Z_{a_n}\log\left(Z_{a_n}\right)} + \espalt{}{Z_{a_n}\frac{1}{2}\int_{a_n}^{b_n} \ell_n^2(X_u)du}.
\end{split}
\end{equation}
As $Z$ is a $\filtg$ martingale, $a_n\leq \delta\wedge T$ and $\espalt{}{Z_T\log(Z_T)} = \espalt{}{Z_{\delta\wedge T}\log\left(Z_{\delta\wedge T}\right)} < \infty$, the first term is bounded from above by $\espalt{}{Z_{\delta \wedge T}\log\left(Z_{\delta\wedge T}\right)}$. For the second term, we use inequality $xy\leq (1/K)e^{Kx} + (1/K)y\log(y)$ for $x\in \reals, y>0, K >0$.  This gives
\begin{equation}\label{E:rel_ent_aaa}
\begin{split}
\espalt{}{Z_{a_n}\frac{1}{2}\int_{a_n}^{b_n} \ell_n^2(X_u)du} & \leq \frac{1}{K}\espalt{}{Z_{a_n}\log\left(Z_{a_n}\right)} + \frac{1}{K}\espalt{}{e^{\frac{K}{2}\int_{a_n}^{b_n} \ell_n^2(X_u)du}};\\
&\leq \frac{1}{K}\espalt{}{Z_{\delta\wedge T}\log\left(Z_{\delta\wedge T}\right)} + \frac{1}{K}\espalt{}{e^{\frac{K}{2}\int_{t}^{T} \ell_n^2(X_u)du}};\\
\end{split}
\end{equation}
From \eqref{E:ell_n_def} and Assumption \ref{A:opt_main_ass_inc}, we obtain for $K=2(1-\ol{\rho})\eps$ that
\begin{equation*}
\begin{split}
\espalt{}{e^{\frac{K}{2}\int_{t}^{T} \ell_n^2(X_u)du}} &\leq \espalt{x}{e^{\eps\int_0^T\ell^2(X_u)du}}.
\end{split}
\end{equation*}
So, using Assumption \ref{A:opt_main_ass_inc} we see that for $t\leq T, x\in \ol{E}_k$
\begin{equation}\label{E:Z_n_rel_ent_ub}
\espalt{}{Z^n_{T\wedge\tau^n}\log\left(Z^n_{T\wedge\tau^n}\right)} \leq \left(1+\frac{1}{2(1-\ol{\rho})\eps}\right)\espalt{}{Z_{\delta\wedge T}\log\left(Z_{\delta\wedge T}\right)} + \frac{1}{2(1-\ol{\rho})\eps}\sup_{x\in\ol{E}_k} \espalt{x}{e^{\eps\int_0^T\ell^2(X_u)du}}.
\end{equation}
This shows that $Z^n\in\tM^n$ and hence from Proposition \ref{P:opt_invest_local} and \eqref{E:standard_dual_local}
\begin{equation}\label{E:G_n_ub_1}
G^n(t,x) \leq \frac{1}{\alpha}\espalt{}{Z^n_{T\wedge\tau^n}\log\left(Z^n_{T\wedge\tau^n}\right)} + \alpha\espalt{}{Z^n_{T\wedge\tau^n}1_{\delta^n>T\wedge\tau^n}(\chi_n\phi)(X_{T\wedge\tau^n})}.
\end{equation}
Continuing, from \eqref{E:Z_n_to_Z} we see $Z^n_{T\wedge\tau^n} = Z_{a_n}\ol{Z}^n_{b_n}/\ol{Z}^n_{a_n}$. Since $|b_n-a_n|\rightarrow 0, \tau^n\uparrow\infty$ it is clear that $Z^n_{T\wedge\tau^n}\rightarrow Z_{\delta\wedge T}$ almost surely.  Furthermore, \eqref{E:Z_n_rel_ent_ub} implies $\cbra{Z^n_{T\wedge\tau^n}}$ is uniformly integrable.  Thus, by the dominated convergence theorem:
\begin{equation*}
\lim_{n\uparrow\infty}\espalt{}{Z^n_{T\wedge\tau^n}1_{\delta^n>T\wedge\tau^n}(\chi_n\phi)(X_{T\wedge\tau^n})} = \espalt{}{Z_{T\wedge\delta}1_{\delta\geq T}\phi(X_{T})} = \espalt{}{Z_{T\wedge\delta}1_{\delta >  T}\phi(X_{T})},
\end{equation*}
where the last equality holds since $\prob\bra{\delta = T} = 0$. As for $\espalt{}{Z^n_{T\wedge\tau^n}\log\left(Z^n_{T\wedge\tau^n}\right)}$, come back to \eqref{E:Z_n_rel_ent_equality}:
\begin{equation*}
\begin{split}
&\espalt{}{Z^n_{T\wedge\tau^n}\log\left(Z^n_{T\wedge\tau^n}\right)}= \espalt{}{Z_{a_n}\log\left(Z_{a_n}\right)} + \espalt{}{Z_{a_n}\frac{1}{2}\int_{a_n}^{b_n} \ell_n^2(X_u)du}.
\end{split}
\end{equation*}
Fatou's lemma and the sub-martingale property of $Z\log(Z)$ imply $\lim_{n\uparrow\infty} \espalt{}{Z_{a_n}\log\left(Z_{a_n}\right)} = \espalt{}{Z_{\delta\wedge T}\log\left(Z_{\delta\wedge T}\right)}$. As for the second term, since $|b_n-a_n|\rightarrow 0$ we have almost surely that $Z_{a_n}\int_{a_n}^{b_n} \ell_n^2(X_u)du \rightarrow 0$. Furthermore, as shown in \eqref{E:rel_ent_aaa} for $K=2(1-\ol{\rho})\eps$, this term is bounded from above by
\begin{equation*}
\frac{1}{2(1-\ol{\rho})\eps}Z_{a_n}\log\left(Z_{a_n}\right) + \frac{1}{2(1-\ol{\rho})\eps}e^{\eps\int_{t}^{T}\ell^2(X_u)du}.
\end{equation*}
But, this term is uniformly integrable since it converges in probability and in $L^1$ as argued above.  Thus $\cbra{Z^n_{T\wedge\tau^n}\log(Z^n_{T\wedge\tau^n})}$ is uniformly integrable and hence we see from \eqref{E:G_n_ub_1} that
\begin{equation*}
G(t,x) = \lim_{n\uparrow\infty} G^n(t,x) \leq \frac{1}{\alpha}\espalt{}{Z_{\delta\wedge T}\log\left(Z_{\delta\wedge T}\right)} + \espalt{}{Z_{\delta\wedge T}1_{\delta>T}\phi(X_T)}.
\end{equation*}
Now, this result holds for all $\qprob\in\tM$ for the form in Lemma \ref{L:dual_opt_structure}, but as argued there-in, the dual problem is obtained by minimizing over this class.  Thus,
\begin{equation}\label{E:big_lb}
G(t,x) \leq \inf_{\qprob\in\tM}\left(\frac{1}{\alpha}\espalt{}{Z^{\qprob}_T\log\left(Z^\qprob_T\right)} +\espalt{}{Z^\qprob_T 1_{\delta > T}\phi(X_T)}\right) = -\frac{1}{\alpha}\log\left(-u(t,x;\phi)\right).
\end{equation}
This, combined with \eqref{E:util_funct_ub} in Proposition \ref{P:unwind_prob_ub} proves $G$ is the certainty equivalent for the optimal investment problem. Now, at this point we have proved from the above and Proposition \ref{P:unwind_prob_ub}
\begin{enumerate}[(1)]
\item That $G$ is the certainty equivalent.
\item The existence of a measure $\hqprob\in\tM$  with density process $\hz$ which is optimal for the dual problem.  Note: this follows from part (1) of Proposition \ref{P:unwind_prob_ub} and \eqref{E:big_lb}, the latter of which shows the inequalities in \eqref{E:big_ub} are equalities.
\item The existence of a trading strategy $\hpi$ so that for the resultant wealth process $\hwe$, $e^{-\alpha\left(\hwe_T + 1_{\delta > T}\phi(X_T)\right)} = e^{-\alpha G(t,x)}\hz_T$, so that $\hwe$ achieves the optimal utility.  Furthermore, since the inequalities in \eqref{E:big_ub} are all equalities, $\espalt{\hqprob}{\hwe_T} = 0$.
\end{enumerate}

The last thing to show is that $\hpi\in\mathcal{A}$, which will follow if $\hwe$ is a $\qprob$ super-martingale for all $\qprob\in\tM$.  This is hard to show directly, so at this point we appeal to the well-known duality results for exponential utility with locally bounded semi-martingales.  Namely, as $\tM\neq \emptyset$ there exists an optimal $\check{\pi}$ to the primal problem (c.f. \cite{MR2489605}), and since $\hqprob$ solves the dual problem, we already know from the necessary and sufficient conditions for optimality that with probability one:
\begin{equation*}
e^{-\alpha\left(\We^{\check{\pi}}_T + 1_{\delta > T}\chi(X_T)\right)} = e^{-\alpha G(t,x)}\hz_T.
\end{equation*}
This shows that $\hwe_T = \We^{\check{\pi}}_T$ almost surely.  Next, from part (2) of Proposition \ref{P:unwind_prob_ub} we know $\hwe$ is a $\hqprob$ sub-martingale.  But, $\espalt{\hqprob}{\hwe} = 0$ along with the sub-martingale property imply that $\hwe$ is a $\hqprob$ martingale. The abstract theory tells us that $\We^{\check{\pi}}$ is also martingale under $\hqprob$.  This gives, for $t\leq s\leq T$ that $\hwe_s = \We^{\check{\pi}}_s$ almost surely, and hence by right continuity they are indistinguishable on $[t,T]$.  Lastly, the abstract theory implies $\We^{\check{\pi}}$ is a $\qprob$ sub-martingale for all $\qprob\in\tM$ and so the same is true for $\hwe$.  Thus, $\hpi\in \mathcal{A}$ and the proof is complete.

\end{proof}

Lastly, we turn to the case of Assumption \ref{A:opt_main_ass_com}.

\begin{proposition}\label{P:unwind_prob_B}

Let Assumption \ref{A:opt_main_ass_com} hold.  Then, the conclusions of Theorem \ref{T:main_result} follow.

\end{proposition}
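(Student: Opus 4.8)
The plan is to run, step for step, the proof of Proposition~\ref{P:unwind_prob_A}, with one systematic change: every use of the independent Brownian motion $W^0$ as a carrier of the market price of risk is replaced by a change of measure acting on $W$ through the correlation vector $\rho$. This is forced, since when $\ol{\rho}=1$ the density $\ol{Z}^n$ of \eqref{E:ol_Z_n_inc} degenerates (the coefficient $\ell_n$ of \eqref{E:ell_n_def} is unbounded), so one can no longer ``put'' the market price of risk on $W^0$. As before the argument has two halves: (i) a locally uniform upper bound on the functions $G^n$ of Proposition~\ref{P:opt_invest_local}, which makes Propositions~\ref{P:G_n_bounds_give_soln} and \ref{P:unwind_prob_ub} applicable — in particular producing $G$, the strategy $\hpi$ of \eqref{E:opt_pi}, the density $\hz$ of \eqref{E:hat_Z} of a measure $\hqprob\in\tM$, the submartingale property of $\We^{\hpi}$ under $\hqprob$, and the bound $-\tfrac1\alpha\log(-u(t,x;\phi))\le G(t,x)$ — and (ii) the reverse inequality $G(t,x)\le-\tfrac1\alpha\log(-u(t,x;\phi))$, obtained by localizing an arbitrary dual candidate from Lemma~\ref{L:dual_opt_structure}. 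Once (i) and (ii) hold, the concluding paragraph of the proof of Proposition~\ref{P:unwind_prob_A} — identifying $\hqprob$ as the dual optimizer and deducing $\hpi\in\mathcal{A}$ from the abstract exponential-utility duality available since $\tM\neq\emptyset$ — applies verbatim.

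For (i), fix $k$, take $n\ge k+1$ and $x\in\ol{E}_k$, and let $\ol{Z}^n$ be the stochastic exponential $\EN\pare{\int_t^\cdot(A^n_u)'dW_u+\int_t^\cdot B^n_u\,dW^0_u}$ stopped at $T\wedge\tau^n$, where $A^n,B^n$ are $\filt^W$-measurable (hence $\filtg^n$-predictable) coefficients for which \eqref{E:mkt_px_of_risk_n} holds with $C^n\equiv0$, under which $X$ acquires the generator $L_0$ of Assumption~\ref{A:opt_main_ass_com}(A), and which satisfy $|A^n|^2+|B^n|^2\le\ell^2$. Since all coefficients are bounded on $\ol{E}_n$ and $X$ is stopped at $\tau^n$, Novikov's criterion shows $\ol{Z}^n$ is a martingale, so by Lemmas~\ref{L:BR_Girsanov} and \ref{L:mkt_px_of_risk_n} it defines $\ol{\qprob}^n\in\M^n$ with
\[
\espalt{}{\ol{Z}^n_{T\wedge\tau^n}\log\ol{Z}^n_{T\wedge\tau^n}}=\tfrac12\espalt{\ol{\qprob}^n}{\int_t^{T\wedge\tau^n}\pare{|A^n_u|^2+|B^n_u|^2}\,du}\le\espalt{\ol{\qprob}^n}{\int_t^{T\wedge\tau^n}\ell^2(X_u)\,du}.
\]
Under $\ol{\qprob}^n$ the process $X=X^{t,x}$ stopped at $\tau^n$ has the law of the $L_0$-diffusion $\prob_0$ started at $x$ and stopped on exiting $E_n$, so the right side is at most $\espalt{\prob_0^x}{\int_0^T\ell^2(X_u)\,du}\le\tfrac1\eps\espalt{\prob_0^x}{e^{\eps\int_0^T\ell^2(X_u)\,du}}\le\tfrac1\eps C(k)$ by Assumption~\ref{A:opt_main_ass_com}(A) and $y\le\tfrac1\eps e^{\eps y}$. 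The standard duality for the localized problem (cf.\ \eqref{E:standard_dual_local}) then gives $G^n(t,x)\le\tfrac{C(k)}{\eps\alpha}+\ol{\phi}$ for all $n\ge k+1$ and $x\in\ol{E}_k$, which is \eqref{E:G_n_loc_unif_ub_prob}.

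For (ii), take $\qprob\in\tM$ of the form in Lemma~\ref{L:dual_opt_structure}, with $\filt^{W,W^0}$-predictable $A,B,C$ obeying \eqref{E:mkt_px_of_risk} and density $Z$. Put $a_n\dfn\delta\wedge T\wedge\tau^{n-1}$ (equal to $\delta^n\wedge T\wedge\tau^{n-1}$ since $\chi_n\equiv1$ on $E_{n-1}$) and $b_n\dfn\delta^n\wedge T\wedge\tau^n$, so that $\cbra{a_n<b_n}\subseteq\cbra{\tau^{n-1}<\delta\wedge T}$ and $a_n,b_n\to\delta\wedge T$ by Lemma~\ref{L:delta_n_delta} and the non-explosion in Assumption~\ref{A:factor}. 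Define $\filtg^n$-predictable $(A^n,B^n,C^n)$ equal to $(A,B,C)$ on $[t,a_n]$ and, on the collar $(a_n,b_n]$, to the $L_0$-type repair of (i); a direct computation gives \eqref{E:mkt_px_of_risk_n}. Since the collar integrands are bounded for each fixed $n$, the associated $Z^n$ is a genuine martingale, so $\qprob^n\in\M^n$, and $Z^n_{T\wedge\tau^n}=Z_{a_n}R^n$ with $R^n$ the collar factor and $\espalt{}{R^n\mid\F_{a_n}}=1$; in particular $Z^n_{T\wedge\tau^n}\to Z_{\delta\wedge T}$ a.s. Granting the \emph{uniform} bound $\sup_n\espalt{}{Z^n_{T\wedge\tau^n}\log Z^n_{T\wedge\tau^n}}<\infty$, the family $\cbra{Z^n_{T\wedge\tau^n}}$ is uniformly integrable and, exactly as in Proposition~\ref{P:unwind_prob_A}, letting $n\to\infty$ in the local-duality inequality $G^n(t,x)\le\tfrac1\alpha\espalt{}{Z^n_{T\wedge\tau^n}\log Z^n_{T\wedge\tau^n}}+\espalt{}{Z^n_{T\wedge\tau^n}1_{\delta^n>T\wedge\tau^n}(\chi_n\phi)(X_{T\wedge\tau^n})}$ yields $G(t,x)\le\tfrac1\alpha\espalt{}{Z_{\delta\wedge T}\log Z_{\delta\wedge T}}+\espalt{}{Z_{\delta\wedge T}1_{\delta>T}\phi(X_T)}$; infimizing over $\qprob$ and invoking Lemma~\ref{L:dual_opt_structure} gives (ii).

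The main obstacle — and the only place the full Assumption~\ref{A:opt_main_ass_com} is used — is exactly the uniform relative-entropy bound just invoked. Conditioning on $\F_{a_n}$,
\[
\espalt{}{Z^n_{T\wedge\tau^n}\log Z^n_{T\wedge\tau^n}}=\espalt{}{Z_{a_n}\log Z_{a_n}}+\espalt{}{Z_{a_n}R^n\log R^n},
\]
and the first term is at most $\espalt{}{Z_{\delta\wedge T}\log Z_{\delta\wedge T}}<\infty$ by the submartingale property of $Z\log Z$ together with $a_n\le\delta\wedge T$. The second term is the delicate one: the repair is chosen so that, writing $M^n$ for the collar martingale, $\langle M^n\rangle_{a_n}^{b_n}\le\int_{a_n}^{b_n}\ell^2(X_u)\,du$; but because $\bar\rho$ may equal $1$ the repair must act on $W$, so (unlike in Proposition~\ref{P:unwind_prob_A}, where the $W^0$-based repair left the law of $X$ alone and the decoupling via $xy\le\tfrac1K e^{Kx}+\tfrac1K y\log y$ reduced everything to an expectation under $\prob^x$ with the fixed $x\in\ol E_k$) the collar now changes the law of $X$, and the naive estimate $\espalt{\qprob^n}{\int_{a_n}^{b_n}\ell^2(X_u)\,du\mid\F_{a_n}}\le\sup_{y\in\ol{E}_{n-1}}\espalt{\prob_0^y}{\int_0^T\ell^2(X_u)\,du}$ is only $n$-dependent, since the collar emanates from $\partial E_{n-1}$. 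To upgrade it to a bound uniform in $n$ one combines the elementary inequality $x\log x\le\frac{x^p-x}{p-1}$ ($p>1$) applied to $R^n$ with the factorization $(R^n)^p=\EN(p\,M^n)\,e^{\frac12p(p-1)\langle M^n\rangle}$: the exponential factor is controlled by $e^{\frac12p(p-1)\int_{a_n}^{b_n}\ell^2(X_u)\,du}$, the martingale $\EN(p\,M^n)$ tilts $X$ on the collar into a diffusion whose exponential integrability of $\int_0^T\ell^2$ with the matching exponent $\tfrac12p(p-1)$ is precisely Assumption~\ref{A:opt_main_ass_com}(B), and the probability of the collar event $\cbra{\tau^{n-1}<\delta\wedge T}$ tending to $0$ absorbs the residual dependence on $E_{n-1}$. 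This delivers the uniform bound and completes the proof.
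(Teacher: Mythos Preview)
Your overall architecture is right, and part (i) is fine (in fact your choice of $A^n=-\ell\rho$ together with the residual $B^n$ on $W^0$ is cleaner than simply setting $B^n\equiv0$). The gap is in part (ii), precisely at the point where you invoke Assumption~\ref{A:opt_main_ass_com}(B). Writing $R^n=\EN(M^n)$ for the collar factor with $W$-integrand $-\ell\rho$, your estimate $x\log x\le (x^p-x)/(p-1)$ leads to $\espalt{}{Z_{a_n}(R^n)^p}$ and then to the tilt $\EN(pM^n)$; but under $\prob$ this tilt gives $X$ the drift $b-p\,\ell a\rho$, i.e.\ the generator $L_{1-p}$ in the paper's parametrization, not $L_p$. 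Since $1-p<0$, this measure is \emph{not} the one controlled by Assumption~\ref{A:opt_main_ass_com}(B), and the claimed ``matching exponent $\tfrac12 p(p-1)$'' does not help. Moreover, after conditioning on $\G^n_{a_n}$ your bound is expressed as a supremum over starting points in $\partial E_{n-1}$, and the hand-wave ``the probability of the collar event tending to $0$ absorbs the residual dependence on $E_{n-1}$'' is not an argument: the collar expectation may grow with $n$ faster than the collar probability decays.

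The paper avoids both issues by a different decoupling. Instead of applying a convexity bound to $R^n$, it first uses $xy\le K^{-1}e^{Kx}+K^{-1}y\log y$ with $y=Z_{a_n}$ and $x$ the \emph{$\prob_0$-conditional mean} of the collar integral; this separates $Z_{a_n}$ (controlled by $\relent{\qprob}{\prob}$) from an unconditional expectation $\espalt{}{e^{K(\cdot)}}$. The latter is rewritten under $\prob_0$ as $\espalt{\prob_0}{\check Z_{0,a_n}\,e^{K(\cdot)}}$ with $\check Z_0=d\prob/d\prob_0=\EN(\int\ell\rho'\,dW^{\prob_0})$, and \emph{then} H\"older with the exponent $p$ of Assumption~\ref{A:opt_main_ass_com}(B) is applied. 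The point is that $(\check Z_{0,T})^p=\EN(p\int\ell\rho'\,dW^{\prob_0})\,e^{\frac12 p(p-1)\int\ell^2\rho'\rho}$, and tilting $\prob_0$ by this Dol\'eans exponential adds $p\,\ell a\rho$ to the $\prob_0$-drift $b-\ell a\rho$, landing exactly on $b+(p-1)\ell a\rho=L_p$. Thus (B) controls the $\check Z_0^p$ factor, while (A) controls the other H\"older factor, and both bounds are in terms of the \emph{fixed} starting point $x\in\ol E_k$ (via $\int_t^T$, not $\int_{a_n}^{b_n}$), so no $n$-dependence survives. If you want to keep your $x\log x\le(x^p-x)/(p-1)$ route, you would need an assumption on $\prob_{1-p}$ rather than $\prob_p$; as stated, the argument does not close.
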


\begin{proof}[Proof of Proposition \ref{P:unwind_prob_B}]

The proof is similar to that of Proposition \ref{P:unwind_prob_A} and hence we just show the differences, appealing to former proof to fill in the steps.   As such, there are two things to show/do:
\begin{enumerate}[(1)]
\item For $0\leq t\leq T, x\in E$, and $n$ large enough, create a measure $\ol{\qprob}^n = \ol{\qprob}^{(t,x),n}\in \tM^n$ with relative entropy on $\G^n_{T\wedge\tau^n}$ which is bounded locally in $x$, uniformly in $n$. This will enable us to invoke Propositions \ref{P:G_n_bounds_give_soln}, \ref{P:unwind_prob_ub}.
\item For $0\leq t\leq T, x\in E$ and $\qprob = \qprob^{t,x}\in \tM$, appropriately adjust $\qprob$ in the random interval $(T\wedge\tau^{n-1},T\wedge\tau^n]$ to create a measure $\qprob^n\in\tM^n$ and then show the relative entropy associated to this adjustment vanishes as $n\uparrow\infty$. This will establish the upper bound \eqref{E:big_lb} for $G$, from which the remaining theorem statements follow exactly as in the last paragraphs of the proof of Proposition \ref{P:unwind_prob_A}.
\end{enumerate}
We first consider $(1)$. By part $(A)$ of Assumption \ref{A:opt_main_ass_com}, for $0\leq t\leq T, x\in E$, there is a unique solution to the SDE $dX^{t,x}_s = \left(b-\ell a\rho\right)(X^{t,x}_s)ds + a(X^{t,x}_s)dW_s, s\geq t$, $X^{t,x}_t = x$, and hence the process
\begin{equation*}
Z_{0,s}\dfn \EN\left(-\int_t^\cdot \ell\rho(X_u)'dW_u\right)_s;\quad t\leq s\leq T,
\end{equation*}
is an $\filt^{W,W^0}$ martingale which defines a measure $\prob_0$ on $\F^{W,W^0}_T$.  Note also that we have since $\ell\rho(x)1_{x\in E_n}$ is bounded that
\begin{equation}\label{E:com_temp_calc}
\begin{split}
\espalt{}{Z_{0,T}\log\left(Z_{0,T}\right)} &\leq \liminf_{n\uparrow\infty} \espalt{}{Z_{0,T\wedge\tau^n}\log\left(Z_{0,T\wedge\tau^n}\right)};\\
\espalt{}{Z_{0,T\wedge\tau^n}\log\left(Z_{0,T\wedge\tau^n}\right)} &= \frac{1}{2}\espalt{\prob_0}{\int_t^{T\wedge\tau^n} \ell^2\rho'\rho(X_u)du} \leq \frac{1}{2}\espalt{\prob_0}{\int_t^T \ell^2(X_u)du} \leq \frac{2}{\eps}\espalt{\prob^x_{0}}{e^{\eps\int_0^T \ell^2(X_u)du}}.
\end{split}
\end{equation}
The last inequality follows by the Markov property of $X$ under $\cbra{\prob^x_0}_{x\in E}$, and the estimate $(1/2)x \leq (2/\eps)e^{\eps x}, x>0$ for $\eps>0$.  Next, let $n$ be large enough so $x\in E_n$, recall the definition of $a_n,b_n$ in \eqref{E:a_n_b_n_def}, and define the $\filtg^n$ predictable processes for $t\leq u\leq T$ via $A^n_u = -1_{u\leq b_n}\ell\rho(X_u)$ and $B^n_u\equiv 0, C^n_u\equiv 0$.  It is clear that the market price of risk equations \eqref{E:mkt_px_of_risk_n} are satisfied and that the resultant density process $\ol{Z}^n_{0}$ satisfies $\ol{Z}^n_{0,s} = Z_{0,s}$ for $t\leq s\leq b_n$.  Since $|A^n_u| \leq |\ell(X_u)|1_{u\leq b_n} \leq C(n)$, by construction of $\filtg^n$ we know $\ol{Z}^n_0$ is a $\filtg^n$ martingale, and hence defines a measure $\ol{\qprob}^n \in \M^n$ by $d\ol{\qprob}^n / d\prob \big|_{\G^n_{T\wedge\tau^n}} =  \ol{Z}^n_{0,T\wedge\tau^n} =  \ol{Z}^n_{0,b_n} = Z_{0,b_n}$.   Furthermore,
\begin{equation*}
\begin{split}
&\espalt{}{\ol{Z}^n_{0,T\wedge\tau^n}\log\left(\ol{Z}^n_{0,T\wedge\tau^n}\right)} = \espalt{}{Z_{0,b_n}\log\left(Z_{0,b_n}\right)};\\
&\quad = \espalt{}{\int_0^T 1_{u\leq T\wedge\tau^n} Z_{0,u\wedge T\wedge \tau^n}\log(Z_{0,u\wedge T\wedge\tau^n})(\chi_n\gamma)(X_{u\wedge T\wedge\tau^n}) e^{-\int_0^{u\wedge T\wedge\tau^n} (\chi_n\gamma)(X_v)dv}du}\\
&\qquad\qquad  + \espalt{}{Z_{0,T\wedge\tau^n}\log\left(Z_{0,T\wedge\tau^n}\right)e^{-\int_0^{T\wedge\tau^n}(\chi_n\gamma)(X_v)dv}};\\
&\quad \leq \espalt{}{Z_{0,T\wedge\tau^n}\log\left(Z_{0,T\wedge\tau^n}\right)};\\
&\quad \leq \frac{2}{\eps}\espalt{\prob^x_{0}}{e^{\eps\int_0^T \ell^2(X_u)du}}.
\end{split}
\end{equation*}
Above, the second equality follows from the conditional density of $\delta^n$.  The first inequality holds since $1_{u\leq T\wedge\tau^n}$ is in $\F^{W,W^0}_{u\wedge T\wedge\tau^n}$ and since \eqref{E:com_temp_calc} implies $Z_0\log(Z_0)$ is a $\filt^{W,W^0}$ sub-martingale. The last inequality also follows from \eqref{E:com_temp_calc}.  Thus, $\ol{\qprob}^n\in\tM^n$, and by \eqref{E:standard_dual_local} and part (A) of Assumption \ref{A:opt_main_ass_com} we conclude that for each $k\in\nats$, $0\leq t\leq T$, $x\in \ol{E}_k$ and $n\geq k+1$
\begin{equation*}
G^n(t,x) \leq  \sup_{x\in\ol{E}_k} \frac{2}{\eps} \espalt{\prob^x_{0}}{e^{\eps\int_0^T \ell^2(X_u)du}} + \ol{\phi} = C(\eps,k).
\end{equation*}
As such, part $(1)$ above holds and we may apply Propositions \ref{P:G_n_bounds_give_soln}, \ref{P:unwind_prob_ub}.  As before, we denote by $\hpi$ the optimal strategy from \eqref{E:opt_pi}, $\hwe = \We^{\hpi}$ the wealth process and $\hz$ the density process from \eqref{E:hat_Z}.  We now turn to $(2)$ above which will yield \eqref{E:big_lb}.  To this end, let $\qprob\in\tM$ be of the form in Lemma \ref{L:dual_opt_structure} where $A,B,C$ are $\filt^{W,W^0}$ predictable process satisfying \eqref{E:mkt_px_of_risk}.  Denote by $Z = Z^\qprob$ the resultant density process. Create the $\filtg^n$ predictable processes $A^n,B^n,C^n$ on $[t,T\wedge\tau^n]$ via $A^n_u = A_u1_{u\leq a_n} - \ell\rho(X_u)1_{a_n < u \leq b_n}$, $B^n_u = B_u 1_{u\leq a_n}$ and $C^n_u = C_u1_{u\leq a_n}$. Again, the market price of risk equation \eqref{E:mkt_px_of_risk_n} is satisfied. Then, create the process $Z^n$ by
\begin{equation*}
Z^n_\cdot =\EN\left(\int_t^\cdot (A^n_u)'dW_u + \int_t^\cdot B^n_u dW^0_u + \int_{t}^\cdot C^n_u dM^n_u\right)_{\cdot \wedge T\wedge \tau^n}.
\end{equation*}
To show $Z^n$ is a martingale, we use Lemma \ref{L:Z_n_to_Z}, iterative conditioning on $\G^n_{a_n}$, that $\ell\rho (x)1_{x\in E_n}$ is bounded, and that $Z$ is a $\filtg$ martingale (since $\qprob\in \tM$) to deduce
\begin{equation*}
\begin{split}
\espalt{}{Z^n_{T\wedge\tau^n}} &= \espalt{}{Z^n_{a_n}\EN\left(\int_t^\cdot-\ell\rho(X_u)'dW_u\right)_{a_n}^{b_n}}= \espalt{}{Z_{a_n}} =1.
\end{split}
\end{equation*}
Thus, $Z^n$ is the density of a martingale measure $\qprob^n\in\M^n$.  Next, we note that $Z_0$ is in fact a $\filtg^n$ martingale as well, since $W$ is a $\filtg^n$ martingale, $\ell\rho$ is $\filtg^n$ predictable and $\espalt{}{Z_{0,s}} = 1$ for $t\leq s$.  Thus, we can extend $\prob_0$ to $\G_{T\wedge\tau^n}$ via $Z_{0,T\wedge\tau^n}$. As for the relative entropy, again, using that $\ell\rho(x)1_{x\in E_n}$ is bounded and the conditional Bayes' formula:
\begin{equation}\label{E:Z_n_rel_ent_equality_com}
\begin{split}
&\espalt{}{Z^n_{T\wedge\tau^n}\log\left(Z^n_{T\wedge\tau^n}\right)}\\
&\quad = \espalt{}{Z^n_{a_n}\log(Z^n_{a_n}) \frac{Z_{0,b_n}}{Z_{0,a_n}}} + \espalt{}{Z^n_{a_n} \frac{Z_{0,b_n}}{Z_{0,a_n}}\log\left(\frac{Z_{0,b_n}}{Z_{0,a_n}}\right)};\\
&\quad = \espalt{}{Z_{a_n}\log\left(Z_{a_n}\right)} + \espalt{}{Z_{a_n}\condespalt{\prob_0}{\log\left(\frac{Z_{0,b_n}}{Z_{0,a_n}}\right)}{\G^n_{a_n}}};\\
&\quad = \espalt{}{Z_{a_n}\log\left(Z_{a_n}\right)} + \espalt{}{Z_{a_n}\condespalt{\prob_0}{\frac{1}{2}\int_{a_n}^{b_n} \ell^2\rho'\rho(X_u)du}{\G^n_{a_n}}}.\\
\end{split}
\end{equation}
As before, the first term is bounded above by $\espalt{}{Z_{\delta \wedge T}\log\left(Z_{\delta\wedge T}\right)}$. As for the second term, we again use $xy\leq (1/K)e^{Kx} + (1/K)y\log(y)$ for $x\in \reals, y>0, K >0$ which gives
\begin{equation*}
\begin{split}
\espalt{}{Z_{a_n}\condespalt{\prob_0}{\frac{1}{2}\int_{a_n}^{b_n} \ell^2\rho'\rho(X_u)du}{\G^n_{a_n}}} &\leq \frac{1}{K}\espalt{}{Z_{a_n}\log\left(Z_{a_n}\right)} + \frac{1}{K}\espalt{}{e^{\frac{K}{2}\condespalt{\prob_0}{\int_{a_n}^{b_n} \ell^2\rho'\rho(X_u)du}{\G^n_{a_n}}}}.
\end{split}
\end{equation*}
The first term is again bounded by $(1/K)\espalt{}{Z_{\delta\wedge T}\log\left(Z_{\delta\wedge T}\right)}$. For the second term:
\begin{equation*}
\begin{split}
\espalt{}{e^{\frac{K}{2}\condespalt{\prob_0}{\int_{a_n}^{b_n} \ell^2\rho'\rho(X_u)du}{\G^n_{a_n}}}} &\leq \espalt{}{\condespalt{\prob_0}{e^{\frac{K}{2}\int_{a_n}^{b_n} \ell^2\rho'\rho(X_u)du}}{\G^n_{a_n}}};\\
&=\espalt{\prob_0}{\check{Z}_{0,a_n} \condespalt{\prob_0}{e^{\frac{K}{2}\int_{a_n}^{b_n} \ell^2\rho'\rho(X_u)du}}{\G^n_{a_n}}},
\end{split}
\end{equation*}
where $\check{Z}_{0,s} = \EN\left(\int_t^\cdot \ell\rho(X_u)'dW^{\prob_0}_u\right)_s = d\prob/d\prob_0\big|_{\G^n_s}$ (cf Lemma \ref{L:BR_Girsanov}). Continuing, we have for the $p>1$ of $(B)$ in Assumption \ref{A:opt_main_ass_com}:
\begin{equation*}
\begin{split}
\espalt{\prob_0}{\check{Z}_{0,a_n} \condespalt{\prob_0}{e^{\frac{K}{2}\int_{a_n}^{b_n} \ell^2\rho'\rho(X_u)du}}{\G^n_{a_n}}} & = \espalt{\prob_0}{\check{Z}_{0,a_n}e^{\frac{K}{2}\int_{a_n}^{b_n} \ell^2\rho'\rho(X_u)du}};\\
&\leq \espalt{\prob_0}{\left(\check{Z}_{0,a_n}\right)^p}^{1/p}\espalt{\prob_0}{e^{\frac{Kp}{2(p-1)}\int_{a_n}^{b_n} \ell^2\rho'\rho(X_u)du}}^{(p-1)/p}.
\end{split}
\end{equation*}
For the right-most term above, we use part (A) of Assumption \ref{A:opt_main_ass_com} and take $K=2\eps(p-1)/p$. Then for $k$ large enough so $x\in\ol{E}_k$ and $n\geq k+1$ we have
\begin{equation*}
\begin{split}
\espalt{\prob_0}{e^{\frac{Kp}{2(p-1)}\int_{a_n}^{b_n} \ell^2\rho'\rho(X_u)du}} &\leq \espalt{\prob^x_0}{e^{\eps\int_0^T \ell^2(X_u)du}} < C(\eps, k).
\end{split}
\end{equation*}
As for the other term, we have by part $(B)$ of Assumption \ref{A:opt_main_ass_com} and the convexity of $y^p, p>1$ that for $x\in E$ and $k$ large enough so that $x\in \ol{E}_k$ and $n\geq k+1$:
\begin{equation*}
\begin{split}
\espalt{\prob_0}{\left(\check{Z}_{0,a_n}\right)^p}&\leq \espalt{\prob_0}{\left(\check{Z}_{0,T}\right)^p}\\
&= \espalt{\prob_0}{\EN\left(\int_t^\cdot p\ell\rho(X_u)'dW^{\prob_0}_u\right)_Te^{\frac{1}{2}p(p-1)\int_t^T \ell^2\rho'\rho(X_u)du}};\\
&= \espalt{\prob_{p}}{e^{\frac{1}{2}p(p-1)\int_t^T \ell^2\rho'\rho(X_u)du}};\\
&\leq \espalt{\prob^x_{p}}{e^{\frac{1}{2}p(p-1)\int_0^T \ell^2(X_u)du}} \leq C(p,k),
\end{split}
\end{equation*}
where the last inequality follows from $(B)$ of Assumption \ref{A:opt_main_ass_com}. Putting all this together, we obtain that for $x\in \ol{E}_k, n\geq k+1$:
\begin{equation}\label{E:Z_n_long_calc}
\espalt{}{Z^n_{T\wedge\tau^n}\log\left(Z^n_{T\wedge \tau_n}\right)} \leq \left(1+\frac{2p}{2(p-1)\eps}\right)\espalt{}{Z_{\delta\wedge T}\log\left(Z_{\delta\wedge T}\right)} + C(\eps,p, k).
\end{equation}
Thus, $\qprob^n\in\tM^n$ and by Proposition \ref{P:opt_invest_local} and \eqref{E:standard_dual_local}
\begin{equation}\label{E:G_n_ub_1_com}
G^n(t,x) \leq \frac{1}{\alpha}\espalt{}{Z^n_{T\wedge\tau^n}\log\left(Z^n_{T\wedge\tau^n}\right)} + \alpha\espalt{}{Z^n_{T\wedge\tau^n}1_{\delta^n>T\wedge\tau^n}(\chi_n\phi)(X_{T\wedge\tau^n})}.
\end{equation}
Continuing, again from \eqref{E:Z_n_to_Z} we can write $Z^n_{T\wedge\tau^n} = Z_{a_n}Z_{0,b_n}/Z_{0,a_n}$ and hence we know $Z^n_{T\wedge\tau^n}\rightarrow Z_{\delta\wedge T}$ almost surely.  Furthermore, \eqref{E:Z_n_long_calc} implies $\cbra{Z^n_{T\wedge\tau^n}}$ is uniformly integrable. Thus,
\begin{equation*}
\lim_{n\uparrow\infty}\espalt{}{Z^n_{T\wedge\tau^n}1_{\delta^n>T\wedge\tau^n}(\chi_n\phi)(X_{T\wedge\tau^n})} = \espalt{}{Z_{T\wedge\delta}1_{\delta \geq  T}\phi(X_{T})} = \espalt{}{Z_{T\wedge\delta}1_{\delta >  T}\phi(X_{T})},
\end{equation*}
As for $\espalt{}{Z^n_{T\wedge\tau^n}\log\left(Z^n_{T\wedge\tau^n}\right)}$, come back to \eqref{E:Z_n_rel_ent_equality_com}:
\begin{equation*}
\begin{split}
&\espalt{}{Z^n_{T\wedge\tau^n}\log\left(Z^n_{T\wedge\tau^n}\right)} = \espalt{}{Z_{a_n}\log\left(Z_{a_n}\right)} + \espalt{}{Z_{a_n}\condespalt{\prob_0}{\frac{1}{2}\int_{a_n}^{b_n} \ell^2\rho'\rho(X_u)du}{\G^n_{a_n}}}.\\
\end{split}
\end{equation*}
As before, it suffices to show that the second term above vanishes as $n\uparrow\infty$. But, for $K>0$
\begin{equation*}
\begin{split}
0 \leq Z_{a_n}\condespalt{\prob_0}{\frac{1}{2}\int_{a_n}^{b_n} \ell^2\rho'\rho(X_u)du}{\G^n_{a_n}} &\leq \frac{1}{K} Z_{a_n}\log\left(Z_{a_n}\right) + \frac{1}{K}e^{\frac{K}{2}\condespalt{\prob_0}{\int_{a_n}^{b_n} \ell^2\rho'\rho(X_u)du}{\G^n_{a_n}}};\\
&\leq \frac{1}{K} Z_{a_n}\log\left(Z_{a_n}\right) + \frac{1}{K} \condespalt{\prob_0}{e^{\frac{K}{2}\int_{a_n}^{b_n} \ell^2\rho'\rho(X_u)du}}{\G^n_{a_n}}.
\end{split}
\end{equation*}
The first term is uniformly integrable.  As for the second term we have for some $\tilde{p}>1$ and the $p>1$ of Assumption \ref{A:opt_main_ass_com} that
\begin{equation*}
\begin{split}
\espalt{}{\left(\condespalt{\prob_0}{e^{\frac{K}{2}\int_{a_n}^{b_n} \ell^2\rho'\rho(X_u)du}}{\G^n_{a_n}}\right)^{\tilde{p}}} & \leq \espalt{}{\condespalt{\prob_0}{e^{\frac{\tilde{p}K}{2}\int_{a_n}^{b_n} \ell^2\rho'\rho(X_u)du}}{\G^n_{a_n}}};\\
&= \espalt{\prob_0}{\check{Z}_{a_n}e^{\frac{\tilde{p}K}{2}\int_{a_n}^{b_n} \ell^2\rho'\rho(X_u)du}};\\
&\leq \espalt{\prob_0}{\left(\check{Z}_{a_n}\right)^p}^{1/p}\espalt{\prob_0}{e^{\frac{\tilde{p}pK}{2(p-1)}\int_{a_n}^{b_n} \ell^2\rho'\rho(X_u)du}}^{(p-1)/p};\\
\end{split}
\end{equation*}
The first term on the right was already shown to be finite.  For the second term, we take $K>0$ small enough so that for any $\tilde{p}>1$ we have $\tilde{p}pK/(2(p-1)) < \eps$.  This shows that $\condespalt{\prob_0}{e^{(K/2)\int_{a_n}^{b_n} \ell^2\rho'\rho(X_u)du}}{\G^n_{a_n}}$ is bounded in $L^{\tilde{p}}(\prob)$ for all $\tilde{p}>1$ and hence uniformly integrable. Thus,
\begin{equation*}
Z_{a_n}\condespalt{\prob_0}{\frac{1}{2}\int_{a_n}^{b_n} \ell^2\rho'\rho(X_u)du}{\G^n_{a_n}} = Z_{a_n}\frac{Z_{0,a_n}}{Z_{0,b_n}}\log\left(\frac{Z_{0,a_n}}{Z_{0,b_n}}\right),
\end{equation*}
is $\prob$ uniformly integrable, and since it converges to $0$ almost surely, it vanishes in expectation. We thus have $\espalt{}{Z^n_{T\wedge\tau^n}\log(Z^n_{T\wedge\tau^n})} \rightarrow \espalt{}{Z_{\delta\wedge T}\log(Z_{\delta\wedge T})}$. This gives
\begin{equation*}
G(t,x) = \lim_{n\uparrow\infty} G^n(t,x) \leq \frac{1}{\alpha}\espalt{}{Z_{\delta\wedge T}\log\left(Z_{\delta\wedge T}\right)} + \espalt{}{Z_{\delta\wedge T}1_{\delta>T}\phi(X_T)}.
\end{equation*}
Thus, we have shown part $(2)$. From this point on, the proof is same as in Proposition \ref{P:unwind_prob_A}, starting with the sentence right above \eqref{E:big_lb}.
\end{proof}

\appendix

\section{Supporting Lemmas}\label{S:lemmas}


\begin{lemma}\label{L:theta_lem}
Let $\theta$ be defined as in \eqref{E:theta_def}. Then
\begin{enumerate}[(1)]
\item For $y>0$, $y\dot{\theta}(y)(1+\theta(y)) = \theta(y)$.
\item $\lim_{y\uparrow\infty} \frac{\theta(y)}{\log(y)} = 1$.
\item For $x\in\reals$ and $y>0$
\begin{equation}\label{E:theta_lem}
\begin{split}
2y + x^2 - \theta(ye^x)^2 - 2\theta(ye^x) &\geq 0;\\
\sqrt{x^2+2y-\theta(ye^x)^2-2\theta(ye^x)} - |x-\theta(ye^x)| &\geq 0.
\end{split}
\end{equation}
For each of the above inequalities, there is equality if and only if $x=y>0$.
\end{enumerate}
\end{lemma}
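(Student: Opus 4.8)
The plan is to treat the three parts in order. Parts (1) and (2) follow by directly manipulating the defining relation $\theta(y)e^{\theta(y)} = y$, while part (3) reduces, after a change of variables, to the elementary inequality $e^{-w}+w-1\geq 0$.

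For (1), I would differentiate $\theta(y)e^{\theta(y)} = y$ in $y$ to obtain $\dot\theta(y)e^{\theta(y)}\big(1+\theta(y)\big) = 1$, then multiply through by $\theta(y)$ and substitute $\theta(y)e^{\theta(y)} = y$ to conclude $y\dot\theta(y)\big(1+\theta(y)\big) = \theta(y)$. For (2), taking logarithms in the defining relation gives $\theta(y) + \log\theta(y) = \log y$; since $\theta$ is increasing with $\theta(y)\uparrow\infty$ as $y\uparrow\infty$, dividing by $\log y$ and using $\log\theta(y)/\theta(y)\to 0$ yields $\theta(y)/\log(y)\to 1$.

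The substance is in (3). Set $z\dfn\theta(ye^x)$, which is strictly positive since $ye^x>0$, and put $w\dfn x - z$. The defining relation $ze^z = ye^x$ then gives $y = ze^{z-x} = ze^{-w}$. Substituting $y = ze^{-w}$ and $x = z+w$ into $2y + x^2 - z^2 - 2z$ and simplifying produces the identity
\[
2y + x^2 - \theta(ye^x)^2 - 2\theta(ye^x) = 2z\big(e^{-w}+w-1\big) + w^2 .
\]
Since $e^{-w}+w-1\geq 0$ for all $w\in\reals$ and $z>0$, both summands are nonnegative, giving the first inequality, with equality precisely when $w^2 = 0$, i.e. $w=0$. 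For the second inequality, the radicand is exactly the quantity just bounded below, so it is well defined; as both sides are nonnegative it suffices to compare squares, and since $x-z=w$ a short computation gives
\[
\big(x^2 + 2y - \theta(ye^x)^2 - 2\theta(ye^x)\big) - \big(x-\theta(ye^x)\big)^2 = 2z\big(e^{-w}+w-1\big)\geq 0,
\]
again with equality iff $w = 0$. It remains to translate the equality condition: $w=0$ means $x = z = \theta(ye^x)$, which by the defining relation is equivalent to $xe^x = ye^x$, i.e. $x=y$; and since $z>0$ this forces $x = y > 0$.

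I do not anticipate a serious obstacle here: the only points requiring care are the algebraic simplification leading to the displayed identity and the observation that $\theta(ye^x)>0$ automatically, so that the two nonnegative contributions vanish simultaneously exactly when $w=0$.
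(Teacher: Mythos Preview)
Your proof is correct. Parts (1) and (2) are essentially the same as in the paper, except that for (2) the paper invokes l'H\^{o}pital's rule via part (1), whereas you work directly from $\theta(y)+\log\theta(y)=\log y$; both are immediate.

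For part (3) your route is genuinely different. The paper fixes $y$, sets $h(x,y)$ equal to the left side of the first inequality, and computes $h_x=2\big(x-\theta(ye^x)\big)$ and $h_{xx}=2/(1+\theta(ye^x))>0$, so $h(\cdot,y)$ is strictly convex with its unique minimum at $x=\theta(ye^x)$, which forces $x=y$ and gives $h=0$ there; the second inequality is handled by an analogous derivative computation. You instead make the substitution $z=\theta(ye^x)$, $w=x-z$, which reduces both inequalities to the single elementary estimate $e^{-w}+w-1\geq 0$ via the clean identity $2y+x^2-z^2-2z = 2z(e^{-w}+w-1)+w^2$. Your approach is more algebraic and avoids any calculus in $x$; it also makes the equality case transparent, since both nonnegative summands vanish precisely when $w=0$. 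The paper's approach, by contrast, makes the convexity structure explicit, which is useful elsewhere when one wants to know that these expressions are strictly increasing away from the minimizer.
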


\begin{proof}[Proof of Lemma \ref{L:theta_lem}]
$\theta$ is clearly smooth, and $(1)$ follows by direct calculation since $ y = \theta(y)e^{\theta(y)}$. As for $(2)$, it is clear that $\theta(y)\rightarrow\infty$ as $y\uparrow\infty$. l'Hospital's rule then implies
\begin{equation*}
\lim_{y\uparrow\infty}\frac{\theta(y)}{\log(y)} = \lim_{y\uparrow\infty} y\dot{\theta}(y) = \lim_{y\uparrow\infty}\frac{\theta(y)}{1+\theta(y)} = 1,
\end{equation*}
where second to last equality follows from $(1)$. Lastly, for $(3)$, set $h(x,y)$ as the left hand side of  the first equation in \eqref{E:theta_lem}. Then
\begin{equation*}
\begin{split}
h_x(x,y) &= 2x - 2ye^{x}\theta(ye^x)\dot{\theta}(ye^x) - 2ye^x \dot{\theta}(ye^x) = 2(x-\theta(ye^x));\\
h_{xx}(x,y) &= 2 - 2ye^x\dot{\theta}(ye^x) = 2 - 2\frac{\theta(ye^x)}{1+\theta(ye^x)} = \frac{2}{1+\theta(ye^x)} > 0.
\end{split}
\end{equation*}
The last two equalities follow from $(1)$. Thus, for a fixed $y$, $h(x,y)$ has a unique minimum when $x = \theta(ye^x)$ but this can only happen when $x = y$ since by construction $y = \theta(ye^y)$.  In this case we have $2y + y^2 - \theta(ye^y)^2 -2\theta(ye^y) = 0$ which gives the result. As for the second inequality, it is equivalent to showing that $2y - \theta(ye^x)^2 - 2\theta(ye^x)(1-x) \geq 0$. Calculation using (1) shows the partial derivative with respect to $x$ of this function is $2\theta(ye^x)(x-\theta(ye^x))/(1+\theta)$. However, $\theta(ye^x) > 0$ and $x-\theta(ye^x)$ is strictly increasing in $x$ with a $0$ only at $x=y$.  Thus, the left-hand side of the second equation in \eqref{E:theta_lem} is uniquely minimized at $x=y>0$ for a value of $0$.

\end{proof}


\begin{lemma}\label{L:lieb_a_bounds}

For $\check{a}^n$ defined in \eqref{E:lieb_notation}, there is a constant $C(n)$ so that $z\check{a}^n(x,z,0) \leq C(n)(1+|z|^2)$ for $x\in\ol{E}_n$.
\end{lemma}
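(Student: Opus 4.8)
The plan is as follows. Setting $p=0$ in \eqref{E:lieb_notation} eliminates the terms linear and quadratic in $p$, leaving
\begin{equation*}
\check a^n(x,z,0) = \frac{\chi_n\gamma}{\alpha}(x) + \frac{\chi_n\mu^2}{2\alpha\sigma^2}(x) - \frac{\sigma^2\chi_n}{2\alpha}(x)\,\big(w_x(z)^2 - 2w_x(z)\big),
\end{equation*}
where $w_x(z) \dfn \theta\!\left(\tfrac{\gamma}{\sigma^2}(x)\,e^{\mu/\sigma^2(x) + \alpha z}\right) > 0$. On the compact set $\ol E_n$ all the coefficient functions occurring here, namely $\chi_n\gamma/\alpha$, $\chi_n\mu^2/(2\alpha\sigma^2)$, $\sigma^2\chi_n/(2\alpha)$, and also $\mu/\sigma^2$ and $\ln(\gamma/\sigma^2)$, are bounded, by Assumptions \ref{A:intensity} and \ref{A:asset_coeff} together with the positivity of $\gamma$ and $\sigma$ and the boundedness of $\chi_n$; fix a constant $C(n)$ dominating all of them, with the convention that $C(n)$ may increase from line to line.

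The key input is the elementary estimate $\theta(y) \le \max(1,\ln y) \le 1 + (\ln y)^+$ for $y>0$, which follows from $\theta(y)e^{\theta(y)} = y$: if $\theta(y) \ge 1$ then $e^{\theta(y)} \le \theta(y)e^{\theta(y)} = y$, whence $\theta(y) \le \ln y$, while if $\theta(y) < 1$ the bound is trivial. Applying this to $w_x(z)$ gives, for every $z \in \reals$,
\begin{equation*}
0 < w_x(z) \le 1 + \Big(\ln\tfrac{\gamma}{\sigma^2}(x) + \tfrac{\mu}{\sigma^2}(x) + \alpha z\Big)^+ \le C(n) + \alpha z^+ .
\end{equation*}
From this I read off the dichotomy that drives the argument: for $z \le 0$ the exponent does not exceed its value at $z=0$, so $w_x(z) \le C(n)$ uniformly in $z$; whereas for $z \ge 0$ one only gets the linear bound $w_x(z) \le C(n)(1+z)$.

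Now split on the sign of $z$. If $z \ge 0$, then since $\sigma^2\chi_n/(2\alpha) \ge 0$ and $w_x(z) \ge 0$ the term $-z\,\frac{\sigma^2\chi_n}{2\alpha}w_x(z)^2$ is nonpositive, so
\begin{equation*}
-z\,\frac{\sigma^2\chi_n}{2\alpha}\big(w_x(z)^2 - 2w_x(z)\big) \le z\,\frac{\sigma^2\chi_n}{\alpha}w_x(z) \le C(n)\,z(1+z) \le C(n)(1+z^2),
\end{equation*}
and $z$ times the bounded, nonnegative first two terms of $\check a^n(x,z,0)$ is at most $C(n)z \le C(n)(1+z^2)$; hence $z\,\check a^n(x,z,0) \le C(n)(1+z^2)$. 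If $z \le 0$, then $w_x(z) \le C(n)$ forces $|\check a^n(x,z,0)| \le C(n)$, so $z\,\check a^n(x,z,0) \le C(n)|z| \le C(n)(1+z^2)$. Combining the two cases yields $z\,\check a^n(x,z,0) \le C(n)(1+|z|^2)$ for all $x \in \ol E_n$ and $z \in \reals$, which is the assertion.

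The only genuine obstacle is the growth of $w_x(z)$ as $z \to +\infty$: a crude bound $w_x(z)^2 \le C(n)z^2$ would make $|z|\,w_x(z)^2$ cubic in $z$ and the estimate would break down. This is sidestepped precisely because for $z \ge 0$ the offending term $-\frac{\sigma^2\chi_n}{2\alpha}\,z\,w_x(z)^2$ carries the favorable sign and may be discarded, leaving only the piece $\frac{\sigma^2\chi_n}{\alpha}\,z\,w_x(z)$, which is of order $z^2$; and for $z \le 0$ the Lambert-$W$ value simply stays bounded. Deriving the non-asymptotic inequality $\theta(y) \le 1 + (\ln y)^+$, rather than invoking only the asymptotic statement in Lemma \ref{L:theta_lem}(2), is the one auxiliary computation needed.
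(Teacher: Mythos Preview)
Your proof is correct but follows a different route from the paper's. Both arguments split on the sign of $z$, yet the roles are essentially swapped. For $z<0$ the paper invokes Lemma~\ref{L:theta_lem}(3) (the inequality $2y + x^2 - \theta(ye^x)^2 - 2\theta(ye^x) \ge 0$, applied with $y=(\gamma/\sigma^2)e^{\alpha z}$ and $x=\mu/\sigma^2$) to conclude $\check a^n(x,z,0) \ge 0$ outright, whence $z\,\check a^n(x,z,0) \le 0$; for $z>0$ it simply drops the nonpositive $-\theta^2-2\theta$ contribution to get $\check a^n(x,z,0) \le C(n)$. You instead dispose of $z\le 0$ via monotonicity of $\theta$ (so $w_x(z)\le w_x(0)\le C(n)$ and $|\check a^n|\le C(n)$), while for $z\ge 0$ you need your non-asymptotic estimate $\theta(y)\le 1+(\ln y)^+$ to control the surviving linear piece $2w_x(z)$. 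Your approach is more self-contained, bypassing the special inequality of Lemma~\ref{L:theta_lem}(3) at the price of the extra logarithm bound; the paper's is shorter but leans on a lemma proved elsewhere. Incidentally, the definition in~\eqref{E:lieb_notation} carries a sign typo (compare~\eqref{E:G_PDE} and the paper's own display~\eqref{E:a_at_p0}: it should read $\theta^2+2\theta$), which is why your $z\ge 0$ case is more laborious than the paper's; with the correct sign the entire $\theta$-term is nonpositive and can be discarded as in the paper, but your argument works under either sign.
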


\begin{proof}[Proof of Lemma \ref{L:lieb_a_bounds}]

To alleviate notation, we suppress the $x$ function argument, but leave in the $z$ function argument. Also, $C(n)$ is a constant which may change from line to line. At $p=0$ we have
\begin{equation}\label{E:a_at_p0}
\check{a}^n(x,z,0)  = \frac{\sigma^2\chi_n}{2\alpha}\left(\frac{2\gamma}{\sigma^2} + \frac{\mu^2}{\sigma^4} - \theta\left(\frac{\gamma}{\sigma^2}e^{\frac{\mu}{\sigma^2} +\alpha z}\right)^2 - 2\theta\left(\frac{\gamma}{\sigma^2}e^{\frac{\mu}{\sigma^2} +\alpha z }\right)\right).
\end{equation}
When $z<0$ it follows from Lemma \ref{L:theta_lem} that
\begin{equation*}
\check{a}^n(x,z,0) \geq \frac{\sigma^2\chi_n}{2\alpha}\frac{2\gamma}{\sigma^2}\left(1-e^{\alpha z}\right) \geq 0,
\end{equation*}
and hence $z \check{a}^n(x,z,0) \leq 0$.  For $z>0$ it follows by the strict positivity of $\theta$  that
\begin{equation*}
\check{a}^n(x,z,0) \leq \frac{\sigma^2 \chi_n}{2\alpha}\left(\frac{2\gamma}{\sigma^2} + \frac{\mu^2}{\sigma^4}\right) \leq C(n),
\end{equation*}
since $\chi_n$ is supported on $\ol{E}_n$ and $\gamma,\mu,\sigma^2$ are all bounded on $\ol{E}_n$.  Thus, we have that $z \check{a}^n(x,z,0)\leq C(n)|z| \leq C(n)(1+z^2)$.

\end{proof}


\begin{lemma}\label{L:lieb_a_Op}

For the function $\check{a}^n$ from \eqref{E:lieb_notation}, and any interval $[z_1,z_2]$
\begin{equation*}
\limsup_{|p|\uparrow\infty} \sup_{x\in\ol{E}_n, z\in [z_1,z_2]} \frac{|\check{a}^n(x,z,p)|}{|p|^2} < \infty.
\end{equation*}

\end{lemma}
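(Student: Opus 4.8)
The plan is to bound each of the terms of $\check{a}^n$ appearing in \eqref{E:lieb_notation} by a quantity of order $(1+|p|)^2$, uniformly over $x\in\ol{E}_n$ and $z\in[z_1,z_2]$, and then divide by $|p|^2$. The structural fact to exploit is that on the \emph{compact} set $\ol{E}_n$ the continuous functions $b,A,\mu,\sigma,\rho,\gamma$ are all bounded, that $\sigma$ is bounded away from $0$ there (Assumption~\ref{A:asset_coeff}) and $\gamma$ is bounded away from $0$ there (Assumption~\ref{A:intensity}), and that $0\le\chi_n\le 1$. Writing $C(n)$ for a constant that may change from line to line and depends on $n$ and $z_1,z_2$, these facts immediately give $|b(x)'p|\le C(n)|p|$, $|p'A(x)p|\le C(n)|p|^2$, and $\tfrac{\sigma^2\chi_n}{2\alpha}\cdot\tfrac{2\gamma}{\sigma^2}=\tfrac{\chi_n\gamma}{\alpha}\le C(n)$; and since $a=\sqrt A$ and $\rho$ are bounded on $\ol{E}_n$,
\[
\Bigl|\tfrac{\mu(x)}{\sigma^2(x)}-\tfrac{\alpha}{\sigma(x)}p'a(x)\rho(x)\Bigr|\le C(n)(1+|p|),
\]
so the term $\tfrac{\sigma^2\chi_n}{2\alpha}\bigl(\tfrac{\mu}{\sigma^2}-\tfrac{\alpha}{\sigma}p'a\rho\bigr)^2$ (a bounded prefactor times this square) is of order $(1+|p|)^2$.

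The only nontrivial term is the $\theta$-piece $-\tfrac{\sigma^2\chi_n}{2\alpha}\bigl(\theta(x,z,p)^2-2\theta(x,z,p)\bigr)$; since $\tfrac{\sigma^2\chi_n}{2\alpha}\le C(n)$ it suffices to show $\theta(x,z,p)^2+2\theta(x,z,p)\le C(n)(1+|p|)^2$. I would set $y=y(x,z,p)=\tfrac{\gamma(x)}{\sigma^2(x)}\exp\bigl(\tfrac{\mu(x)}{\sigma^2(x)}+\alpha z-\tfrac{\alpha}{\sigma(x)}p'a(x)\rho(x)\bigr)$, so that $\theta(x,z,p)=\theta(y)$ with $\theta$ the Lambert-W function of \eqref{E:theta_def}. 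Since $\theta$ is increasing with $\theta(0)=0$, split into the case $0<y<e$, where $0<\theta(y)<1$ and the term is bounded, and the case $y\ge e$, where $\theta(y)\ge 1$, hence $e^{\theta(y)}=y/\theta(y)\le y$ and so $\theta(y)\le\log y$ (the quantitative form of Lemma~\ref{L:theta_lem}(2)). Finally $\log y=\log\gamma(x)-2\log\sigma(x)+\tfrac{\mu(x)}{\sigma^2(x)}+\alpha z-\tfrac{\alpha}{\sigma(x)}p'a(x)\rho(x)$, whose first four summands are bounded on $\ol{E}_n\times[z_1,z_2]$ and whose last summand is bounded by $C(n)|p|$; thus $\theta(x,z,p)^2+2\theta(x,z,p)\le C(n)(1+|p|)^2$ in all cases.

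Putting the pieces together yields $|\check{a}^n(x,z,p)|\le C(n)(1+|p|)^2$ uniformly in $x\in\ol{E}_n$ and $z\in[z_1,z_2]$, which upon division by $|p|^2$ and letting $|p|\uparrow\infty$ gives the claimed finite $\limsup$. The one place requiring thought is the $\theta$-estimate: one must notice that when $p$ pushes the exponent in $y$ to $-\infty$ the argument collapses to $0$ and $\theta$ stays bounded, whereas when $p$ pushes it to $+\infty$ the merely logarithmic growth of $\theta$ caps $\theta^2$ at order $|p|^2$. Everything else is a direct consequence of the compactness of $\ol{E}_n$ together with the uniform positivity of $\sigma$ and $\gamma$ on it.
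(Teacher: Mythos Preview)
Your proof is correct, but it takes a somewhat different route from the paper's. You bound each summand of $\check{a}^n$ separately and, for the $\theta$-piece, use the elementary Lambert-$W$ estimate $\theta(y)\le\log y$ for $y\ge e$ to conclude $\theta(x,z,p)\le C(n)(1+|p|)$, hence $\theta^2+2\theta=O(|p|^2)$. The paper instead leaves the bracketed expression
\[
\frac{2\gamma}{\sigma^2}+\Bigl(\frac{\mu}{\sigma^2}-\frac{\alpha}{\sigma}p'a\rho\Bigr)^2-\theta^2-2\theta
\]
intact and bounds it two-sidedly: above by simply dropping $-\theta^2-2\theta$ (positivity of $\theta$), and below by invoking the structural inequality of Lemma~\ref{L:theta_lem}(3), which yields the constant lower bound $(2\gamma/\sigma^2)(1-e^{\alpha z})$ independent of $p$. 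Your argument is more elementary and self-contained (it does not need Lemma~\ref{L:theta_lem}(3), only the crude growth rate of $\theta$), while the paper's argument is shorter and exploits a cancellation built into the HJB nonlinearity. Either way the conclusion is the same.
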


\begin{proof}[Proof of Lemma \ref{L:lieb_a_Op}]

We again suppress the $x$ function argument, but leave in the $z,p$ arguments. Since $b,A, \sigma,\chi_n,\gamma$ and $\mu$ are all bounded on $\ol{E}_n$ we need only consider the term
\begin{equation*}
\frac{2\gamma}{\sigma^2} + \left(\frac{\mu}{\sigma^2} - \frac{\alpha}{\sigma}p'a\rho\right)^2 - \theta\left(\frac{\gamma}{\sigma^2}e^{\frac{\mu}{\sigma^2} +\alpha z - \frac{\alpha}{\sigma}p'a\rho}\right)^2 - 2\theta\left(\frac{\gamma}{\sigma^2}e^{\frac{\mu}{\sigma^2} +\alpha z - \frac{\alpha}{\sigma}p'a\rho}\right)
\end{equation*}
Since $\theta > 0$ this is bounded above by $2\gamma/\sigma^2 + 2\mu^2/\sigma^4 + (2\alpha/\sigma^2)\rho'A\rho p'p$. By Lemma \ref{L:theta_lem} this is bounded below by $(2\gamma/\sigma^2)\left(1-e^{\alpha z}\right)$. The result readily follows.

\end{proof}


\begin{lemma}\label{L:delta_n_delta}
For $\delta$ defined in \eqref{E:delta} and $\delta^n$ defined in \eqref{E:delta_n} we have that $\delta^n\downarrow \delta$ almost surely.
\end{lemma}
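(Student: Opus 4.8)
The plan is to first show that $(\delta^n)$ is nonincreasing and bounded below by $\delta$, so that it converges a.s.\ to a limit $\delta^\infty\ge\delta$, and then to show $\delta^\infty=\delta$.

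For the monotonicity I would use the mollifier properties \eqref{E:moll_facts}. Since $\chi_{n+1}=1$ on $E_n$ and $\chi_n$ is supported in $\ol E_n$ with $\chi_n$ vanishing on $\partial E_n$ (so that $\{\chi_n>0\}\subseteq E_n$), one gets $\chi_n\le\chi_{n+1}$ on all of $E$; and since $E=\bigcup_m E_m$ with the $E_m$ increasing, every $x\in E$ lies in some $E_{m_0}$, whence $\chi_n(x)=1$ for all $n\ge m_0+1$. With $X=X^{t,x}$ and $0\le\chi_n\le\chi_{n+1}\le1$, the map $s\mapsto\int_t^s(\chi_n\gamma)(X_u)\,du$ is then nondecreasing in $s$, nondecreasing in $n$, and pointwise bounded above by $s\mapsto\int_t^s\gamma(X_u)\,du$. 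Reading this against the definitions \eqref{E:delta}, \eqref{E:delta_n} (with the convention $\inf\emptyset=+\infty$) gives $\delta\le\delta^{n+1}\le\delta^n$, so $\delta^n\downarrow\delta^\infty\ge\delta$ a.s.

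The substantive step is $\delta^\infty=\delta$ a.s. I would fix $\omega$ in the a.s.\ event on which $0<U<1$ (so $0<-\log U<\infty$) and $s\mapsto X_s(\omega)$ is continuous and $E$-valued for all $s\ge t$ (Assumption \ref{A:factor}). On $\{\delta=\infty\}$ we have $\delta^n=\infty$ for every $n$, so suppose $\delta(\omega)<\infty$. Continuity of $s\mapsto\int_t^s\gamma(X_u)\,du$ gives $\int_t^{\delta(\omega)}\gamma(X_u)\,du=-\log U(\omega)$, and since $\gamma>0$ on $E$ (Assumption \ref{A:intensity}) and $X_u(\omega)\in E$, we get $\int_t^s\gamma(X_u)\,du>-\log U(\omega)$ for every $s>\delta(\omega)$. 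Now fix $\eps>0$ and put $s=\delta(\omega)+\eps$: the set $\{X_u(\omega):u\in[t,s]\}$ is a compact subset of $E=\bigcup_m E_m$, hence is contained in $E_N$ for some $N=N(\omega,\eps)$, and therefore for all $n\ge N+1$ we have $\chi_n\equiv1$ on that set (as $\{\chi_n=1\}\supseteq E_{n-1}\supseteq E_N$). Hence $\int_t^s(\chi_n\gamma)(X_u)\,du=\int_t^s\gamma(X_u)\,du>-\log U(\omega)$; since the running integral for $\delta^n$ starts at $0<-\log U(\omega)$ and is continuous, it crosses the level $-\log U(\omega)$ by time $s$, i.e.\ $\delta^n(\omega)\le\delta(\omega)+\eps$ for all $n\ge N+1$. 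Letting $n\to\infty$ then $\eps\downarrow0$, together with $\delta^n\ge\delta$, yields $\delta^\infty(\omega)=\delta(\omega)$, and combining with the first step gives $\delta^n\downarrow\delta$ a.s.

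I expect the main (and really only) subtlety to be the compactness observation --- that a continuous $E$-valued path on a compact time interval lies inside a single $E_N$, so that the localized intensity $\chi_n\gamma$ coincides \emph{exactly} with $\gamma$ along the path for $n$ large --- which is what upgrades the monotone approximation $\chi_n\gamma\uparrow\gamma$ to the equality of running integrals needed to force $\delta^n$ down onto $\delta$; the strict positivity of $\gamma$ beyond time $\delta$ is then used to guarantee the running integral strictly overshoots $-\log U$. The remaining manipulations are routine bookkeeping with \eqref{E:moll_facts} and \eqref{E:delta}, \eqref{E:delta_n}.
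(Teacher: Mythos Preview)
Your proof is correct and follows essentially the same two-step approach as the paper: monotonicity of $(\delta^n)$ from $\chi_n\leq\chi_{n+1}\leq 1$, then identification of the limit with $\delta$. Your execution of the second step---a direct $\eps$-argument using compactness of $\{X_u:u\in[t,\delta+\eps]\}$ to force $\chi_n\equiv 1$ along the path for large $n$ and hence $\delta^n\le\delta+\eps$---is a bit more explicit (and handles the $\delta=\infty$, $\delta^n=\infty$ edge cases more carefully) than the paper's integral-limit manipulation, but the underlying mechanism is the same.
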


\begin{proof}[Proof of Lemma \ref{L:delta_n_delta}]
Fix $n< m$.  Since $\chi_n,\chi_m,\gamma$ are continuous and $\chi_n\leq \chi_m$ we have $-\log(U) = \int_0^{\delta^m}(\chi_m\gamma)(X_u)du \geq \int_0^{\delta^m} (\chi_n\gamma)(X_u)du$, so that $\delta^m\leq \delta^n$ and hence $\ol{\delta} = \lim_{n\uparrow\infty} \delta^n$ exists almost surely. Also, since $\chi_n\leq 1$ we have $-\log(U) = \int_0^{\delta}\gamma(X_u)du \geq \int_0^{\delta} (\chi_n\gamma)(X_u)du$, which gives $\delta\leq \delta^n$ and hence $\delta\leq \ol{\delta}$.  But
\begin{equation*}
\begin{split}
-\log(U) &= \lim_{n\uparrow\infty} \int_0^{\delta^n}(\chi_n\gamma)(X_u)du =\lim_{n\uparrow\infty}\left(\int_0^{\delta^n} \gamma(X_u)du + \int_0^{\delta^n} ((1-\chi_n)\gamma)(X_u)du\right) \geq \int_0^{\ol{\delta}}\gamma(X_u)du.
\end{split}
\end{equation*}
Thus $\int_0^{\ol{\delta}}\gamma(X_u)du \leq \int_0^{\delta} \gamma(X_u)du$ and hence $\ol{\delta} \leq \delta$, which gives the result.
\end{proof}


\begin{lemma}\label{L:lieb_A_B_C}
Let $A_k,B_k,C_k$ be as in \eqref{E:lieb_A_B_C}.  For any constant $C(k)>0$ we have that
\begin{equation}\label{E:lieb_A_B_C_inf}
\begin{split}
A^\infty_k &\dfn \limsup_{|p|\uparrow\infty}\sup_{x\in\ol{E}_k, z\in[-C(k),C(k)]} |A_k(x,z,p)| = 1.\\
B^\infty_k &\dfn \limsup_{|p|\uparrow\infty}\sup_{x\in\ol{E}_k, z\in[-C(k),C(k)]} |B_k(x,z,p)| < \infty.\\
C^\infty_k &\dfn \limsup_{|p|\uparrow\infty}\sup_{x\in\ol{E}_k, z\in[-C(k),C(k)]} |C_k(x,z,p)| = 0.
\end{split}
\end{equation}
\end{lemma}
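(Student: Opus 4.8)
The plan is to treat the three quantities in \eqref{E:lieb_A_B_C_inf} separately, exploiting that $\EN(x,p)=\tfrac{1}{2}p'A(x)p$ is $z$-free and homogeneous of degree two in $p$, while $\check{a}^n$ and its derivatives grow at most quadratically in $|p|$. I would first dispose of $A^\infty_k$, which in fact equals $1$ identically: since $A^{ij}(x)$ depends on neither $z$ nor $p$, one has $\ol{\delta}(p)[A^{ij}]=p'\nabla_p A^{ij}=0$, so the first term in the definition of $A_k$ in \eqref{E:lieb_A_B_C} vanishes; and Euler's identity applied to the degree-two homogeneous map $p\mapsto\EN(x,p)$ gives $\ol{\delta}(p)[\EN]=p'\nabla_p\EN=2\EN$, so $(\ol{\delta}(p)-1)[\EN]=\EN$ and hence $A_k(x,z,p)=\EN/\EN=1$ for every $(x,z,p)$. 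Thus $A^\infty_k=1$.

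The other two cases rest on a uniform polynomial growth estimate that I would establish first: on $\ol{E}_k\times[-C(k),C(k)]\times\reals^d$ the functions $\check{a}^n$, $\partial_z\check{a}^n$, $\nabla_x\check{a}^n$, $\nabla_p\check{a}^n$ are all $O(|p|^2)$ in $|p|$ (with $\partial_z\check{a}^n$ and $\nabla_p\check{a}^n$ in fact $O(|p|)$). For the polynomial-in-$p$ part of $\check{a}^n$ (the $b'p$, $-\tfrac{\alpha}{2}p'Ap$ and $(\mu/\sigma^2-\tfrac{\alpha}{\sigma}p'a\rho)^2$ terms) this is immediate from boundedness of the model coefficients and of $\chi_n$ on $\ol{E}_k$. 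For the $\theta$-part, write $\theta=\theta(y)$ with $\log y=\log(\gamma/\sigma^2)+\mu/\sigma^2+\alpha z-\tfrac{\alpha}{\sigma}p'a\rho$ affine in $(z,p)$; the elementary bound $\theta(y)\le 1+(\log y)^+$ then forces $0<\theta\le C(k)(1+|p|)$ on this set, while Lemma \ref{L:theta_lem}(1), i.e. $y\dot{\theta}(y)=\theta/(1+\theta)\in(0,1)$, turns each differentiation of $\theta$ into multiplication by a bounded factor (so $\nabla_p\theta$ and $\partial_z\theta$ are $O(1)$, and $\nabla_x\theta=O(|p|)$ since $\nabla_x\log y=O(|p|)$). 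Combining these gives $\partial_{x_k}(\theta^2-2\theta)=O(|p|^2)$, $\partial_z(\theta^2-2\theta)=O(|p|)$ and $\nabla_p(\theta^2-2\theta)=O(|p|)$, which together with Lemma \ref{L:lieb_a_Op} yields the claimed growth of $\check{a}^n$ and its derivatives.

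Finally I would plug these bounds into \eqref{E:lieb_A_B_C} and divide by $\EN(x,p)\ge\lambda_k p'p$ on $\ol{E}_k$. For $C^\infty_k$: the first term of $C_k$ equals $\tfrac{1}{8\lambda_k\EN}\sum_{ij}(p'\nabla_x A^{ij})^2/(p'p)\le\tfrac{1}{8\lambda_k\EN}\sum_{ij}|\nabla_x A^{ij}|^2=O(|p|^{-2})$, while $\delta(p)[\check{a}^n]=\partial_z\check{a}^n+\tfrac{1}{p'p}p'\nabla_x\check{a}^n=O(|p|)$, so $C_k=O(|p|^{-1})\to0$ uniformly and $C^\infty_k=0$. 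For $B^\infty_k$: $\delta(p)[\EN]=\tfrac{1}{p'p}p'\nabla_x\EN=O(|p|)$ (as $\EN_z=0$), and $(\ol{\delta}(p)-1)[\check{a}^n]=p'\nabla_p\check{a}^n-\check{a}^n=O(|p|^2)$, so $|B_k|\le C(k)$ for all large $|p|$ and hence $B^\infty_k\le C(k)<\infty$. The same estimates also bound $D_k$ of \eqref{E:lieb_D}, since $|p|\,(|\nabla_p\EN|+|\nabla_p\check{a}^n|)=O(|p|^2)$ and $\Lambda_k p'p=O(|p|^2)$, so its $\limsup$ is finite as well.

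I expect the main obstacle to be the bookkeeping for the $\theta$-part of $\check{a}^n$: one must verify that $\theta^2$ never grows faster than $|p|^2$ and that the operators $\delta(p)$ and $\ol{\delta}(p)$ do not worsen this, so that everything is absorbed by the quadratic denominator $\EN$. This is precisely where Lemma \ref{L:theta_lem}(1)--(2) is essential — the identity $y\dot{\theta}(y)=\theta/(1+\theta)$ makes every derivative of $\theta$ harmless, and the logarithmic growth $\theta(y)\sim\log y$, with $\log y$ affine in $p$, caps $\theta$ itself at linear growth — while the remaining terms, being genuinely polynomial in $p$, are handled by plain Euler-type identities.
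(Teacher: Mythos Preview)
Your proposal is correct and follows essentially the same approach as the paper: compute $A_k\equiv 1$ directly from $p$-independence of $A^{ij}$ and Euler's identity for $\EN$, then bound $B_k,C_k$ by combining the growth $\theta=O(|p|)$ (via Lemma~\ref{L:theta_lem}) with the quadratic lower bound $\EN\ge\lambda_k|p|^2$. The only difference is organizational---the paper writes out $\ol{\delta}(p)[\check a^n]$, $\nabla_x\check a^n$, etc.\ explicitly term by term, whereas you package the same computations as a single preliminary growth lemma; your remark on $D_k$ is extraneous here (that is Lemma~\ref{L:lieb_D}) but harmless.
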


\begin{proof}[Proof of Lemma \ref{L:lieb_A_B_C}]
Since $A^{ij}$ does not depend upon $p$ we have that $\ol{\delta}(p)[A^{ij}](x,z,p) = 0$.  Additionally, we have $\ol{\delta}(p)[\EN](x,z,p) = 2p'A(x)p$ so that $(\ol{\delta}(p)-1)[\EN](x,z,p) = p'A(x)'p = \EN(x,p)$.  This shows that $A_k(x,z,p) = 1$ which trivially gives the result. As for $B_k$ we first have, since $\EN(x,p)$ does not depend on $z$ that
\begin{equation*}
\frac{1}{\EN(x,p)}\delta(p)[\EN](x,z,p) = \frac{1}{p'p\ p'Ap}p'\nabla_x\left(p'A(x)p\right).
\end{equation*}
By the ellipticity and regularity of $A$ on $\ol{E}_k$, this term is on the order of $1/|p|$ so that
\begin{equation*}
\lim_{|p|\uparrow\infty}\sup_{x\in\ol{E}_k}\frac{1}{\EN(x,p)}\delta(p)[\EN](x,z,p) = 0.
\end{equation*}
Next, we must evaluate $(\ol{\delta}(p) -1)[\check{a}^n](x,z,p)$.  By grouping terms according to $p$ in \eqref{E:lieb_notation} we obtain
\begin{equation}\label{E:check_a_n_good}
\begin{split}
\check{a}^n(x,z,p) &= p'\left(b-\frac{\chi_n\mu}{\sigma}\rho'a\right)(x) - \frac{\alpha}{2}p'\left(A-\chi_n a\rho\rho'a'\right)(x)p + \frac{\chi_n}{\alpha}\left(\gamma + \frac{\mu^2}{2\sigma^2}\right)(x)\\
&\qquad - \frac{\chi_n\sigma^2}{2\alpha}(x)\left(\theta(x,z,p)^2 + 2\theta(x,z,p)\right).
\end{split}
\end{equation}
This implies that
\begin{equation}\label{E:check_a_n_p}
\begin{split}
\ol{\delta}(p)[\check{a}^n](x,z,p) = p'\left(b-\frac{\chi_n\mu}{\sigma}\rho'a\right)(x)-\alpha p'\left(A-\chi_n a\rho\rho'a'\right)(x)p + \chi_n\sigma \theta(x,z,p)p'a\rho,
\end{split}
\end{equation}
where the last term follows by Lemma \ref{L:theta_lem} which implies that for any smooth function $f(p)$ and constant $K>0$ that $\nabla_p\left(\theta(Ke^{f(p)})^2 + 2\theta(Ke^{f(p)})\right) = 2\theta(Ke^{f(p)})\nabla_p f(p)$. Here, we apply this to $K = (\gamma(x)/\sigma^2(x))e^{\mu(x)/\sigma^2(x) + \alpha z}$ and $f(p) = -(\alpha/\sigma(x))p'a(x)\rho(x)$.  We therefore have
\begin{equation*}
\begin{split}
&\frac{1}{\EN(x,p)}\left(\ol{\delta}(p)-1\right)[\check{a}^n](x,z,p)\\
&\quad = \frac{2}{p'A(x)p}\left(-\frac{\alpha}{2}p'\left(A-\chi_n a\rho\rho'a'\right)(x)p -\frac{\chi_n}{\alpha}\left(\gamma + \frac{\mu}{2\sigma^2}\right)(x)\right)\\
&\qquad + \frac{2}{p'A(x)p}\left(\chi_n\sigma(x)p'a\rho(x)\theta(x,z,p) + \frac{\chi_n \sigma^2}{2\alpha}(x)\left(\theta(x,z,p)^2-2\theta(x,z,p)\right)\right).
\end{split}
\end{equation*}
By the ellipticity of $A$ and coefficient regularity assumptions, the first term above is on the order of $1$, uniformly on $\ol{E}_n$, as $|p|\uparrow\infty$.  From the definition of $\theta(x,z,p)$ in \eqref{E:lieb_notation} and Lemma \ref{L:theta_lem} we deduce that $\theta(x,z,p)\approx O(|p|)$ uniformly for $x\in\ol{E}_n, z\in [-C(k),C(k)]$ and hence the second term above is also on the order of $1$, uniformly over $x\in\ol{E}_k,z\in[-C(k),C(k)]$. Thus
\begin{equation*}
\begin{split}
&\limsup_{|p|\uparrow\infty}\sup_{x\in\ol{E}_k,z\in [-C(k),C(k)]} |B_k(x,z,p)|\\
&\quad \leq \limsup_{|p|\uparrow\infty}\sup_{x\in\ol{E}_k}\frac{1}{\EN(x,p)}\delta(p)[\EN](x,z,p) + \limsup_{|p|\uparrow\infty}\sup_{x\in\ol{E}_k,z\in [-C(k),C(k)]}\frac{1}{\EN(x,p)}\left(\ol{\delta}(p)-1\right)[\check{a}^n](x,z,p) < \infty.
\end{split}
\end{equation*}
Lastly, we must consider $C_k$. We first evaluate
\begin{equation*}
\delta(p)[A^{ij}](x,z,p) = \frac{p'\nabla_x A^{ij}(x)}{p'p}.
\end{equation*}
Clearly, this is on the order of $1/|p|$ so that
\begin{equation}\label{E:C_k_first_part}
\limsup_{|p|\uparrow\infty}\sup_{x\in\ol{E}_n}\frac{1}{\EN(x,p)}\frac{p'p}{8\lambda_k}\sum_{i,j=1}^d \left(\delta(p)[A^{ij}]{x,z,p}\right)^2 = 0.
\end{equation}
Next, we must evaluate
\begin{equation*}
\delta(p)[\check{a}^n](x,z,p) = \check{a}^n_z(x,z,p) + \frac{p'\nabla_x\check{a}^n(x,z,p)}{p'p}.
\end{equation*}
From \eqref{E:check_a_n_good} and Lemma \ref{L:theta_lem} we see that $\check{a}^n_z(x,z,p) = -\chi_n\sigma^2(x)\theta(x,z,p)$, which by Lemma \ref{L:theta_lem} is on the order of $|p|$.  Next, we have from \eqref{E:check_a_n_good} that
\begin{equation*}
\begin{split}
\nabla_x\check{a}^n(x,z,p) &= \sum_{i=1}^d p_i \nabla_x\left(b-\frac{\chi_n\mu}{\sigma}\rho'a\right)^i(x) - \frac{\alpha}{2}\sum_{i,j=1}^d p_i p_j \nabla_x\left(A-\chi_n a\rho\rho'a'\right)^{ij}(x)\\
&\quad + \nabla_x\left(\frac{\chi_n}{\alpha}\left(\gamma + \frac{\mu^2}{2\sigma^2}\right)\right)(x) - \left(\theta^2(x,z,p)+2\theta(x,z,p)\right)\nabla_x\left(\frac{\chi_n\sigma^2}{2\alpha}\right)(x)\\
&\quad - \frac{\chi_n\sigma^2}{2\alpha}(x)\nabla_x\left(\theta^2(x,z,p)+2\theta(x,z,p)\right).
\end{split}
\end{equation*}
The terms on the first two lines above are on the order of $|p|^2$.  Using \eqref{E:lieb_notation} and Lemma \ref{L:theta_lem} we obtain
\begin{equation*}
\nabla_x\left(\theta^2(x,z,p)+2\theta(x,z,p)\right) = 2\theta(x,z,p)\left(\nabla_x\left(\log\left(\frac{\gamma}{\sigma^2}\right)+\frac{\mu}{\sigma^2}\right)(x) - \alpha\sum_{i=1}^d p_i \nabla_x\left(\frac{a\rho}{\sigma}\right)^{i}(x)\right).
\end{equation*}
Again, using \eqref{E:lieb_notation} and Lemma \ref{L:theta_lem} we see this is on the order of $|p|^2$.  Thus, we see that
\begin{equation*}
- \frac{\chi_n\sigma^2}{2\alpha}(x)\nabla_x\left(\theta^2(x,z,p)+2\theta(x,z,p)\right)
\end{equation*}
is on the order of $|p|^2$ which implies that $\nabla_x\check{a}^n(x,z,p)$ is on the order of $|p|^2$ as well.  Thus, we deduce that
\begin{equation*}
\delta(p)[\check{a}^n](x,z,p) = \check{a}^n_z(x,z,p) + \frac{p'\nabla_x\check{a}^n(x,z,p)}{p'p}
\end{equation*}
is on the order of $|p|$ so that
\begin{equation*}
\limsup_{|p|\uparrow\infty}\sup_{x\in\ol{E}_k,z\in[-C(k),C(k)]} \frac{1}{\EN(x,p)}\delta(p)[\check{a}^n](x,z,p) = 0
\end{equation*}
finishing the proof.
\end{proof}


\begin{lemma}\label{L:lieb_D}
Let $D_k$ be as in \eqref{E:lieb_D}.  For any constant $C(k)>0$ we have that
\begin{equation}\label{E:lieb_D_inf}
\begin{split}
D^\infty_k &\dfn \limsup_{|p|\uparrow\infty}\sup_{x\in\ol{E}_k, z\in[-C(k),C(k)]} |D_k(x,z,p)| < \infty.\\
\end{split}
\end{equation}
\end{lemma}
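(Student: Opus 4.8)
The plan is to bound the three summands making up $D_k$ in \eqref{E:lieb_D} one at a time, relying only on the local ellipticity of $A$ on $\ol{E}_k$ and the boundedness there of the model coefficients, in the spirit of the proof of Lemma \ref{L:lieb_A_B_C}. Since $\EN(x,p) = \tfrac12 p'A(x)p \geq \lambda_k p'p$ for $x\in\ol{E}_k$ by the choice of $\lambda_k$, we have $1/\EN(x,p) \leq 1/(\lambda_k|p|^2)$. This at once handles the first summand, $\Lambda_k p'p/\EN(x,p) \leq \Lambda_k/\lambda_k$. For the second, $\nabla_p\EN(x,p) = A(x)p$, and since $A\in C^{2,\beta}(E;\mathbb{S}^d_{++})$ is bounded on the compact set $\ol{E}_k$ we get $|p|\,|\nabla_p\EN(x,p)| \leq C(k)|p|^2$, whence $|p|\,|\nabla_p\EN(x,p)|/\EN(x,p) \leq C(k)/\lambda_k$, uniformly in $x\in\ol{E}_k$ and $p$.

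The only term requiring work is $|p|\,|\nabla_p\check{a}^n(x,z,p)|/\EN(x,p)$, and for it, it suffices to establish the uniform bound $|\nabla_p\check{a}^n(x,z,p)| \leq C(k)(1+|p|)$ for $x\in\ol{E}_k$ and $z\in[-C(k),C(k)]$; the quotient is then at most $C(k)(1+|p|)|p|/(\lambda_k|p|^2)$, which remains bounded as $|p|\uparrow\infty$. I would obtain this bound by differentiating the grouped form \eqref{E:check_a_n_good} in $p$ — equivalently, from the same differentiation that produced \eqref{E:check_a_n_p}, read off before contracting with $p'$: the linear-in-$p$ term of \eqref{E:check_a_n_good} contributes the bounded vector $(b-(\chi_n\mu/\sigma)\rho'a)(x)$, the quadratic term contributes $-\alpha(A-\chi_n a\rho\rho'a')(x)p$, which is $O(|p|)$ by boundedness of the coefficients on $\ol{E}_k$ and $0\le\chi_n\le 1$, the term constant in $p$ contributes $0$, and the last term contributes $\chi_n\sigma(x)\,\theta(x,z,p)\,a(x)\rho(x)$, by the identity $\nabla_p(\theta(Ke^{f(p)})^2+2\theta(Ke^{f(p)})) = 2\theta(Ke^{f(p)})\nabla_p f(p)$ from Lemma \ref{L:theta_lem} applied with $K=(\gamma/\sigma^2)(x)e^{(\mu/\sigma^2)(x)+\alpha z}$ and $f(p) = -(\alpha/\sigma(x))p'a(x)\rho(x)$.

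It then remains only to control $\theta(x,z,p)$, which is exactly the uniform linear growth estimate already used in the proof of Lemma \ref{L:lieb_A_B_C}: from $\theta(y)e^{\theta(y)}=y$ one has the elementary bound $0<\theta(y)\le 1+|\log y|$ for $y>0$, and here $\log y = \log(\gamma/\sigma^2)(x)+(\mu/\sigma^2)(x)+\alpha z-(\alpha/\sigma(x))p'a(x)\rho(x)$, whose absolute value is at most $C(k)(1+|p|)$ uniformly over $x\in\ol{E}_k$ and $z\in[-C(k),C(k)]$ because $\gamma,\mu,\sigma,1/\sigma,a,\rho$ are all bounded there; hence $\theta(x,z,p)\le C(k)(1+|p|)$. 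Combining the four contributions gives $|\nabla_p\check{a}^n(x,z,p)|\le C(k)(1+|p|)$ and therefore \eqref{E:lieb_D_inf}. The whole argument is routine bookkeeping; the only point worth singling out is the uniform $O(|p|)$ bound on $\theta(x,z,p)$, and that is borrowed essentially verbatim from the proof of Lemma \ref{L:lieb_A_B_C}.
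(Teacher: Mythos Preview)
Your proof is correct and follows essentially the same approach as the paper: both handle the first two summands via the ellipticity bound $\EN(x,p)\ge\lambda_k|p|^2$ together with $|\nabla_p\EN(x,p)|=|A(x)p|$, and both compute $\nabla_p\check{a}^n$ from the grouped form \eqref{E:check_a_n_good} (the paper simply quotes this as \eqref{E:check_a_n_p} with the $p'$ stripped) and then invoke the $O(|p|)$ growth of $\theta(x,z,p)$. Your only addition is the explicit elementary bound $\theta(y)\le 1+|\log y|$ in place of the paper's appeal to Lemma~\ref{L:theta_lem}, which is a harmless refinement.
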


\begin{proof}[Proof of Lemma \ref{L:lieb_D}]
Since $|\nabla_p\EN(x,p)| = |A(x)p|$ is clear by the ellipticity of $A$ that
\begin{equation*}
\limsup_{|p|\uparrow\infty}\sup_{x\in\ol{E}_k} \frac{|p|^2\Lambda_k + |p||\nabla_p\EN(x,p)|}{\EN(x,p)} < \infty.
\end{equation*}
As for $(1/\EN(x,p))|p||\nabla_p\check{a}^n|$, from \eqref{E:check_a_n_p} we see that
\begin{equation*}
\nabla_p\check{a}^n(x,z,p) = \left(b-\frac{\chi_n\mu}{\sigma}\rho'a\right)(x)-\alpha \left(A-\chi_n a\rho\rho'a'\right)(x)p + \chi_n\sigma \theta(x,z,p)a\rho,
\end{equation*}
In light of Lemma \ref{L:theta_lem} we see this term is on the order of $|p|$. Therefore
\begin{equation*}
\limsup_{|p|\uparrow\infty}\sup_{x\in\ol{E}_k} \frac{|p||\nabla_p\check{a}^n(x,z,p)|}{\EN(x,p)} < \infty,
\end{equation*}
which finishes the result.
\end{proof}


\begin{lemma}\label{L:ITO_result}

Let $G^n$ be from Proposition \ref{P:local_pde_exist} and recall the function $\hpi^n$ from \eqref{E:opt_pi_n} and wealth process $\hwe^n$ from \eqref{E:wealth_dynamics_n}. Then, for $\hz^n$ as in \eqref{E:hat_Z_n} it follows that
\begin{equation*}
\begin{split}
\hz^n_s &= 1 -\alpha\int_t^s \hz^n_{u-}1_{u\leq\tau^n\wedge\delta^n}\left(\hpi^n\chi_n\sigma \rho' + (\nabla G^n)'a\right)(u,X_u)dW_u\\
&-\alpha\int_t^s \hz^n_{u-} 1_{u\leq\tau^n\wedge\delta^n}\hpi^n\sqrt{\chi_n}\sigma\sqrt{1-\chi_n\rho'\rho}(u,X_u)dW^0_u\\
&+ \int_t^s \hz^n_{u-}1_{u\leq\tau^n}\left(e^{\alpha\left(\hpi^n + G^n\right)(u,X_u)}-1\right)dM^n_u.
\end{split}
\end{equation*}
Furthermore, on $[t,T]\times E_n$ it follows that
\begin{equation}\label{E:functional_mkt_px_of_risk_n}
\begin{split}
&\chi_n(\mu-\gamma) + \chi_n\sigma\rho'\left(\hpi^n\chi_n\sigma \rho' + (\nabla G^n)'a\right) + \sqrt{\chi_n}\sigma\sqrt{1-\chi_n\rho'\rho}\hpi^n\sqrt{\chi_n}\sigma\sqrt{1-\chi_n\rho'\rho}\\
&\qquad   - \chi_n\gamma\left(e^{\alpha\left(\hpi^n + G^n\right)}-1\right) = 0.
\end{split}
\end{equation}

\end{lemma}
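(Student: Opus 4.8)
The plan is to apply It\^o's formula to $\hz^n$ from \eqref{E:hat_Z_n} and show the finite-variation part vanishes; driftlessness is exactly where the PDE \eqref{E:G_PDE_local} and the defining relation \eqref{E:theta_def} for $\theta$ enter. Write $\hz^n_s = e^{-\alpha Y_s}$ with $Y_s \dfn \hwe^n_s - G^n(t,x) + 1_{s\wedge\tau^n < \delta^n}\,G^n(s\wedge\tau^n,X_{s\wedge\tau^n})$, where $\hwe^n = \We^{\hpi^n}$. On $\{s<\tau^n\wedge\delta^n\}$ (where $X_s\in E_n$) one has $Y_s = \hwe^n_s - G^n(t,x) + G^n(s,X_s)$; combining \eqref{E:wealth_dynamics_n} with $dM^n_s = dH^n_s - 1_{s\le\delta^n}(\chi_n\gamma)(X_s)\,ds$ collapses the pre-default drift of $\hwe^n$ to $\hpi^n\chi_n\mu$, and It\^o applied to $G^n(s,X_s)$ via $dX_s = b\,ds + a\,dW_s$ and $L = \tfrac12\tr(AD^2)+b'\nabla$ gives $dG^n(s,X_s) = (G^n_t + LG^n)\,ds + (\nabla G^n)'a\,dW_s$. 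Hence, on $\{s<\tau^n\wedge\delta^n\}$ and with all coefficients at $(s,X_s)$,
\[
dY^c_s = \big(\hpi^n\chi_n\mu + G^n_t + LG^n\big)\,ds + \big(\hpi^n\chi_n\sigma\rho' + (\nabla G^n)'a\big)\,dW_s + \hpi^n\sqrt{\chi_n}\,\sigma\sqrt{1-\chi_n\rho'\rho}\;dW^0_s .
\]

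At $\delta^n$ (when $\delta^n\le\tau^n$) the process $\hwe^n$ jumps by $-\hpi^n(\delta^n,X_{\delta^n})$ while the indicator term in $Y$ drops from $G^n(\delta^n,X_{\delta^n})$ to $0$, so $\Delta Y_{\delta^n} = -(\hpi^n+G^n)(\delta^n,X_{\delta^n})$ and $\Delta\hz^n_{\delta^n} = \hz^n_{\delta^n-}\big(e^{\alpha(\hpi^n+G^n)(\delta^n,X_{\delta^n})}-1\big)$; beyond $\tau^n\wedge\delta^n$ all of $\hwe^n$, $X$ and the indicator term are frozen, so $\hz^n$ is constant. The exponential It\^o formula then gives
\[
\frac{d\hz^n_s}{\hz^n_{s-}} = 1_{s\le\tau^n\wedge\delta^n}\Big(-\alpha\,dY^c_s + \tfrac{\alpha^2}{2}\,d[Y^c]_s\Big) + 1_{s\le\tau^n}\big(e^{\alpha(\hpi^n+G^n)(s,X_s)}-1\big)\,dH^n_s ,
\]
and substituting $dH^n_s = dM^n_s + 1_{s\le\delta^n}(\chi_n\gamma)(X_s)\,ds$ moves the compensator into the drift. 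A short computation of $[Y^c]$ shows the $\rho'\rho$-terms cancel, leaving $d[Y^c]_s = 1_{s\le\tau^n\wedge\delta^n}\big(2\hpi^n\chi_n\sigma\rho'a\nabla G^n + (\nabla G^n)'A\nabla G^n + (\hpi^n)^2\chi_n\sigma^2\big)\,ds$, so the $dW,dW^0$ parts reproduce the stated integrals and it remains to show the drift coefficient
\[
-\alpha\big(\hpi^n\chi_n\mu + G^n_t + LG^n\big) + \tfrac{\alpha^2}{2}\big(2\hpi^n\chi_n\sigma\rho'a\nabla G^n + (\nabla G^n)'A\nabla G^n + (\hpi^n)^2\chi_n\sigma^2\big) + \chi_n\gamma\big(e^{\alpha(\hpi^n+G^n)}-1\big)
\]
vanishes on $[0,T]\times E_n$.

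This identity is the heart of the matter. I would use the PDE \eqref{E:G_PDE_local} to eliminate $G^n_t+LG^n$ — the two $\tfrac{\alpha^2}{2}(\nabla G^n)'A\nabla G^n$ terms then cancel — divide by $\chi_n>0$, and insert \eqref{E:opt_pi_n}. Writing $q\dfn\mu/\sigma^2 - (\alpha/\sigma)(\nabla G^n)'a\rho$ and $\theta\dfn\theta_{G^n}$, so $\alpha\hpi^n = q-\theta$, the crucial nonlinear relation is
\[
e^{\alpha(\hpi^n+G^n)} = \frac{\sigma^2\theta}{\gamma},
\]
immediate from \eqref{E:theta_def}: since $\theta e^{\theta} = (\gamma/\sigma^2)e^{\mu/\sigma^2 + \alpha G^n - (\alpha/\sigma)(\nabla G^n)'a\rho} = (\gamma/\sigma^2)e^{\alpha G^n + q}$, one gets $\alpha(\hpi^n+G^n) = \alpha G^n + q - \theta = \log(\theta e^{\theta}) + \log(\sigma^2/\gamma) - \theta = \log(\sigma^2\theta/\gamma)$. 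Substituting also $\sigma\rho'a\nabla G^n = (\mu-\sigma^2q)/\alpha$, $\alpha^2(\hpi^n)^2\sigma^2 = \sigma^2(q-\theta)^2$ and $\gamma(e^{\alpha(\hpi^n+G^n)}-1) = \sigma^2\theta-\gamma$, the $\gamma$'s cancel and what is left is a polynomial in $q,\theta$ with common factor $\sigma^2$ all of whose coefficients (of $q^2$, $q\theta$, $\theta^2$) vanish; this proves the stated SDE for $\hz^n$.

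For \eqref{E:functional_mkt_px_of_risk_n}, note that it is the market-price-of-risk condition of Lemma \ref{L:mkt_px_of_risk_n} for the integrands $A^n = -\alpha(\hpi^n\chi_n\sigma\rho + a\nabla G^n)$, $B^n = -\alpha\hpi^n\sqrt{\chi_n}\sigma\sqrt{1-\chi_n\rho'\rho}$, $C^n = e^{\alpha(\hpi^n+G^n)}-1$ read off from that SDE; after the same $\rho'\rho$ cancellation and division by $\chi_n$ it reduces to $(\mu-\gamma) - \alpha\sigma\rho'a\nabla G^n - \alpha\sigma^2\hpi^n - \gamma(e^{\alpha(\hpi^n+G^n)}-1) = 0$, which the identical substitutions kill term by term; this part uses only the form \eqref{E:opt_pi_n} of $\hpi^n$ and the relation above, not the PDE. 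The main obstacle is not a deep step but disciplined bookkeeping: pinning down the jump of the exponent at $\delta^n$ and the freezing past $\tau^n\wedge\delta^n$ exactly, and then executing the multi-term cancellation without sign errors, where $e^{\alpha(\hpi^n+G^n)}=\sigma^2\theta_{G^n}/\gamma$ is the linchpin that eliminates the transcendental $\theta$ from the final balance.
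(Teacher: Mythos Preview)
Your argument is correct and follows essentially the same route as the paper: write $\hz^n=e^{-\alpha Y}$, apply It\^o to the exponential, identify the jump at $\delta^n$, use the PDE \eqref{E:G_PDE_local} to kill the drift, and then check \eqref{E:functional_mkt_px_of_risk_n} separately from \eqref{E:opt_pi_n} and the defining property of $\theta$. Your packaging via $q=\mu/\sigma^2-(\alpha/\sigma)(\nabla G^n)'a\rho$ and the explicit identity $e^{\alpha(\hpi^n+G^n)}=\sigma^2\theta_{G^n}/\gamma$ is a clean way to organize the cancellation; the paper arrives at the same endpoint by expanding everything and reducing to $\theta_{G^n}e^{\theta_{G^n}}-(\gamma/\sigma^2)e^{q+\alpha G^n}=0$, which is the same relation written differently.
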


\begin{proof}[Proof of Lemma \ref{L:ITO_result}]
Write $\hz^n_s = e^{-\alpha Y^n_s}$ where $Y^n_s = \hwe^n_s - G^n(t,x) + 1_{s\wedge\tau^n < \delta^n} G(s\wedge\tau^n, X_{s\wedge\tau^n})$.  \ito's formula implies
\begin{equation}\label{E:ITO_1}
\begin{split}
\hz^n_s &=  1 - \alpha\int_t^s \hz^n_{u-}d(Y^n)^c_u + \frac{\alpha^2}{2}\int_t^s \hz^n_{u-}d[Y^n,Y^n]^c_u + \sum_{t<u\leq s}\hz^n_{u-1}\left(\frac{\hz^n_u}{\hz^n_{u-}}-1\right),
\end{split}
\end{equation}
where $(Y^n)^c$ is the continuous part of $Y^n$. From \eqref{E:S_n_dynamics} and the integration by parts formula:
\begin{equation*}
\begin{split}
dY^n_s & = 1_{s\leq\tau^n}1_{s\leq\delta^n}\hpi^n(s,X_s)\left(\left(\chi_n\mu\right)(X_s)ds + \left(\chi_n\sigma\rho\right)(X_s)'dW_s + \left(\sqrt{\chi_n}\sigma\sqrt{1-\chi_n\rho'\rho}\right)(X_s)dW^0_s\right)\\
&\qquad  - 1_{s\leq\tau^n}\hpi^n(s,X_s)dH^n_s + 1_{s\wedge\tau^n \leq \delta^n}1_{s\leq\tau^n}\left(G^n_t + LG^n\right)(s,X_s)ds\\
&\qquad  + 1_{s\wedge\tau^n\leq\delta^n}1_{s\leq\tau^n}\left((\nabla G^n)'a\right)(X_s)dW_s - 1_{s\leq\tau^n}G^n(s,X_s)dH^n_s.
\end{split}
\end{equation*}
Collecting terms, this implies
\begin{equation}\label{E:ITO_2}
\begin{split}
d(Y^n)^{c}_s &=1_{s\leq\tau^n\wedge\delta^n}\left(\hpi^n\chi_n\mu + G^n_t + LG^n\right)(s,X_s)ds\\
&\quad + 1_{s\leq\tau^n\wedge\delta^n}\left(\hpi^n\chi_n\sigma\rho' + (\nabla G^n)'a\right)(s,X_s)dW_s\\
&\quad + 1_{s\leq\tau^n\wedge\delta^n}\left(\hpi^n\sqrt{\chi_n}\sigma\sqrt{1-\chi_n\rho'\rho}\right)(s,X_s)dW^0_s,
\end{split}
\end{equation}
and
\begin{equation}\label{E:ITO_3}
d\left[Y^n,Y^n\right]^c_s = 1_{s\leq\tau^n\wedge\sigma^n}\left((\hpi^n)^2\chi_n\sigma^2 + 2\hpi^n\chi_n\sigma(\nabla G^n)'a\rho + (\nabla G^n)'A\nabla G^n\right)(s,X_s)ds.
\end{equation}
Note that $\hz^n$ will only jump at $u\leq s$ if $u=\delta^n\leq\tau^n$.  In this case $\Delta Y^n_u = -\hpi^n(u,X_u) - G^n(u,X_u)$ so that $\hz^n_u/\hz^n_{u-} = e^{\alpha(\hpi^n(u,X_u) + G^n(u,X_u))}$.  It thus follows that
\begin{equation}\label{E:ITO_4}
\begin{split}
\sum_{t<u\leq s} Z^n_{u-}\left(\frac{Z^n_u}{Z^n_{u-}}-1\right) &= \int_t^s Z^n_{u-}1_{u\leq\tau^n}\left(e^{\alpha\left(\hpi^n + G^n\right)(u,X_u)}-1\right)dH^n_u\\
&= \int_t^s Z^n_{u-}1_{u\leq\tau^n}\left(e^{\alpha\left(\hpi^n + G^n\right)(u,X_u)}-1\right)dM^n_u\\
&\qquad\qquad  + \alpha\int_t^sZ^n_{u-}1_{u\leq\tau^n\wedge\delta^n}\frac{1}{\alpha}\left(e^{\alpha\left(\hpi^n + G^n\right)}-1\right)\chi_n\gamma(u,X_u)du
\end{split}
\end{equation}
Collecting the results in \eqref{E:ITO_2},\eqref{E:ITO_2}, \eqref{E:ITO_3} and using them in \eqref{E:ITO_4} gives, in differential notation,
\begin{equation}\label{E:ITO_5}
\begin{split}
\frac{d\hz^n_u}{\hz^n_{u-}} & = -\alpha 1_{u\leq \tau^n\wedge\delta^n}\mathbf{A}^n_u du\\
&\quad -\alpha 1_{u\leq\tau^n\wedge\delta^n}\left(\left(\hpi^n\chi_n\sigma \rho' + (\nabla G^n)'a\right)(u,X_u)dW_u + \hpi^n\sqrt{\chi_n}\sigma\sqrt{1-\chi_n\rho'\rho}(u,X_u)dW^0_u\right)\\
&\quad + 1_{u\leq \tau^n}\left(e^{\alpha\left(\hpi^n + G^n\right)(u,X_u)}-1\right)dM^n_u.
\end{split}
\end{equation}
where, at $(u,X_u)$
\begin{equation*}
\begin{split}
\mathbf{A}^n_u &= \hpi^n\chi_n\mu + G^n_t + LG^n - \frac{1}{2}\alpha\left((\hpi^n)^2\chi_n\sigma^2 + 2\hpi^n\chi_n\sigma(\nabla G^n)'a\rho + (\nabla G^n)'A\nabla G^n\right)\\
&\qquad -\frac{1}{\alpha}\left(e^{\alpha\left(\hpi^n+G^n\right)}-1\right)\chi_n\gamma.
\end{split}
\end{equation*}
Now, for any $(u,y)\in (t,T)\times E_n$ we have (suppressing function arguments), using that $G^n$ solves \eqref{E:G_PDE_local}:
\begin{equation*}
G^n_t + LG^n - \frac{\alpha}{2}(\nabla G^n)'A\nabla G^n = -\frac{\chi_n\sigma^2}{2\alpha}\left(\frac{2\gamma}{\sigma^2} + \left(\frac{\mu}{\sigma^2} - \frac{\alpha}{\sigma}(\nabla G^n)'a\rho\right)^2 - \theta_{G^n}^2 - 2\theta_{G^n}\right).
\end{equation*}
Plugging this into the above gives
\begin{equation*}
\begin{split}
\mathbf{A}^n_u =&\hpi^n\chi_n\mu - \frac{1}{2}\alpha (\hpi^n)^2\chi_n\sigma^2  - \alpha \hpi^n\chi_n\sigma(\nabla G^n)'a\rho - \frac{1}{\alpha}\left(e^{\alpha\left(\hpi^n + G^n\right)}-1\right)\chi_n\gamma\\
&\quad - \frac{\chi_n\sigma^2}{2\alpha}\left(\frac{2\gamma}{\sigma^2} + \left(\frac{\mu}{\sigma^2} - \frac{\alpha}{\sigma}(\nabla G^n)'a\rho\right)^2 - \theta_{G^n}^2 - 2\theta_{G^n}\right).
\end{split}
\end{equation*}
Note that $\chi_n$ factors out of the right hand side above.  Grouping by $\hpi^n$, the remaining terms are
\begin{equation*}
\begin{split}
&\hpi^n\left(\mu - \alpha\sigma(\nabla G^n)'a\rho\right) - \frac{1}{2}\alpha (\hpi^n)^2\sigma^2   - \frac{1}{\alpha}\left(e^{\alpha\left(\hpi^n + G^n\right)}-1\right)\gamma\\
&\quad - \frac{\sigma^2}{2\alpha}\left(\frac{2\gamma}{\sigma^2} + \left(\frac{\mu}{\sigma^2} - \frac{\alpha}{\sigma}(\nabla G^n)'a\rho\right)^2 - \theta_{G^n}^2 - 2\theta_{G^n}\right).
\end{split}
\end{equation*}
We next plug in for $\hpi^n$ from \eqref{E:opt_pi_n}. This gives
\begin{equation*}
\begin{split}
&\frac{1}{\alpha}\left(\frac{\mu}{\sigma^2} - \frac{\alpha}{\sigma}(\nabla G^n)'a\rho - \theta_{G_n}\right)\left(\mu - \alpha\sigma(\nabla G^n)'a\rho\right) - \frac{1}{2}\alpha\sigma^2 \frac{1}{\alpha^2}\left(\frac{\mu}{\sigma^2} - \frac{\alpha}{\sigma}(\nabla G^n)'a\rho - \theta_{G_n}\right)^2\\
&\quad   - \frac{1}{\alpha}\left(e^{\alpha\left(\frac{1}{\alpha}\left(\frac{\mu}{\sigma^2} -\frac{\alpha}{\sigma}(\nabla G^n)'a\rho - \theta_{G_n}\right) + G^n\right)}-1\right)\gamma\\
&\quad - \frac{\sigma^2}{2\alpha}\left(\frac{2\gamma}{\sigma^2} + \left(\frac{\mu}{\sigma^2} - \frac{\alpha}{\sigma}(\nabla G^n)'a\rho\right)^2 - \theta_{G^n}^2 - 2\theta_{G^n}\right).
\end{split}
\end{equation*}
There are a number of cancellations here.  The remaining terms are
\begin{equation*}
\begin{split}
\frac{\sigma^2}{\alpha}\theta_{G^n} - \frac{\gamma}{\alpha}e^{\frac{\mu}{\sigma^2} - \frac{\alpha}{\sigma}(\nabla G^n)'a\rho + \alpha G^n - \theta_{G_n}}& = \frac{e^{-\theta_{G^n}}\sigma^2}{\alpha}\left(\theta_{G^n}e^{\theta_{G^n}} - \frac{\gamma}{\sigma^2} e^{\frac{\mu}{\sigma^2} - \frac{\alpha}{\sigma}(\nabla G^n)'a\rho + \alpha G^n}\right) = 0,
\end{split}
\end{equation*}
where the last equality follows by the definition of $\theta$.  Therefore, from \eqref{E:ITO_5} the first result follows.  It remains to show \eqref{E:functional_mkt_px_of_risk_n}. To this end, we have
\begin{equation*}
\begin{split}
&\chi_n(\mu-\gamma) -\alpha \chi_n\sigma\rho'\left(\hpi^n\chi_n\sigma \rho + a\nabla G^n\right) -\alpha \sqrt{\chi_n}\sigma\sqrt{1-\chi_n\rho'\rho}\hpi^n\sqrt{\chi_n}\sigma\sqrt{1-\chi_n\rho'\rho}\\
&\qquad\qquad  - \chi_n\gamma \left(e^{\alpha\left(\hpi^n + G^n\right)}-1\right)\\
&\qquad = \chi_n\mu - \alpha\chi_n\sigma(\nabla G^n)'a\rho - \alpha\chi_n\sigma^2\hpi^n - \chi_n \gamma e^{\alpha\left(\hpi^n + G^n\right)}.
\end{split}
\end{equation*}
Note the $\chi_n$ factors out.  The remaining terms are, after plugging in for $\hpi^n$ from \eqref{E:opt_pi_n}
\begin{equation*}
\begin{split}
&\mu - \alpha\sigma(\nabla G^n)'a\rho - \alpha\sigma^2\frac{1}{\alpha}\left(\frac{\mu}{\sigma^2} - \frac{\alpha}{\sigma}(\nabla G^n)'a\rho - \theta_{G_n}\right) - \gamma e^{\alpha\left(\frac{1}{\alpha}\left(\frac{\mu}{\sigma^2} - \frac{\alpha}{\sigma}(\nabla G^n)'a\rho - \theta_{G^n}\right) + G^n\right)}\\
&\qquad = \sigma^2e^{-\theta_{G^n}}\left(\theta_{G^n}e^{\theta_{G^n}} - \frac{\gamma}{\sigma^2}e^{\frac{\mu}{\sigma^2} - \frac{\alpha}{\sigma}(\nabla G^n)'a\rho + \alpha G^n}\right) = 0,
\end{split}
\end{equation*}
where the last equality follows by the definition of $\theta$.  Thus, \eqref{E:functional_mkt_px_of_risk_n} holds.
\end{proof}


\begin{lemma}\label{L:Z_n_to_Z} Let $t\leq T, x\in E$ be fixed. Let $A,B,C$ be $\filt^{W,W^0}$ predictable process satisfying \eqref{E:mkt_px_of_risk} and  set $Z$ via the right hand side of \eqref{E:Z_Q_structure}. Recall \eqref{E:ell_n_def} and create the $\filtg^n$ predictable processes $A^n,B^n,C^n$ on $[t,T\wedge\tau^n]$ via $A^n_u = A_u1_{u\leq \delta^n \wedge T\wedge \tau^{n-1}}$, $B^n_u = B_u 1_{u\leq \delta^n \wedge T\wedge \tau^{n-1}} - \ell^n(X_u)1_{\delta^n\wedge T\wedge \tau^{n-1} < u \leq \delta^n\wedge T\wedge\tau^n}$ and $C^n_u = C_u1_{u\leq \delta^n \wedge T\wedge \tau^{n-1}}$.
Set
\begin{equation*}
Z^n_s =\EN\left(\int_t^\cdot (A^n_u)'dW_u + \int_t^\cdot B^n_u dW^0_u + \int_{t}^\cdot C^n_u dM^n_u\right)_{s\wedge T\wedge \tau^n}.
\end{equation*}
Then for $n$ large enough so that $x\in E_{n-1}$:
\begin{equation}\label{E:Z_n_to_Z_lem}
\delta^n\wedge T\wedge\tau^{n-1} = \delta\wedge T\wedge\tau^{n-1};\qquad Z^n_{\delta^n\wedge T\wedge\tau^{n-1}} = Z_{\delta\wedge T\wedge\tau^{n-1}}.
\end{equation}
\end{lemma}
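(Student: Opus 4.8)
The plan is to reduce everything to the single structural fact that $\chi_n\equiv 1$ on $E_{n-1}$, so that on the stochastic interval up to $\tau^{n-1}$ the localized clock $\int_t^\cdot(\chi_n\gamma)(X_u)\,du$, the compensator of $M^n$, and the drivers/integrands of $Z^n$ are literally identical to their unlocalized counterparts. Throughout fix $n$ with $x\in E_{n-1}$, and set $a_n\dfn\delta^n\wedge T\wedge\tau^{n-1}$.

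\textbf{Step 1 (time identity).} For $t\le s\le\tau^{n-1}$ the path $X$ lies in $\ol{E}_{n-1}$, so $\int_t^s(\chi_n\gamma)(X_u)\,du=\int_t^s\gamma(X_u)\,du$; both are continuous and, since $\gamma>0$, strictly increasing. As $\delta,\delta^n$ are the first times these clocks hit the level $-\log U$, and $\chi_n\le 1$ gives $\delta^n\ge\delta$ in general, one argues by cases: if $\delta\le\tau^{n-1}$ then $\int_t^\delta(\chi_n\gamma)(X_u)\,du=-\log U$ with $\int_t^s(\chi_n\gamma)(X_u)\,du<-\log U$ for $s<\delta$, forcing $\delta^n=\delta$; if $\delta>\tau^{n-1}$ then $\delta^n\ge\delta>\tau^{n-1}$. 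In both cases $\delta\wedge\tau^{n-1}=\delta^n\wedge\tau^{n-1}$, hence $\delta^n\wedge T\wedge\tau^{n-1}=\delta\wedge T\wedge\tau^{n-1}=a_n$, which is the first assertion of \eqref{E:Z_n_to_Z_lem}.

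\textbf{Step 2 (density identity).} It suffices to show the integrands and integrators defining $Z$ and $Z^n$ agree on $[t,a_n]$, since a Dol\'eans--Dade exponential stopped at a stopping time depends only on the stopped semimartingale, and $a_n\wedge T\wedge\tau^n=a_n$ (using $\tau^{n-1}\le\tau^n$) kills the extra stopping in the definition of $Z^n$. For $t\le u\le a_n$: we have $a_n\le\delta\wedge T$, so the factor $1_{u\le T\wedge\delta}$ in \eqref{E:Z_Q_structure} equals $1$; and $A^n_u=A_u$, $B^n_u=B_u$, $C^n_u=C_u$ there, the extra $-\ell_n(X_u)$ term in $B^n$ being supported strictly after $a_n$. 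Finally $M^n\equiv M$ on $[t,a_n]$: their continuous parts $-\int_t^{s\wedge\delta^n}(\chi_n\gamma)(X_u)\,du$ and $-\int_t^{s\wedge\delta}\gamma(X_u)\,du$ agree for $s\le a_n\le\tau^{n-1}$ by Step 1 and $\chi_n\equiv1$ on $E_{n-1}$, while $H^n=H$ on $[t,a_n]$ because, again by Step 1, the only possible jump of either before $a_n$ occurs at the common time $a_n=\delta^n=\delta$. Hence the driving martingales of $Z^n$ and $Z$ coincide on $[t,a_n]$ with matching integrands, so $Z^n_{a_n}=\EN(\cdots)_{a_n}=Z_{a_n}$, i.e. $Z^n_{\delta^n\wedge T\wedge\tau^{n-1}}=Z_{\delta\wedge T\wedge\tau^{n-1}}$.

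The only delicate point — really just bookkeeping rather than a genuine obstacle — is the jump term in the case $a_n=\delta^n=\delta\le T\wedge\tau^{n-1}$: one must check that $\int_t^\cdot C^n_u\,dM^n_u$ and $\int_t^\cdot 1_{u\le T\wedge\delta}C_u\,dM_u$ have the same jump at $a_n$. This holds because $C^n_{a_n}=C_{a_n}1_{a_n\le\delta^n\wedge T\wedge\tau^{n-1}}=C_{a_n}=1_{a_n\le T\wedge\delta}\,C_{a_n}$ and $\Delta H^n_{a_n}=\Delta H_{a_n}=1$, so both jumps equal $C_{a_n}$. With this, \eqref{E:Z_n_to_Z_lem} follows.
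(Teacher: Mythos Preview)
Your proof is correct and follows essentially the same route as the paper's: both arguments hinge on $\chi_n\equiv 1$ on $E_{n-1}$ to identify the localized and unlocalized default clocks up to $\tau^{n-1}$, then match the stochastic exponentials. The only cosmetic difference is that the paper computes the Dol\'eans exponential of the jump integral explicitly in each case, whereas you argue more directly that $M^n\equiv M$ on $[t,a_n]$ and invoke the general fact that $\EN(\cdot)$ depends only on the stopped driver; both are equally valid.
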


\begin{proof}[Proof of Lemma \ref{L:Z_n_to_Z}]

We prove the first equality in \eqref{E:Z_n_to_Z_lem}.  Indeed, if $\delta^n\leq T\wedge\tau^{n-1}$ then
\begin{equation*}
\int_0^{\delta}\gamma(X_u)du = -\log(U) = \int_0^{\delta^n}\chi_n(X_u)\gamma(X_u)du = \int_0^{\delta^n} \gamma(X_u)du,
\end{equation*}
since $\chi_n(x) = 1$ on $E_{n-1}$. This shows that $\delta^n\leq \delta$, but we already know from Lemma \ref{L:delta_n_delta} that $\delta^n\geq \delta$ so in fact they are equal. If $\delta^n > T\wedge\tau^{n-1}$ and $\delta> T\wedge\tau^{n-1}$ then $\delta^n\wedge T\wedge\tau^{n-1} = T\wedge \tau^{n-1} = \delta\wedge T\wedge \tau^{n-1}$.  Lastly, if $\delta^n > T\wedge\tau^{n-1}$ and $\delta \leq  T\wedge\tau^{n-1}$ then
\begin{equation*}
-\log(U) = \int_0^\delta \gamma(X_u)du = \int_0^{\delta}\chi_n(X_u)\gamma(X_u)du,
\end{equation*}
which shows that $\delta^n\leq \delta \leq T\wedge\tau^{n-1}$, a contradiction.  This yields the first result. Next, since $A^n=A$, $B^n=B$ on $[t,\delta^n\wedge T\wedge\tau^{n-1}] = [t,\delta\wedge T\wedge\tau^{n-1}]$ we immediately see that
\begin{equation*}
\EN\left(\int_t^\cdot (A^n_u)'dW_u + \int_t^{\cdot}B^n_udW_u\right)_{\delta^n\wedge T\wedge\tau^{n-1}} = \EN\left(\int_t^\cdot A_u'dW_u + \int_t^\cdot B_udW^0_u\right)_{\delta\wedge T\wedge\tau^{n-1}}.
\end{equation*}
It remains to consider the stochastic exponential for the jump processes.  Here
\begin{equation*}
\EN\left(\int_{t+}^\cdot C^n_u dM^n_u\right)_{\delta^n\wedge T\wedge \tau^{n-1}} = e^{-\int_t^{\delta^n\wedge T\wedge \tau^{n-1}}\chi_n(X_u)\gamma(X_u)du}\left(1_{\delta^n > \delta^n\wedge T\wedge \tau^{n-1}} + (1+C^n_{\delta^n})1_{\delta^n\leq \delta^n\wedge T\wedge\tau^{n-1}}\right).
\end{equation*}
Now, the above argument showed that $\delta^n\leq T\wedge\tau^{n-1}$ implies $\delta = \delta^n\leq T\wedge\tau^{n-1}$ and it is not possible for $\delta^n > T\wedge\tau^{n-1}$ and $\delta\leq T\wedge\tau^{n-1}$ so that in fact $\delta^n>T\wedge\tau^{n-1}$ implies $\delta>T\wedge\tau^{n-1}$. Thus, if $\delta^n \leq T\wedge\tau^{n-1}$ then
\begin{equation*}
\begin{split}
\EN\left(\int_{t+}^\cdot C^n_u dM^n_u\right)_{\delta^n\wedge T\wedge \tau^{n-1}} & = e^{-\int_t^{\delta^n\wedge T\wedge\tau^{n-1}}\chi_n(X_u)\gamma(X_u)du}\left(1+C^n_{\delta^n\wedge T\wedge\tau^{n-1}}\right);\\
&=e^{-\int_t^{\delta\wedge T\wedge\tau^{n-1}}\gamma(X_u)du}\left(1+C_{\delta\wedge T\wedge\tau^{n-1}}\right);\\
&= \EN\left(\int_{t+}^\cdot C_u dM_u\right)_{\delta\wedge T\wedge \tau^{n-1}}.
\end{split}
\end{equation*}
Similarly, if $\delta^n > T\wedge\tau^{n-1}$ then
\begin{equation*}
\begin{split}
\EN\left(\int_{t+}^\cdot C^n_u dM^n_u\right)_{\delta^n\wedge T\wedge \tau^{n-1}} & = e^{-\int_t^{\delta^n\wedge T\wedge\tau^{n-1}}\chi_n(X_u)\gamma(X_u)du};\\
&=e^{-\int_t^{\delta\wedge T\wedge\tau^{n-1}}\gamma(X_u)du}\left(1_{\delta > \delta\wedge T\wedge\tau^{n-1}}\right);\\
&= \EN\left(\int_{t+}^\cdot C_u dM_u\right)_{\delta\wedge T\wedge \tau^{n-1}}.
\end{split}
\end{equation*}
Therefore, \eqref{E:Z_n_to_Z_lem} holds finishing the proof.

\end{proof}

\bibliographystyle{siam}
\bibliography{master}

\end{document}